\definecolor{color3}{RGB}{140,29,49}
\definecolor{color2}{RGB}{31,107,76}
\definecolor{color1}{RGB}{29,66,91}
\crefname{lemma}{Lemma}{Lemmas}
\crefname{proposition}{Proposition}{Propositions}
\crefname{definition}{Definition}{Definitions}
\crefname{theorem}{Theorem}{Theorems}
\crefname{conjecture}{Conjecture}{Conjectures}
\crefname{corollary}{Corollary}{Corollaries}
\crefname{example}{Example}{Examples}
\crefname{section}{Section}{Sections}
\crefname{appendix}{Appendix}{Appendices}
\crefname{figure}{Fig.}{Figs.}
\crefname{equation}{eq.}{eqs.}
\crefname{table}{Table}{Tables}
\crefname{item}{Property}{Properties}
\crefname{remark}{Remark}{Remarks}
\crefname{problem}{}{}
\newtheorem{theorem}{Theorem}
\newtheorem{definition}[theorem]{Definition}
\newtheorem{corollary}[theorem]{Corollary}
\newtheorem{lemma}[theorem]{Lemma}
\newcommand{\probleminput}[1]{\gdef\@probleminput{#1}}
\newcommand{\problemquestion}[1]{\gdef\@problemquestion{#1}}
\newcommand{\problempromise}[1]{\gdef\@problempromise{#1}}
  \par\addvspace{.5\baselineskip}
\newcommand{\cmark}{\text{\ding{51}}}%
\newcommand{\xmark}{\text{\ding{55}}}%
\def\R{{\mathds{R}}}
\def\N{{\mathds{N}}}
\def\C{\mathds{C}}
\def\1{{\mathds{1}}}
\newcommand{\mysymbol}[1]{{\mbox{\raisebox{-0.3em}{\epsfysize=1.2em\epsfbox{#1}}}}}
\newcommand{\leftend}{\mysymbol{leftend.eps}}
\newcommand{\rightend}{\mysymbol{rightend.eps}}
\newcommand{\class}[1]{\textup{#1}\xspace}
\newcommand{\QMA}{\class{QMA}}
\newcommand{\coQMA}{\class{co-QMA}}
\newcommand{\PSPACE}{\class{PSPACE}}
\newcommand{\BQP}{\class{BQP}}
\newcommand{\QCMA}{\class{QCMA}}
\newcommand\QCMAEXP{\class{QCMA\textsubscript{EXP}}}
\newcommand\PreciseQMA{\class{PreciseQMA}}
\newcommand{\EXP}{\class{EXP}}
\newcommand{\NEXP}{\class{NEXP}}
\newcommand\sparallel{\ensuremath{\parallel}}
\newcommand\PP{\class{P}}
\newcommand\NP{\class{NP}}
\newcommand\PQMA{\PP\unskip\textsuperscript{\QMA}\xspace}
\newcommand\PQMApar{\PP\unskip\textsuperscript{\sparallel\QMA}\xspace}
\newcommand\PQMAexp{\PP\unskip\textsuperscript{\QMAEXP}\xspace}
\newcommand\PQMAlog{\PP\unskip\textsuperscript{\QMA\unskip[\ensuremath{\log}]}\xspace}
\newcommand\QMAEXP{\QMA\unskip\textsubscript{\EXP}\xspace}
\newcommand{\EXPpar}{\EXP\unskip\textsuperscript{\sparallel\QMA}\xspace}
\newcommand{\DQ}{\class{D}\textsuperscript{\sparallel\class{Q}}}
\newcommand\StoqMA{\class{StoqMA}}
\newcommand{\GSCON}{\class{GSCON}}
\newcommand{\TIGSCON}{\class{TI-\GSCON}}
\newcommand{\set}[1]{{\left\{#1\right\}}}    
\newcommand{\spa}[1]{\mathcal{#1}}
\newcommand{\sA}{\spa{A}}
\newcommand{\sB}{\spa{B}}
\newcommand{\hout}{H_{\textup{out}}}
\newcommand{\trace}{\tr}
\newcommand{\lowstate}{\ket{\psilow}}
\newcommand{\lowstatebra}{\bra{\psilow}}
\newcommand{\base}{(\C^2)}
\newcommand{\aaa}{\eta_1}
\newcommand{\bbb}{\eta_2}
\newcommand{\ccc}{\eta_3}
\newcommand{\ddd}{\eta_4}
\newcommand{\abs}[1]{\left\lvert #1\right\rvert}
\newcommand{\norm}[1]{\left\lVert #1 \right\rVert}
\newcommand{\tr}{\operatorname{Tr}}
\newcommand{\setft}[1]{\mathrm{#1}}
\newcommand{\lin}[1]{\setft{L}\left(#1\right)}
\newcommand{\unitary}[1]{\setft{U}\left(#1\right)}
\newcommand{\herm}[1]{\setft{Herm}\left(#1\right)}
\newcommand{\pos}[1]{\setft{Pos}\left(#1\right)}
\newcommand{\ayes}{A_{\textup{yes}}} 
\newcommand{\ano}{A_{\textup{no}}} 
\newcommand{\ainv}{A_{\textup{inv}}} 
\newcommand{\Null}[1]{\operatorname{Null}\left(#1\right)}
\newcommand{\Span}{\mathrm{Span}}
\newcommand{\enorm}[1]{\norm{#1}_{\mathrm{2}}}      
\newcommand{\trnorm}[1]{\norm{#1}_{\mathrm {tr}}}  
\newcommand{\snorm}[1]{\norm{#1}_{\mathrm {\infty}}}    
\newcommand{\p}{b}
\newtheorem{observation}{Observation}[section]
\newenvironment{mylist}[1]{\begin{list}{}{
	\setlength{\leftmargin}{#1}
	\setlength{\rightmargin}{0mm}
	\setlength{\labelsep}{2mm}
	\setlength{\labelwidth}{8mm}
	\setlength{\itemsep}{0mm}}}
	{\end{list}}
\newsavebox\TBox
\DeclareMathOperator{\poly}{poly}
\DeclareMathOperator{\Tr}{Tr}
\DeclareMathOperator{\BigO}{O}
\newcommand\field\mathds
\newcommand\YES{{\normalfont{\textsc{Yes}}}\xspace}
\newcommand\NO{{\normalfont{\textsc{No}}}\xspace}
\newcommand\Hprop{H_\mathrm{prop}}
\DeclareMathOperator{\enc}{enc}
\newcommand{\Pin}{P_{\textup{in}}}
\newcommand{\nats}{{\mathbb N}}
\newcommand{\reals}{{\mathbb R}}
\newcommand{\qout}{q_{\textup{out}}}
\newcommand{\qflag}{q_{\textup{flag}}}
\newcommand{\ang}{\sqrt{3}/(2m)}
\newcommand{\angf}{\frac{\sqrt{3}}{2m}}
\newcommand{\angft}{\frac{\sqrt{3}}{m}}
\newcommand{\angftt}{\frac{3}{2m^2}}
\newcommand{\HW}{\textup{HW}}
\newcommand{\pyw}{p_{y,w}}
\newcommand{\qvc}{QVClass}
\newcommand{\spre}{S_{\textup{pre}}}
\newcommand{\spost}{S_{\textup{post}}}
\newcommand{\f}{f} 
\newcommand{\g}{g} 
\newcommand{\psiin}{\psi_{\textup{in}}}
\newcommand{\psiout}{\psi_{\textup{out}}}
\newcommand{\phiin}{\phi_{\textup{in}}}
\newcommand{\wm}{w_{1\cdots m}}
\newcommand{\prs}{\textstyle\Pr^*}
\newcommand{\sno}{S_{\textup{no}}}
\newcommand{\syes}{S_{\textup{yes}}}
\newcommand{\sinv}{S_{\textup{inv}}}
\newcommand{\pgood}{p_{\textup{good}}}
\newcommand{\hin}{H_{\textup{in}}}
\newcommand{\hprop}{H_{\textup{prop}}}
\newcommand{\hstab}{H_{\textup{stab}}}
\newcommand{\polylog}{\textup{polylog}}
\newcommand{\PNPpar}{\PP^{\parallel\NP}}
\newcommand{\PNPlog}{\PP^{\NP[\log]}}
\newcommand\Hw{H_\mathrm{w}}
\newcommand{\psilow}{\psi_{\textup{low}}}
\newtheorem*{rep@theorem}{\rep@title}
\newcommand{\newreptheorem}[2]{%
	\newenvironment{rep#1}[1]{%
		\def\rep@title{#2 \ref{##1}}%
		\begin{rep@theorem}}%
		{\end{rep@theorem}}}
\DeclareMathOperator{\lmin}{\lambda_\mathrm{min}}
\def\R{{\mathds{R}}}
\def\N{{\mathds{N}}}
\def\C{\mathds{C}}
\def\1{{\mathds{1}}}
\begin{document}
	\title{The Complexity of Translationally Invariant Problems beyond Ground State Energies}
	\author[1]{James D. Watson}
	\author[2]{Johannes Bausch}
    \author[3]{Sevag Gharibian}
	\affil[1]{Department of Computer Science, University College London, UK}
	\affil[2]{CQIF, DAMTP, University of Cambridge, UK}
    \affil[3]{Department of Computer Science, Paderborn University, Germany}
    \date{December 2020}
	\maketitle

\begin{abstract}
  It is known that three fundamental questions regarding local Hamiltonians---approximating the ground state energy (the Local Hamiltonian problem), simulating local measurements on the ground space (APX-SIM), and deciding if the low energy space has an energy barrier (\GSCON)---are \QMA-hard, $\PQMAlog$-hard and \QCMA-hard, respectively, meaning they are likely intractable even on a quantum computer.
  Yet while hardness for the Local Hamiltonian problem is known to hold even for translationally-invariant systems, it is not yet known whether APX-SIM and \GSCON remain hard in such ``simple'' systems. In this work, we show that the translationally invariant versions of both APX-SIM and \GSCON remain intractable, namely are $\PP^{\QMAEXP}$- and $\QCMAEXP$-complete, respectively.

  Each of these results is attained by giving a respective generic ``lifting theorem'' for producing hardness results. For APX-SIM, for example, we give a framework for ``lifting'' any abstract local circuit-to-Hamiltonian mapping $H$ (satisfying mild assumptions) to hardness of APX-SIM on the family of Hamiltonians produced by $H$, while preserving the structural and geometric properties of $H$ (e.g.\ translation invariance, geometry, locality, etc). Each result also leverages counterintuitive properties of our constructions: for APX-SIM, we ``compress'' the answers to polynomially many parallel queries to a QMA oracle into a single qubit. For GSCON, we give a hardness construction robust against highly non-local unitaries, i.e.\ even if the adversary acts on all but one qudit in the system in each step.
\end{abstract}
\thispagestyle{empty}

\clearpage

\tableofcontents

\section{Introduction}
The Hamiltonian operator of a quantum system describes its static and dynamic properties, such as its ground state energy, spectral gap, and time-evolution. Yet, Hamiltonians for many systems of interest are often too complicated to be solvable analytically; obtaining all of their energy levels, for instance, is simply intractable.

The study of ``how complex'' a Hamiltonian is---i.e.\ how easy it is to extract information about the system it describes---has led to the field of Hamiltonian complexity theory (see, e.g.,~\cite{O11,Boo14,GHLS14}). One of the central questions one can ask about a many-body system is that of approximating its ground state energy, i.e.\ the energy level the system settles into when cooled to low temperature. For local Hamiltonians, this is dubbed the \emph{Local Hamiltonian Problem (LH)}. In 2002, Kitaev proved LH is \QMA-hard \cite{Kitaev2002}, meaning---under widely-accepted complexity theoretic assumptions---that LH remains difficult, even with a universal and error-corrected quantum computer in hand. Through a series of subsequent results it was shown that LH remains hard on ever simpler systems, such as if one restricts to qubits on a lattice with 2-local interactions \cite{Oliveira2008}, Hamiltonians on a 1D spin chain with local dimension 8 \cite{Aharonov2009,N08,HNN13}, and even for 1D translationally invariant, nearest neighbour Hamiltonians (albeit with a large local Hilbert space dimension) \cite{Gottesman2009}. A complexity classification for all 2-local qubit Hamiltonians was given by \cite{Cubitt_Montanaro_2013}.

\paragraph{Beyond ground state energies.} Since Kitaev's seminal work on the ground state energy problem, further studies have explored tractability of other important Hamiltonian properties. These include estimating excitation energies of local Hamiltonians \cite{Jordan_Gosset_Love_2010}; determining a system's density of states \cite{Brown_Flammia_Schuch_2011,SZ}; minimizing interaction terms yielding frustrated ground spaces~\cite{GK12}; deciding if a ground space has an energy barrier~\cite{GS14,GMV17}; simulating local measurements on ground spaces~\cite{Ambainis2013,GY16_2,GPY20}; Hamiltonian sparsification~\cite{AZ18}; estimating spectral gaps of local Hamiltonians (in both finite-size and infinite-size settings)~\cite{Ambainis2013,Cubitt_Perez-Garcia_Wolf2015,GY16_2}; estimating the partition function or free energy~\cite{K17,HMS20,KKB20}; and ``universal'' Hamiltonian models simulating the physics of all other quantum many-body systems~\cite{BH17,CMP,Piddock2020,Kohler2020}.

Of these, we focus on two problems here: \emph{Simulating local measurements on ground spaces}~\cite{Ambainis2013} and \emph{deciding if a ground space has an energy barrier}~\cite{GS14}. The first problem, denoted Approximate Simulation (APX-SIM), asks how difficult it is to estimate the expectation of a local measurement against the ground state of a local Hamiltonian. Given that much of condensed matter physics is devoted to determining the low-energy properties of materials, and that local measurements are the only tools available to experimentalists to examine these systems, this is an extremely important problem. As discussed shortly, APX-SIM turns out to be even harder than QMA, being $\PQMA$-complete.

The second problem is the Ground State Connectivity problem (GSCON)~\cite{GS14}, which roughly asks: given two ground states $\ket{\psi}$ and $\ket{\phi}$ of a local Hamiltonian, is there a low energy path connecting $\ket{\psi}$ to $\ket{\phi}$? Informally, this captures the question of determining if a ground space has an energy barrier, and has applications to quantum stabilizer codes and quantum memories. Similar to APX-SIM, GSCON is intractable, being \QCMA-complete (QCMA is QMA but with a classical proof).\\

\vspace{-2mm}
\noindent \emph{When are these questions relevant to, say, material sciences?} The more realistic a system the Hamiltonian describes while retaining hard-to-compute properties, the more insight about the intrinsic complexity of the system one gains. For instance, a local spin dimension of $2$ is often seen in nature (e.g.\ an electron spin up/down); a spin dimension of $2^{12}$ less so.
Furthermore, condensed matter systems in real life often feature symmetries, such as a regular lattice structure with nearest-neighbour couplings that are both isotropic and translationally invariant. A goal of Hamiltonian complexity theory is thus to find increasingly ``simple'' systems that retain hard-to-compute properties.
This renders claims more generic, and the resulting implications stronger.

Beyond their relevance for real-world systems, translationally-invariant Hamiltonians in particular are widely believed to be simpler than general Hamiltonians. Intuitively, due to the spatial invariance of the system, the degrees of freedom available to encode complex behaviour are limited; and less information can be encoded into the couplings throughout the system (assuming they are specified to the same precision).

As for spatial structure, the most basic lattice model is the one-dimensional spin chain, for which any hardness results often immediately imply respective hardness results for two- or higher-dimensional systems (by simply repeating the system along the extra dimensions).
And just like translational symmetry, having only one dimension often renders systems more tractable, which is supported by empirical and theoretical evidence \cite{lieb1961a,affleck1987,franchini2017}.
For instance, algorithms like DMRG \cite{white1992} to approximate ground state energies have long been known to work well in practice in 1D. Indeed, this eventually led to rigorous polynomial-time algorithms to solve ground state energy problems for gapped one-dimensional spin chains \cite{Landau2013}. Beyond that, there often exist closed form solutions in 1D, such as for fermionic 1D systems described by the Fermi-Hubbard model or 1D Heisenberg model~\cite{B31}; similar systems are notoriously difficult to simulate in higher dimensions.

So are one-dimensional, nearest neighbour, translationally-invariant (TI) systems tractable, or at least ``more tractable'' than their higher-dimensional counterparts?
Alas, results such as \cite{Gottesman2009} show that finding the ground state of a 1D spin chain is \QMAEXP-complete ($\QMAEXP$ is a quantum analogue of NEXP), and the question of existence of a spectral gap in one spatial dimension---even for couplings with translational invariance---remains undecidable \cite{Bausch2018b}.
Yet for other natural questions, such as APX-SIM or \GSCON, the verdict is still open.

\paragraph{Our results.}  \emph{In this work, we show that both APX-SIM and \GSCON remain hard even for one dimensional, nearest neighbour, translationally invariant systems.} The exact notion of ``hard'' depends on the specific problem and context considered. Specifically, recall that the usual Local Hamiltonian problem (LH), which is QMA-complete with an inverse polynomial promise gap~\cite{KSV02}, becomes \QMAEXP-complete in the 1D translationally invariant (1D TI) setting~\cite{Gottesman2009}. Intuitively, this is because a full 1D TI system on $N$ qubits can typically be specified using just $\polylog(N)$ bits---the description of the single $2$-local interaction term $H_{i,i+1}$ to be repeated along the chain, and the length $N$ of the chain (in binary).
With exponential precision, the LH problem for 1D TI problem becomes \PSPACE-complete \cite{Kohler2020}.

Here we see a similar progression: While APX-SIM is known to be \PQMAlog-complete in the general case and for inverse polynomial measurement precision (in the length of the chain)~\cite{Ambainis2013,GY16_2,GPY20}, we show it becomes \PQMAexp-complete for the same precision (\cref{th-intro:apx-sim-poly}), and \PSPACE-complete for exponential precision (\cref{th-intro:apx-sim-precise}) in the 1D translationally-invariant setting. Likewise, while \GSCON is \QCMA-complete in the general case \cite{GS14} (even for commuting Hamiltonians~\cite{GMV17}), we show it becomes \QCMAEXP-complete in the 1D, nearest neighbour, translationally-invariant case.

As discussed below, a theme of this paper is to give generic ``lifting theorems'', which take broad classes of circuit-to-Hamiltonian mappings and ``lift them'' in a black-box fashion to obtain hardness results for APX-SIM or GSCON. For APX-SIM, in particular, the framework we give works for almost arbitrary local circuit-to-Hamiltonian construction $H$ (formally defined via \cref{def:local-mapping}, which has minimal assumptions), so that the hardness result we obtain via ``lifting'' for APX-SIM automatically satisfies the structural and geometric properties of $H$. We now state our results somewhat more formally, followed by a discussion of techniques.

\subsection{Hardness of Measuring Local Observables}
The problem of estimating expectation values of local observables in the low-energy subspace of a Hamiltonian is formalized via the Approximate Simulation (APX-SIM) problem of Ambainis~\cite{Ambainis2013}.
\begin{definition}[APX-SIM$(H,A,k,l,a,b,\delta)$~\cite{Ambainis2013}]\label{def:apx}
		Given a $k$-local Hamiltonian $H=\sum_i H_i $ acting on $N$ qubits, an $l$-local observable $A$, and real numbers $a$, $b$, and $\delta$ such that $b-a\geq N^{-c}$ and $\delta\geq N^{-c'}$, for $c,c'>0$ constant, decide:
	\begin{itemize}
		\item[YES.] If $H$ has a ground state $\ket{\psi}$ satisfying $\bra{\psi}A\ket{\psi}\leq a$.
		\item[NO.] If for all $\ket{\psi}$ satisfying $\bra{\psi}H\ket{\psi}\leq \lmin(H)+\delta$, it holds that $\bra{\psi}A\ket{\psi}\geq b$.
	\end{itemize}
\end{definition}
\noindent Introduced in \cite{Ambainis2013}, APX-SIM was shown to be $\PQMAlog$-complete for $\BigO(1)$-local Hamiltonians with $1$-local measurements (with inverse polynomial precision) in~\cite{Ambainis2013,GY16_2}. Here, $\PQMAlog$ is the class of decision problems decidable by a polynomial time (deterministic) Turing machine with access to logarithmically many adaptive queries to a QMA oracle. Note that by definition, $\QMA\subseteq \PQMAlog$, and this containment is believed to be strict, since $\PQMAlog$ also contains\footnote{To decide a co-QMA instance, the $\PQMAlog$ machine queries its QMA oracle once, and then negates the oracle's answer. It is believed that $\QMA\neq \textup{co-QMA}$.} co-QMA. Thus, $\PQMAlog$ is believed \emph{harder} than QMA. It is known that $\PQMAlog\subseteq\class{PP}$~\cite{GY16_2}, and that $\PQMAlog=\PQMApar$~\cite{GPY20}, where $\PQMApar$ is defined as $\PQMAlog$ except where the P machine makes \emph{polynomially many} queries \emph{in parallel} (i.e.\ the queries must be non-adaptive; this setup will be used in this work as well). Finally, \cite{GPY20} showed $\PQMAlog$-completeness for APX-SIM on physically motivated 2D models, such as for the Heisenberg interaction, and on (non-translationally invariant) 1D chains.

In this work, motivated by the goal of showing hardness of APX-SIM in the even simpler setting of 1D TI systems, we give a much more general framework for ``lifting'' hardness results about ground state energies (i.e.\ for LH) to hardness results for APX-SIM. Formally, these are given via the Lifting Lemma (\cref{lem:lift}) and applications in \cref{sscn:other}; here, we informally state the general premise as follows.
\begin{theorem}[LH to APX-SIM (informal)]
	If the family of Hamiltonians $\mathcal{F}$ admits a circuit-to-Hamiltonian mapping such that approximating the ground state energy is $\textup{C}$-hard, then the APX-SIM problem for $\mathcal{F}$ is either $\PP^{\textup{C[log]}}$ or $\PP^{\textup{C}}$-complete, depending on how the input is encoded.
\end{theorem}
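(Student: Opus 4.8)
The goal is to show that if a family of Hamiltonians $\mathcal{F}$ arises from a circuit-to-Hamiltonian mapping for which $\textup{LH}$ is $\textup{C}$-hard, then $\textup{APX-SIM}$ on $\mathcal{F}$ is $\PP^{\textup{C}[\log]}$- (or $\PP^{\textup{C}}$-) complete. Containment in the target class is the routine direction: given $\textup{APX-SIM}(H,A,k,l,a,b,\delta)$, a $\PP$ (or $\PP^{\textup{C}[\log]}$) machine can perform a binary search over the ground energy using $\BigO(\log N)$ queries to the $\textup{C}$-oracle (``is $\lmin(H) \leq t$?''), pin down $\lmin(H)$ to precision $\ll \delta$, then make one final query to a suitable $\textup{C}$ verifier that, given a witness state $\ket{\psi}$, checks both $\bra{\psi}H\ket{\psi} \leq \lmin(H)+\delta$ and $\bra{\psi}A\ket{\psi} \leq (a+b)/2$; acceptance probability bounds then separate the YES and NO cases. (Depending on whether the oracle bits are fed back adaptively or the input encodes exponentially many of them, one lands in $\PP^{\textup{C}[\log]}$ versus $\PP^{\textup{C}}$ — this is the bookkeeping the ``depending on how the input is encoded'' clause refers to.)

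\medskip
\noindent\textbf{The core construction (hardness).} For hardness I would reduce from the canonical $\PP^{\textup{C}[\log]}$-complete problem, which — following the APX-SIM literature (Ambainis, Gharibian--Yirka) — can be taken to be: given descriptions of parallel $\textup{C}$-instances $X_1,\dots,X_m$, decide whether the number of YES-instances is at least a threshold. The plan is to ``lift'' this using the assumed mapping: encode each instance $X_i$ as a verification circuit, apply the black-box circuit-to-Hamiltonian construction to get $H_i$ with the standard soundness/completeness energy separation, and then assemble a single Hamiltonian $H = \sum_i (\text{shifted } H_i) + H_{\text{couple}}$ on a register of size $N = \poly$, together with a $1$-local observable $A$ on a designated ``output'' qubit. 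The key quantitative idea — the ``compression'' alluded to in the introduction — is to arrange the energy penalties and couplings so that in any low-energy state the output qubit's expectation value equals (up to the allowed precision $b-a$) an affine function of the count of satisfied instances; a natural way to do this is to give instance $i$ an amplitude-$\sim 1/m$ slot and have the output qubit register a weighted sum of per-instance accept flags, so that crossing the threshold shifts $\bra{\psi}A\ket{\psi}$ across the $(a,b)$ window. One must verify: (i) completeness — a genuine history/witness state achieves low energy and the correct output value; (ii) soundness — any state below $\lmin(H)+\delta$ must be close to such a structured state on every instance simultaneously, forcing the output value into the NO range; this is where one invokes the geometric-lemma / perturbation-theory machinery relating low energy to proximity of the ground space, applied block-by-block, being careful that the $1/m$-weighting does not shrink the effective promise gap below $N^{-c}$.

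\medskip
\noindent\textbf{Preserving structure, and the main obstacle.} The whole point of phrasing this as a lifting lemma is that the output Hamiltonian should inherit the geometry of the input mapping: if the mapping is $1$D, nearest-neighbour, translationally invariant, so should $H$ be. This forces the ``assembly'' step ($H_{\text{couple}}$, the shifts, the output-qubit wiring) to itself be expressible within the same model — e.g.\ laying the $m$ instances out along the chain in a translationally invariant way and using a clock/counter subsystem (as in Gottesman--Irani-style constructions) to route accept flags to a single output qudit. I expect this to be the genuinely hard part: in a strictly TI, nearest-neighbour setting one cannot freely ``address'' a designated output qubit or sum flags globally, so the construction must smuggle in the threshold-counting through the same tiling/clock mechanism that the underlying LH-hardness proof already uses, without breaking its soundness analysis. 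The secondary technical hazard is precision control: the ``compression into a single qubit'' means the signal distinguishing YES from NO scales like $1/\poly(m)$, and one must check this stays above the $N^{-c}$, $N^{-c'}$ thresholds in Definition~\ref{def:apx} — i.e.\ that the lifting is gap-preserving. Once those two points are settled, plugging in the Gottesman--Irani $1$D TI mapping yields $\PP^{\QMAEXP}$-completeness, and plugging in other known mappings recovers (and slightly generalizes) the earlier $\PQMAlog$-completeness results as special cases.
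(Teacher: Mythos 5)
Your containment direction (binary search on $\lmin(H)$ via the oracle, then one final query) matches the paper's \cref{Lemma:Containment_Proof}. The hardness direction, however, diverges in a way that leaves two genuine gaps. First, your architecture---apply the mapping to each query instance separately to obtain $H_1,\dots,H_m$, then assemble $H=\sum_i(\text{shifted }H_i)+H_{\text{couple}}$---is precisely what the paper is constructed to avoid, and you correctly flag the resulting ``assembly'' problem (wiring $m$ blocks and an output qudit together inside a 1D, nearest-neighbour, translationally invariant model) as the hard part without resolving it; in a black-box setting it is not clear this can be done at all. The paper's resolution is to do the entire composition in circuit-world: \cref{lem:augmented-circuit} builds a \emph{single} circuit $V$ containing all $m$ verifiers $V_i$, the deterministic post-processing circuit $U'$, and controlled rotations of one designated flag qubit by angle $\sqrt{3}/(2m)$ per accepting verifier. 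The circuit-to-Hamiltonian mapping is then applied once, to $V$, so structure preservation is automatic and no coupling Hamiltonian or geometric layout of instances is ever needed.

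Second, and more seriously, your soundness sketch omits the central difficulty. The problem is not that low-energy states might fail to be structured history states (projection-lemma machinery handles that); it is that a perfectly structured history state may encode a dishonest prover who deliberately feeds a \emph{rejecting} witness to a YES-instance verifier at zero energy cost, flipping that query answer and hence possibly the final output. Having the observable read an affine function of the accept count does not fix this: the ground space of the unpenalized construction is degenerate over all witness choices, including adversarial ones, and the true ground energy---which would encode the YES-count---is unknown to the reduction. The paper's fix is to put the counting into the \emph{Hamiltonian} rather than the observable: the flag qubit is penalized by a term $M_2$ added to $H$, and the exchange argument of \cref{lem:lowhelp} (which must handle entangled joint proofs $\ket{\wm}$) shows that answering any single query $\epsilon$-suboptimally raises the energy by $\Omega(\epsilon/m^2)$, forcing all valid queries to be answered correctly in any low-energy state; the observable $M_1$ then simply reads the output bit of $U'$. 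Relatedly, reducing from a threshold-counting complete problem rather than from an arbitrary $\DQ$ machine is delicate when the oracle class is a promise class, since invalid queries may be answered arbitrarily; the paper works from the general machine and handles invalid queries from first principles.
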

\noindent In contrast to previous approaches for showing hardness for APX-SIM, which were custom-designed based on the circuit-to-Hamiltonian constructions in mind, here we obtain a black-box mapping (\cref{lem:lift}) which requires minimal assumptions, and which automatically \emph{preserves structural properties of $\mathcal{F}$}, such as locality, geometry, translational invariance, etc. This demonstrates that the Local Hamiltonian problem fundamentally characterises the complexity of computing properties (such as simulating measurements) of the low energy states of a family of Hamiltonians.

Importantly, this gives the following set of results, among others.
\begin{theorem}\label{th-intro:apx-sim-poly}
    Let $n$ denote input size, and set $N\in O(\exp (n))$. Then, APX-SIM is \PQMAexp-complete for 1D, nearest neighbour, translationally invariant Hamiltonians on $N$ qudits of local dimension $44$, for $\delta=\Omega(1/\poly(N))$, $b-a =\Omega(1/\poly(N))$, or 3D, 4-local, translationally invariant Hamiltonians on an fcc lattice of local dimension 4, for the same parameters.
\end{theorem}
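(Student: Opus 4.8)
The plan is to establish \PQMAexp-completeness in the usual two parts: an easy containment, and a hardness reduction that runs entirely through the general Lifting Lemma (\cref{lem:lift}). For \emph{containment} in \PQMAexp, I would adapt the argument of \cite{GY16_2} placing APX-SIM in \PQMAlog. Given an instance $(H,A,k,l,a,b,\delta)$, a polynomial-time machine with a \QMAEXP oracle first pins down $\lmin(H)$ to sufficient (inverse-polynomially small in $N$) additive precision by binary search, each comparison ``is $\lmin(H)\le t$?'' being a succinctly describable, exponential-size Local Hamiltonian instance and hence a single \QMAEXP query; it then binary-searches $\lmin\!\bigl(c\,(H-\hat\lambda\1)+A\bigr)$, where $\hat\lambda$ is its estimate of $\lmin(H)$ and $c$ is a sufficiently large polynomial in $N$ --- a quantity which then lies at most $\approx a$ in the YES case and at least $\approx b$ in the NO case --- and compares it to $(a+b)/2$. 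The feature specific to the 1D TI setting is that here $\delta$ and $b-a$ are inverse-polynomial in $N=\exp(n)$, hence inverse-\emph{exponential} in the input length $n$, so each binary search uses $\Theta(\log N)=\poly(n)$ --- rather than $O(\log n)$ --- queries; this is precisely why the containing class is \PQMAexp rather than its logarithmic-query restriction. Every Hamiltonian queried remains specifiable by $\poly(n)$ bits given the translationally invariant description of $H$, the current binary threshold, and the $O(1)$-site support of $A$, so all queries are legitimate \QMAEXP instances.

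For \emph{hardness}, I would instantiate \cref{lem:lift} with a circuit-to-Hamiltonian mapping --- in the sense of \cref{def:local-mapping} --- that is local, nearest-neighbour, 1D, and translationally invariant, and for which the Local Hamiltonian problem with inverse-$\poly(N)$ promise gap is \QMAEXP-hard; concretely, the Gottesman--Irani construction \cite{Gottesman2009}, or a local-dimension-optimized refinement of it. Since \cref{lem:lift} is black-box and preserves the structural and geometric properties of the base mapping, and since here the system size is encoded succinctly (so that \QMAEXP-hardness of LH lifts to \PQMAexp-hardness of APX-SIM rather than to the logarithmic-query version), its output is again a 1D, nearest-neighbour, translationally invariant family of Hamiltonians together with a $1$-local observable, whose local dimension equals that of the base construction times the fixed ancillary overhead incurred by \cref{lem:lift}; tracking this overhead is what produces the stated local dimension $44$. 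The 3D, $4$-local, fcc case is entirely analogous: one applies \cref{lem:lift} to a \QMAEXP-hard, translationally invariant, $4$-local circuit-to-Hamiltonian construction on the fcc lattice, obtaining the stated local dimension $4$. Together with the containment above, this yields \PQMAexp-completeness in both settings.

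The main obstacle is not in this assembly but inside \cref{lem:lift} itself, specifically in making the lifting \emph{geometry-preserving}. A \PQMAexp-hardness reduction must encode the outcomes of $\poly(n)$ parallel oracle queries, and the naive way of recording these outcomes inside the Hamiltonian or the observable demands $\poly(n)$ distinguished, individually addressable sites --- which is incompatible with translational invariance, where $H$ is a single local term repeated along the chain. The resolution, flagged in the introduction, is to ``compress'' all $\poly(n)$ query answers onto a single output qudit: the lifted Hamiltonian encodes, using the clock and positional bookkeeping already present in the base translationally invariant construction, both the $\poly(n)$ \QMAEXP verifier computations and the subsequent polynomial-time decision on their outcomes, and funnels the final accept/reject verdict into one designated qudit that the $1$-local observable then reads. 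The real work is to verify that this can be carried out while (a) every newly introduced penalty term stays translationally invariant, (b) all relevant spectral and promise gaps stay inverse-polynomial in $N$, and (c) the parallel-versus-adaptive query distinction is handled correctly, via a suitably scaled-up analogue of the collapse of adaptive to non-adaptive \QMA-oracle queries of \cite{GPY20}; everything downstream of this is routine bookkeeping.
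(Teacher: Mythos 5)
Your proposal follows essentially the same route as the paper: containment via binary search with polynomially many adaptive \QMAEXP queries (the paper's \cref{Lemma:Containment_Proof}), hardness by instantiating the Lifting Lemma with the Gottesman--Irani/Bausch 1D TI construction (respectively the 3D fcc construction of \cite{Bausch2017}), and the identity $\EXP^{\parallel\QMA}=\PQMAexp$ (the paper's \cref{thm:equality}) to reconcile the parallel-query, EXP-based class that the lifting naturally produces with the adaptive-query class in the statement. The only details you gloss over are that the reduction must describe the exponential-size circuit $V$ implicitly via $\BigO(1)$-size Turing-machine descriptions in order to run in $\poly(n)$ time, and that the local dimension $44$ arises additively ($42$ plus two extra symbols making the halting configuration, and hence $M_1$, locally recognizable), not as a multiplicative overhead.
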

\begin{theorem}\label{th-intro:apx-sim-precise}
	Let $n$ denote input size, and set $N\in O(\poly (n))$. Then, APX-SIM is \PSPACE-complete for 1D, nearest neighbour, translationally invariant Hamiltonians on $N$ qudits of local dimension $44$ and for a precision of $\delta=\Omega(1/\exp(N))$, $b-a =\Omega(1/\exp(N))$.
\end{theorem}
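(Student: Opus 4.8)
As this is a completeness claim, I would prove \PSPACE-hardness and containment in \PSPACE\ separately, and for hardness I would simply re-run the argument behind \cref{th-intro:apx-sim-poly} in the exponentially-precise regime. Concretely, the plan is: (i) take the 1D version of the circuit-to-Hamiltonian mapping (in the sense of \cref{def:local-mapping}) behind \cref{th-intro:apx-sim-poly}, but instantiate it on $N = \poly(n)$ qudits with a binary-counter clock of exponential length; (ii) show that its Local Hamiltonian problem, with an exponentially small promise gap, is \PSPACE-hard; (iii) feed this family through the Lifting Lemma (\cref{lem:lift}); and (iv) observe that the oracle class this produces is $\PP$ relative to the exponentially-precise Local Hamiltonian problem of that family, which is \PSPACE-complete. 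Since $\PP^{\PSPACE[\log]} = \PP^{\PSPACE} = \PSPACE$, whether the lifted problem lands at the ``logarithmic adaptive'' or the ``polynomial parallel'' query level is immaterial, and it is \PSPACE-complete.

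\textbf{Hardness.} For step (ii): a Turing machine using $\poly(n)$ cells of workspace halts within $2^{\poly(n)}$ steps, so an arbitrary \PSPACE\ computation can be encoded into a Gottesman--Irani-style translationally invariant 1D history state using only $N = \poly(n)$ qudits --- $\poly(n)$ for the tape, plus a binary counter realising a clock of length $2^{\Theta(N)}$ --- with the standard Feynman--Kitaev promise gap, which here is $2^{-\Theta(N)}$. This is the exponentially-precise, translationally-invariant analogue of the $\QMAEXP$-hardness route of \cite{Gottesman2009}, and it is precisely the regime in which the 1D TI Local Hamiltonian problem is \PSPACE-complete \cite{Kohler2020}. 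Feeding this family into \cref{lem:lift} --- the same application as for \cref{th-intro:apx-sim-poly}, only with the clock length and promise gap taken exponential --- produces a 1D, nearest-neighbour, translationally invariant (exponentially-precise variant of an) APX-SIM instance on $N$ qudits that preserves the geometry, translation invariance, and, after the bounded local-dimension overhead already absorbed into the stated value, local dimension $44$, with $\delta$ and $b-a$ still exponentially small in $N$, and whose complexity is \PSPACE\ by step (iv).

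\textbf{Containment.} For the upper bound, observe that since $N = \poly(n)$ and the local dimension is constant, $H = \sum_i H_{i,i+1}$ and the observable $A$ are $44^{N} \times 44^{N}$ Hermitian matrices that are nonetheless specified by only $\polylog(N)$ bits (the single repeated term, the local observable, and $N$ in binary). Deciding the exponentially-precise APX-SIM instance then reduces to computing $\lmin(H)$ and the minimum of $\bra{\psi}A\ket{\psi}$ over the $\delta$-low-energy subspace of $H$ --- a constrained smallest-eigenvalue (semidefinite-feasibility) computation built from $H$, $A$, and the threshold $\lmin(H)+\delta$ --- to additive error $2^{-\poly(n)}$. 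The \YES\ promise (an exact ground state with $\bra{\psi}A\ket{\psi} \le a$) forces this minimum to be $\le a$, the \NO\ promise forces it to be $\ge b$, and the gap $b-a$ together with the gap between ``exact ground state'' and ``$\delta$-low-energy subspace'' ensures that one such computation decides the instance. Eigenvalue estimation and semidefinite feasibility on succinctly-described exponential-size Hermitian matrices at exponential precision require only polynomial space (albeit exponential time), hence lie in \PSPACE\ --- mirroring the proof that precise 1D TI LH is in \PSPACE.

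\textbf{Main difficulty.} I expect no new conceptual ingredient, but a careful quantitative pass: every estimate behind \cref{th-intro:apx-sim-poly} must be re-examined with ``$1/\poly$'' replaced throughout by ``$2^{-\poly}$''. In particular one must verify that (a) the base clock construction retains a promise gap $2^{-\Theta(N)}$ while simulating $2^{\Theta(N)}$ gates on only $N$ qudits; (b) each gadget added by \cref{lem:lift} --- the history-state weighting, the parallel-query encoding, and especially the compression of polynomially many oracle answers into a single qubit --- degrades every relevant spectral gap by at most a $\poly(N)$ factor rather than an exponential one, so that one still lands at a gap that is exponentially (but not doubly-exponentially) small in $N$, inside the window \PSPACE\ tolerates; and (c) exact translation invariance and local dimension $44$ survive throughout, including in the containment argument, where the boundary case of $H$ having eigenvalues near the threshold $\lmin(H)+\delta$ must be handled via the promise.
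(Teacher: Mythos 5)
Your overall route is sound and close to the paper's: both use the translationally invariant construction of \cite{Kohler2020} (local terms described in $\poly(N)$ bits, clock of length $\exp(N)$ on $N=\poly(n)$ qudits, so that \PSPACE-hard exponential-time computations fit on a polynomial-size chain), both exploit that \PSPACE is low for itself so the exact query structure of the lifted oracle class is immaterial, and both land at local dimension $44$ the same way. The containments differ: the paper runs the generic oracle algorithm (\cref{Lemma:Containment_Proof}) with polynomially many adaptive queries and uses $\PP^{\PreciseQMA}=\PSPACE$, whereas you do direct poly-space linear algebra on the succinctly described $44^N\times 44^N$ matrices; both work, though note the instance description here is $\poly(N)$ bits, not $\polylog(N)$ --- that is exactly why the base class drops from \QMAEXP to \PreciseQMA in this regime.

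The one substantive issue is in your hardness step (iii). You feed $m=\poly(n)$ parallel \PreciseQMA queries into \cref{lem:lift}, but the Lifting Lemma only yields a nontrivial gap when $m\cdot\max(1-c+\epsilon,s)=o(1)$, i.e.\ when each verifier is amplified to completeness $1-o(1/m)$. \PreciseQMA cannot be amplified this way in polynomial time (that would give $\PreciseQMA=\QMA$), so with standard $2/3$ vs.\ $1/3$ parameters your multi-query lifting produces no promise gap at all. The fix --- which the paper makes explicit --- is to run Marriott--Watrous strong error reduction \emph{in exponential time but polynomial space} inside the verification circuit (the exponential runtime is absorbed by the exponential clock), and in fact the paper then simplifies further: since $\PreciseQMA=\PSPACE$ is already the target class, it modifies \cref{lem:augmented-circuit} to make a \emph{single} oracle query and identifies the flag qubit with the output qubit, so the whole multi-query flag-rotation machinery and the exchange argument of \cref{lem:lowhelp} become unnecessary. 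Your version can be repaired to work with the MW step added, but as written the amplification prerequisite is the missing ingredient rather than a routine ``$1/\poly\to 2^{-\poly}$'' bookkeeping pass.
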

\noindent Above, \QMAEXP is to \QMA as \NEXP is to \NP, i.e.\ \QMAEXP has an exponentially long proof and verification time relative to the input size (see \cref{Def:QMA_EXP}). Such results are particularly notable given that 1D TI systems are among the simplest physical systems imaginable, and that their classical counter-parts are computationally tractable.

We note that for \cref{th-intro:apx-sim-precise}, the 3D fcc lattice construction as well as \cite{Aharonov2009}'s best non-TI 1D spin chain with local dimension 12 (cf.\ \cref{tab:uwHs}) employ a history state which shuffles the workspace qubits one block per time step.
As this implies that the system size $\sim$ computational time, one cannot encode an $\exp$-time computation in these constructions easily; the \PSPACE-hardness reduction thus does not work without first modifying the circuit-to-Hamiltonian mapping underlying these results.

\subsection{Hardness of Ground State Connectivity}\label{sscn:introGSCON}

The second problem we study is the Ground State Connectivity problem (\GSCON), which roughly asks whether a low energy space has an energy barrier. This problem was introduced in \cite{GS14}, where it was shown to be \QCMA-complete for 5-local Hamiltonians. Subsequently, it was discovered~\cite{GMV17} that \GSCON remains QCMA-hard even for commuting Hamiltonians~\cite{GMV17}; this is in contrast to LH, which is not known to be QMA-hard in the commuting case~\cite{BV05,AE11,S11,AE13_2}. Informally, \GSCON is defined as follows (formal definitions in  \cref{def:GSCON} in \cref{sec:TI_GSCON}):
\begin{definition}[Ground State Connectivity (\GSCON) (informal) \cite{GS14}]\label{Def:GSCON_informal}
	Given as input a local Hamiltonian $H$ and two ground states $\ket{\psi}$ and $\ket{\phi}$ (represented succinctly via quantum circuits) of $H$, as well as parameters $m$ and $\p$, does there exist a sequence of $\p$-qubit unitaries $\left(U_i\right)_{i=1}^m$ such that:	
	\begin{enumerate}
		\item ($\ket{\psi}$ mapped to $\ket{\phi}$) $U_m\cdots U_1 \ket{\psi}\approx \ket{\phi}$, and
		\item (intermediate states have low energy) $\forall i\in [m]$, $U_i\cdots U_1\ket{\psi}$
		{has low energy with respect to $H$.}
	\end{enumerate}
\end{definition}
In terms of motivation, GSCON is connected to quantum memories and stabilizer codes. For example, a Hamiltonian $H$ for a YES instance of \GSCON has a short sequence of {local} unitaries mapping between low energy states $\ket{\psi}$ and $\ket{\phi}$ {through} the low energy space of $H$. In a quantum memory, $\ket{\psi}$ and $\ket{\phi}$ may encode distinct logical states. Since errors in physical systems are often modeled to be local, this suggests such an $H$ might not be a good candidate for a quantum memory---not only might $\ket{\psi}$ easily be inadvertently mapped to $\ket{\phi}$, since there is no energy barrier ``separating'' $\ket{\psi}$ from $\ket{\phi}$  (corrupting the content of the quantum memory), but this corrupting process takes place \emph{completely in the low energy space}, meaning such errors are not easily detectable.

In this work, following our theme of generic lifting theorems, we give a lifting construction (Lifting \cref{lem:liftGSCON}) for GSCON, which allows us to obtain the first hardness result for GSCON in a physically motivated setting:
\begin{theorem}\label{thm:TIGSCON}
    \TIGSCON is \QCMAEXP-complete for 1D, nearest neighbour, translationally invariant Hamiltonians
on N qudits, for $m\in\poly(N)$, $\delta \in \Theta(1/\poly(N))$, and any $\p\in\set{2,\ldots, N-1}$.
\end{theorem}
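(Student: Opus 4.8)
The plan is to prove the two containments separately: membership $\TIGSCON \in \QCMAEXP$, and $\QCMAEXP$-hardness via the GSCON lifting construction of \cref{lem:liftGSCON}.

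\paragraph{Membership.} I would exhibit a $\QCMAEXP$ verifier whose witness is a classical description of the sequence of $\p$-qudit unitaries $(U_i)_{i=1}^m$. Since $\p$ may be as large as $N-1 \in \exp(n)$, this description can be exponentially long in the input size $n$, which is exactly what $\QCMAEXP$ permits. As the input is succinct, the verifier first unpacks $H$ (determined by the single translationally invariant $2$-local term together with the chain length $N$ in binary), the succinctly given circuits generating $\ket{\psi}$ and $\ket{\phi}$, and the $U_i$; each of these has size $\poly(N) = \exp(n)$. The verifier then flips a fair coin: on heads it checks the endpoint condition by preparing $\ket{\psi}$, applying $U_m\cdots U_1$, and running a swap test against a freshly prepared $\ket{\phi}$; on tails it checks the energy condition by sampling $i \in \set{1,\dots,m}$ and a uniformly random nearest-neighbour edge $(j,j+1)$, preparing $U_i\cdots U_1\ket{\psi}$, and measuring the corresponding bounded local term to estimate its expectation. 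Averaging over the $\poly(N)$ edges and amplifying the $\delta = \Theta(1/\poly(N)) = 1/\exp(n)$ promise gap costs $\poly(N)\cdot\exp(n) = \exp(n)$ time, comfortably within the $\QCMAEXP$ budget, yielding the required completeness and soundness.

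\paragraph{Hardness.} I would obtain $\QCMAEXP$-hardness by feeding an appropriate circuit-to-Hamiltonian mapping into \cref{lem:liftGSCON}. Concretely: fix a $\QCMAEXP$-complete language, take its succinctly specified exponential-time quantum verification circuit family, and compile it --- in the style of Gottesman--Irani \cite{Gottesman2009} --- into a $1$D, nearest-neighbour, translationally invariant Hamiltonian on $N \in \exp(n)$ qudits whose history states ``self-assemble'' the clock and boundary data, leaving the (classical) proof register to be filled in externally. Since the witness of a $\GSCON$ instance is the unitary path itself, which is classical, this register is precisely what the path will populate, so no witness need be hard-coded into the Hamiltonian. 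Applying \cref{lem:liftGSCON} to this mapping produces a $\GSCON$ instance $(H', \ket{\psi}, \ket{\phi}, m, \p)$ which is again $1$D, nearest neighbour and translationally invariant (the lifting lemma preserves geometry and translation invariance), with $m \in \poly(N)$ and $\delta \in \Theta(1/\poly(N))$. On \YES instances (an accepting witness $w^\ast$ exists) the honest connecting path can be taken to consist of $2$-qudit unitaries that write down $w^\ast$, drive the computation register forward through the low-energy history states, and then uncompute into $\ket{\phi}$; padding each unitary with identities shows the same path works for every $\p \in \set{2,\dots,N-1}$. On \NO instances, since any $\p'$-qudit path is a special case of an $(N-1)$-qudit path, it suffices to invoke the lemma's guarantee that no energy-respecting $(N-1)$-qudit unitary sequence connects $\ket{\psi}$ to $\ket{\phi}$. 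Together with membership this gives $\QCMAEXP$-completeness for every $\p \in \set{2,\dots,N-1}$.

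\paragraph{Main obstacle.} The entire difficulty sits in the hardness direction and is localized in \cref{lem:liftGSCON}. Two intertwined points are delicate. First is the standard friction in translationally invariant constructions: $\GSCON$ demands two explicitly, succinctly circuit-describable distinguished ground states $\ket{\psi}, \ket{\phi}$ and a finely shaped low-energy landscape, yet a translationally invariant Hamiltonian tolerates no special boundary terms --- hence the detour through a self-assembling Gottesman--Irani clock, whose $\poly(n)$-size generating circuit nonetheless outputs the exponentially-many-qudit states $\ket{\psi}$ and $\ket{\phi}$. Second, and the genuinely hard part, is robustness against $(N-1)$-local unitaries: one must prove that even an adversary free to act on all but one qudit per step --- with that excluded qudit chosen adversarially and allowed to change each step --- cannot shepherd a \NO-instance $\ket{\psi}$ to $\ket{\phi}$ through the low-energy space. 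This forces the low-energy subspace of $H'$ to be so rigid that it contains no ``spectator'' qudit an adversary could exploit to bypass the verification: every qudit must be load-bearing, either for the clock structure or for the witness check. Establishing this rigidity, presumably via a combinatorial classification of low-energy configurations that is strictly stronger than the small-$\p$ analysis of \cite{GS14}, is the crux of the argument.
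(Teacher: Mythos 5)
Your overall route matches the paper's: containment is delegated to the known $\GSCON\in\QCMA$ argument scaled up to \QCMAEXP, and hardness comes from feeding a Gottesman--Irani-style 1D TI circuit-to-Hamiltonian mapping into \cref{lem:liftGSCON}. However, there is a genuine gap in the hardness direction. \Cref{lem:liftGSCON} only applies to \emph{TI-standard} mappings (\cref{def:TIstd}), and the Gottesman--Irani construction is \emph{not} TI-standard as it stands: because the bracket penalty $I-\ketbra{\leftend}{\leftend}-\ketbra{\rightend}{\rightend}$ sits on every site, every history state carries an additive energy offset of roughly $N-2$, so the completeness/soundness thresholds have the form $\alpha=p(N)$, $\beta=p(N)+1/q(N)$ and violate the requirement $\beta\geq 16(2L+7N)\alpha\geq 0$. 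Without that requirement the Traversal Lemma only forces energy $\approx p(N)/m^2$ in the NO case, which is \emph{below} the YES-case energy $p(N)$ --- no promise gap survives. The paper's proof of this theorem is devoted precisely to repairing this: one subtracts $(1-2/N)I$ from each local term to shift the ground energy into $[0,\epsilon/N^2]$ (YES) versus $\geq(1-\epsilon)/N^2$ (NO), uses error reduction on the \QCMAEXP verifier to make $\epsilon$ small enough that $\beta\geq 16(2L+7N)\alpha$ holds, and then --- because the shift makes the local terms $H_{i,i+1}\not\succeq 0$ --- verifies Assumption 3 of \cref{def:TIstd} by explicitly computing the negative-energy index set, which turns out to be $F=\set{1,N-1}$. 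Your proposal asserts ``Applying \cref{lem:liftGSCON} to this mapping produces a \GSCON instance'' without checking any of the lemma's hypotheses, and the hypotheses in fact fail for the unmodified mapping; this step is where the actual work of the theorem lives. (Your ``main obstacle'' paragraph instead discusses robustness against $(N-1)$-local unitaries, but that is the content of the already-proved \cref{lem:liftGSCON}, not of this theorem.)

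A secondary issue concerns membership: you propose a witness consisting of ``a classical description of the sequence of $\p$-qudit unitaries'' and claim this is exponentially long. For $\p=N-1$ with $N\in\exp(n)$, a generic $\p$-qudit unitary requires $d^{\,2(N-1)}$ parameters, i.e.\ a \emph{doubly} exponential description, which does not fit in a $\QCMAEXP$ witness of length $2^{n^k}$. The paper sidesteps this by citing the containment argument of \cite{GS14} directly rather than re-deriving it; if you insist on spelling out the verifier, you need to address how large-$\p$ unitaries are communicated (or restrict the witness format), since the naive encoding you describe does not work.
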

\noindent Here, $\QCMAEXP$ (\cref{def:QCMAEXP}) is to $\QCMA$ as $\QMAEXP$ is to $\QMA$, i.e.\ one has an exponentially long proof and verification time. We remark that the applicability of our lifting theorem for GSCON is not as wide as that for APX-SIM; details are given next in \cref{sscn:techniques}.

\subsection{Techniques}\label{sscn:techniques}

\paragraph{For APX-SIM.} Instead of the traditional route of picking a concrete complexity class $\DQ$ (such as $\PQMApar$, which recall equals $\PQMAlog$~\cite{GPY20}) and circuit-to-Hamiltonian mapping $\Hw$, and tailor-designing a reduction to show $\DQ$-hardness for Hamiltonians produced by $\Hw$, here we take a much more general approach. We consider general notions of ``deterministic classes'' D (e.g.\ P, EXP; see \cref{def:dTM}) and ``existentially quantified quantum verification classes (QVClasses)'' Q (e.g.\ NP, QCMA, QMA; see \cref{def:qV}). We also define an abstract notion of ``local circuit-to-Hamiltonian mappings'' $\Hw$ (\cref{def:local-mapping}), which requires minimal assumptions: given any unitary $U$ (think of $U$ as, say, a QMA verification circuit), $\Hw(U)$ has a non-empty null space spanned by ``correctly initialized computation history states'', where we assume very little about the structure of such ``history states'', other than a single primary property: in the Hamiltonian picture, there exist measurement operators $M_1$ and $M_2$, such a measurement in the standard basis on the first and second output qubits of $U$ can be simulated by measuring $M_1$ and $M_2$ in the Hamiltonian picture on $\Hw(U)$. (A formal statement is given in \cref{def:local-mapping}, and involves two further minor assumptions regarding how the final timestep of $U$ is encoded; see \cref{eqn:sim3} and \cref{eqn:sim2}.)

With just this basic property of measuring two qubits of the ground state in hand, we give a black box Lifting Lemma (\cref{lem:augmented-circuit}) which takes any $\DQ$ computation and embeds it in an APX-SIM instance, while automatically preserving structural properties of the circuit-to-Hamiltonian mapping $\Hw$ used. As a bonus, this construction only requires a single free parameter, $\alpha$, which needs to be set according to the desired hardness result and the properties of classes D and Q (e.g.\ what is the desired promise gap; see \cref{sscn:other} for applications).  At a high level, the Lifting Lemma works by starting with an idea similar to the 1D non-translationally $\PQMAlog$-hardness result of~\cite{GPY20}. Namely, we replace all parallel oracle calls to Q with explicit \emph{verification circuits} for Q. In contrast to~\cite{GPY20}, we then ``count'' the number of YES Q-queries via a single qubit---each time a Q-verifier outputs YES, we rotate a designated ``flag qubit'' by a fixed amount. We then push this entire ``bootstrapped'' computation through the given abstract circuit-to-Hamiltonian mapping, followed by use of the measurement term $M_2$ to simulate a penalty on the flag qubit. (This is reminiscent of how the concept of ``sifters'' worked in~\cite{GPY20}.) Remarkably, by carefully adjusting the weight on this one flag qubit, it turns out that with high probability, we can force \emph{all} Q-queries to simultaneously be answered correctly. \emph{A priori}, this is perhaps surprising; for example, Holevo's theorem~\cite{H73} roughly says that $n$ qubits cannot transmit more than $n$ bits of information, and yet here we are ``cramming'' (for lack of a better word) many query answers into a single flag qubit, and are still able to ``make some sense'' of the information therein.

This flag qubit construction now allows us to circumvent the nearest neighbour and translationally invariant restrictions (since there is only a \emph{single} flag qubit to keep track of, we are not worried about how it is arranged geometrically within the final system). Additionally, we remark that a key part of the soundness analysis is an exchange argument (\cref{lem:lowhelp}), which may be of independent interest: given a joint entangled proof $\ket{\wm}$ to $m$ Q-verifiers $V_i$, if the $i$\textsuperscript{th} local component of $\ket{\wm}$ is $\epsilon$-suboptimal for verifier $V_i$, we give a rigorous lower bound on the deviation from the optimal ``counted sum'' on the flag qubit.
We note it is well-known how to argue that a tensor product proof is optimal in such settings (e.g.~\cite{Aharanov_Naveh_2002}); the key here is we give a robust deviation bound from optimality.

We conclude this discussion on APX-SIM by highlighting three distinct advantages of this approach over traditional constructions used to date:
\begin{enumerate}
	\item No modifications to existing Hamiltonian constructions is necessary; the reduction happens at a circuit level.
	This means that \emph{almost all} past local Hamiltonian hardness results---of which there are many, see \cref{tab:uwHs}---are shown to have variants of a hard APX-SIM problem associated to them.
	\item We do not employ additional perturbation gadgets for our reduction. As our result is not based on simulating a more complex system, we do not require constraints on how computation is embedded that e.g.\ preserves locality (a \emph{local simulation}); our demands on the host Hamiltonian family is thus milder than required for universal Hamiltonian families \cite{CMP,Kohler2020,Piddock2020}.
	\item The overall construction is significantly more streamlined than the existing query Hamiltonian gadget approach~\cite{Ambainis2013,GY16_2,GPY20}. A tradeoff, however, is that our construction appears to require error reduction for verification class Q, ruling out a black-box application of \cref{lem:lift} to the case $\textup{Q}=\StoqMA$. (Hardness for APX-SIM for stoquastic Hamiltonians cam be shown, however, via the query gadget approach~\cite{GPY20}.)
\end{enumerate}

\paragraph{For GSCON.} For \QCMAEXP-hardness of GSCON, we again follow a generic lifting approach (Lifting Lemma for GSCON, \cref{lem:liftGSCON}; for clarity, this lifting lemma is proven using completely different techniques than our Lifting Lemma for APX-SIM), although less generic than what we are able to achieve for APX-SIM. Namely, we restrict attention to quantum verification classes such as \QCMA or $\QCMAEXP$ (since we require the ability to prepare low energy/history states efficiently in the YES case), and to general 1D TI circuit-to-Hamiltonian mappings (again, with mild restrictions; see the definition of TI-standard, \cref{def:TIstd}. The reason for restricting to 1D TI mappings will be stated after proof techniques below). The main \QCMAEXP-hardness result of \cref{thm:TIGSCON} is achieved by instantiating our lifting framework with the 1D TI Gottesman-Irani construction~\cite{Gottesman2009}.

In terms of techniques, at a high level, our starting setup is similar to~\cite{GS14}, which we briefly review here for context (see \cref{sec:TI_GSCON} for details): given a QCMA verification circuit $V$,  apply Kitaev's circuit-to-Hamiltonian construction to obtain a local Hamiltonian $H=\sum_i H_i$, such that if $V$ is a YES (NO) instance, $\lmin(H)$ is small (large), for $\lmin(H)$ the smallest eigenvalue of $H$. Then, attach a particular ``switch gadget'' to $H$ to obtain a new Hamiltonian $H'$, so that any polynomial-length traversal of the low energy space of $H'$, from start state $\ket{\psi}$ to target state $\ket{\phi}$, forcibly ``switches on'' all terms of $H$. In the NO case, switching on $H$ incurs a large energy penalty, i.e.\ we hit the claimed energy barrier.

Extending this to the 1D TI setting presents various challenges. First, the construction of ~\cite{GS14} is highly non-local geometrically, as each switch qubit is coupled to \emph{all} local terms $H_i$ of $H$. In order to maintain this level of coupling in 1D, we first use an idea reminiscent of the space-time circuit-to-Hamiltonian construction of \citeauthor{Breuckmann_Terhal_2014}~\cite{Breuckmann_Terhal_2014}, and instead endow each qudit on the 1D chain with its \emph{own} ``local switch qudit'' (\cref{sscn:o1}). We then wish to add ``string constraints'' on these ``local switch qubits'' to force a prover to ``switch on'' each term $H_{i,i+1}$ of the 1D Hamiltonian one at a time on the chain. But here we face an additional pair of challenges: first, any na\"ive implementation of string constraints allows a cheating prover to switch on only $N-\p$ of the chain's local terms $H_{i,i+1}$ (we call this ``partial activation''; \cref{sscn:o2}); this is intuitively because the prover is allowed to apply arbitrary $\p$-local unitaries in each step, allowing it to ``shortcut'' the last $\p$ switch qudits in one step. Second, we cannot satisfy the desired completeness properties for GSCON by simply switching on local terms $H_{i,i+1}$ iteratively from left to right (\cref{sscn:o3}). Rather, we must allow a \emph{non-linear} order of activation.

It turns out that, not only can both the second and third challenges above be addressed in a unified black-box fashion\footnote{There is also a white-box approach to resolving the partial activation obstacle, which exploits a remarkable property of~\cite{Gottesman2009} which we call ``self-organization''. Since our aim in this work is black-box lifting theorems, we do not use this white-box fix; it is, however, detailed in \cref{app:altfix2} for the interested reader.}, but the unified fix will also make the construction remarkably ``robust''. Specifically, we first increase the local switch space dimension to $7$, which roughly will allow non-linear activation orders. We then carefully construct our string constraints so that, very briefly, any ground space evolution satisfying said constraints becomes ``trapped'' in a low-dimensional joint switch subspace on all qudits (\cref{fig:neighbour}). This low-dimensional space is precisely set up to achieve two things: (1) force \emph{all} local terms of $H$ to be simultaneously switched on, and (2) be ``logically protected'' from any switch subspace deviating from property (1) by a ``string'' of local unitaries of length $\Theta(N)$. The formal proof of correctness uses, among other tools, the Traversal Lemma (\cref{l:traversal}) of~\cite{GS14,GMV17}. Perhaps counterintuitively, soundness holds even if a cheating prover can apply $(N-1)$-local unitaries in each step, i.e.\ can act on \emph{all but one} qudit of the chain per step. Thus, our construction is significantly more robust (with respect to $\p$, the locality of the unitaries applied), than\footnote{The construction of \cite{GS14} can be ``blown up'' to have soundness parameter $\p$ scaling as $cN$ for any desired constant $c$; however, this requires increasing the switch gadget register size by $\Theta(N)$, and cannot achieve soundness against, say, $N-O(1)$, as we do here. Moreover, such a blow-up trick would be rather artificial, whereas here our desired low-dimensional switch subspace is ``logically protected'' from the remaining Hilbert space, making our construction ``inherently'' robust.} \cite{GS14}.

Finally, we can now explain our restriction to general 1D TI circuit-to-Hamiltonian mappings for GSCON, as alluded to earlier. Namely, the string constraints above are currently more-or-less tailored to a 1D setup (\cref{fig:neighbour}). In principle, these constraints \emph{can} be extended to higher dimensions; doing so in an abtract fashion to suit any desired geometry, however, appears to become cumbersome.

\subsection{Paper Structure}
We begin in \cref{sec:prelims} with notation and basic definitions (APX-SIM, GSCON, oracle complexity classes, etc). Section~\ref{sec:apx-sim} gives our generic lifting results for APX-SIM, including \PQMAexp-completeness for 1D TI Hamiltonians. Section~\ref{sec:TI_GSCON} gives our generic lifting framework for 1D TI GSCON, including \QCMAEXP-completeness of 1D TI GSCON via an instantiation using~\cite{Gottesman2009}. We conclude in \cref{sec:discussion} with a discussion and outlook.

\section{Preliminaries}\label{sec:prelims}

\subsection{Notation}
The sets $\lin{\spa{X}}$, $\herm{\spa{X}}$, $\unitary{\spa{X}}$ and $\pos{\spa{X}}$ denote the sets of linear, Hermitian, unitary, and positive semi-definite operators acting on space $\spa{X}$, respectively. For $H\in\herm{\spa{X}}$, $\lmin(H)$ is its smallest eigenvalue, and for $A\in\lin{\spa{X}}$, $\Null{A}$ is the null-space/kernel of $A$.

\subsection{APX-SIM and GSCON}\label{sec:apx-sim-gscon}

\paragraph{Approximate simulation.} APX-SIM is concerned with the properties of ground states. However, it is usually more natural to consider a low-energy subspace. With this in mind we follow \cite{GPY20} and define a more symmetric variant of the above problem APX-SIM which concerns itself with the low-energy subspace rather than the ground state:

\begin{definition}[$\forall$-APX-SIM$(H,A,k,l,a,b,\delta)$~\cite{GPY20}]\label{def:forall-apx-sim}
		Given a $k$-local Hamiltonian $H=\sum_i H_i $ acting on $N$ qubits, an $l$-local observable $A$, and real numbers $a$, $b$, and $\delta$ such that $b-a\geq N^{-c}$ and $\delta\geq N^{-c'}$, for $c,c'>0$ constant, decide:
	\begin{itemize}
		\item[YES.] If for all $\ket{\psi}$ satisfying $\bra{\psi}H\ket{\psi}\le \lmin(H)+\delta$, it holds that $\bra{\psi}A\ket{\psi} \le a$.
		\item[NO.] If for all $\ket{\psi}$ satisfying $\bra{\psi}H\ket{\psi}\leq \lmin(H)+\delta$, it holds that $\bra{\psi}A\ket{\psi}\geq b$.
	\end{itemize}
\end{definition}

\noindent Note that in contrast to APX-SIM, for completeness in $\forall$-APX-SIM one requires \emph{all} states below a threshold energy $\delta$ above the ground state energy to have expectation value upper-bounded by $a$. Throughout this work we consider the translationally invariant versions of APX-SIM and $\forall$-APX-SIM which simply for translationally invariant Hamiltonians:
\begin{definition}[TI-APX-SIM]\label{def:TIAPX}
	Defined analogously to APX-SIM, except the input Hamiltonian is specified via local term $h$ of a translationally invariant Hamiltonian $H=\sum h $ acting on $N$ qubits, where N is specified in binary, and each local term is describable in $\BigO(\log(N))$ bits.
\end{definition}
\begin{definition}[$\forall$-TI-APX-SIM]
Defined analogously to $\forall$-APX-SIM, except the input Hamiltonian is specified via local term $h$ of a translationally invariant Hamiltonian $H=\sum h $ acting on $N$ qubits, where N is specified in binary, and each local term is describable in $\BigO(\log(N))$ bits.
\end{definition}

\noindent \emph{Remarks.} (1) TI-APX-SIM is related to the local Hamiltonian problem, i.e.\ the question of estimating the ground state energy of a Hamiltonian. Indeed, if we take the local terms $h$ of $H$ as observables $A$, TI-APX-SIM yields YES if there exists a ground state such that all local terms cumulatively contribute energy at most $a$; and NO if each low energy state has cumulative energy at least $b$. (2) $\forall$-TI-APX-SIM immediately many-one reduces to TI-APX-SIM, in that if $\forall$-TI-APX-SIM yields NO, then also TI-APX-SIM yields NO; similarly if $\forall$-TI-APX-SIM yields YES, then so does TI-APX-SIM.
Thus, hardness results for $\forall$-TI-APX-SIM imply hardness results for TI-APX-SIM.

\paragraph{Ground state connectivity.}

\begin{definition}[Ground State Connectivity (\GSCON $({H},{k},{\aaa},{\bbb},{\ccc},{\ddd},{\delta},{\p},{m},{U_\psi},{U_\phi})$)\cite{GS14}]\label{def:GSCON}
\leavevmode
\begin{mylist}{\parindent}
\item Input parameters:
    \begin{enumerate}
        \item {$k$}-local Hamiltonian ${H}=\sum_i H_i$ acting on $N$ qubits with $H_i \in \herm{(\C^2)^{\otimes {k}}}$ satisfying $\snorm{H_i}\leq 1$.

        \item ${\aaa},{\bbb},{\ccc},{\ddd}, {\delta}\in\R$, and integer ${m}\geq0$, such that ${\bbb}-{\aaa}\geq {\delta}$ and ${\ddd}-{\ccc}\geq {\delta}$.

        \item Polynomial size quantum circuits {$U_\psi$} and {$U_\phi$} generating ``starting'' and ``target'' states $\ket{\psi}$ and $\ket{\phi}$ (starting from $\ket{0}^{\otimes N}$), respectively, satisfying $\bra{\psi}{H}\ket{\psi}\leq {\aaa}$ and $\bra{\phi}{H}\ket{\phi}\leq {\aaa}$.
    \end{enumerate}
\item Output:
\begin{enumerate}
    \item If there exists a sequence of {$\p$}-local unitaries $(U_{i})_{i=1}^m \in\unitary{\C^2}^{\times {m}}$ such that:
    \begin{enumerate}
        \item (Intermediate states remain in low energy space) For all $i\in [{m}]$ and intermediate states ${\ket{\psi_i}\coloneqq U_i\cdots U_2U_1\ket{\psi}}$, one has $\bra{\psi_i}{H}\ket{\psi_i}\leq {\aaa}$, and
        \item (Final state close to target state) $\enorm{ U_{{m}} \cdots U_1 \ket{\psi}-\ket{\phi}} \leq {\ccc}$,
    \end{enumerate}
    then output YES.
    \item If for all ${\p}$-local sequences of unitaries $(U_{i})_{i=1}^{{m}}\in\unitary{\C^2}^{\times {m}}$, either:
    \begin{enumerate}
        \item (Intermediate state obtains high energy) There exists $i\in [{m}]$ and an intermediate state ${\ket{\psi_i}\coloneqq U_i\cdots U_2U_1\ket{\psi}}$, such that $\bra{\psi_i}{H}\ket{\psi_i}\geq {\bbb}$, or
        \item (Final state far from target state) $\enorm{ U_{{m}} \cdots U_1 \ket{\psi}-\ket{\phi}} \geq {\ddd}$,
    \end{enumerate}
    then output NO.
\end{enumerate}
\end{mylist}
\end{definition}

\begin{definition}[\TIGSCON $({H},{k},{\aaa},{\bbb},{\ccc},{\ddd},{\delta},{\p},{m},{U_\psi},{U_\phi})$)]~\label{def:TIGSCON}
	Defined analogously to \GSCON, except the input Hamiltonian is specified via local term $h$ of a translationally invariant Hamiltonian $H=\sum h $ acting on $N$ qubits, where N is specified in binary representation, and each local term is describable in $\BigO(\log(N))$ bits.
\end{definition}

We now state the Extended Projection Lemma, which consists of three claims, the first of which was given in~\cite{Kempe2006}. The lemma was later extended to include the second and third claims~\cite{GY16_2}.

\begin{lemma}[Extended Projection Lemma (\cite{Kempe2006,GY16_2})]\label{l:kkr}
	Let $H=H_1+H_2$ be the sum of two Hamiltonians operating on some Hilbert space $\spa{H}=\spa{S}+\spa{S}^\perp$. The Hamiltonian $H_1$ is such that $\spa{S}$ is a zero eigenspace and the eigenvectors in $\spa{S}^\perp$ have eigenvalue at least $J>2\snorm{H_2}$. Let $K\coloneqq \snorm{H_2}$. Then, for any $\delta\geq0$ and $\ket{\psi}$ satisfying $\bra{\psi}H\ket{\psi}\leq \lmin(H)+\delta$, there exists a $\ket{\psi'}\in \spa{S}$ such that:
	\begin{itemize}
        \item (Ground state energy bound)
        \[
		\lmin(H_2|_{\spa{S}})-\frac{K^2}{J-2K}\leq \lmin(H)\leq \lmin(H_2|_{\spa{S}}),
	\]
where $\lmin(H_2|_{\spa{S}})$ denotes the smallest eigenvalue of $H_2$ restricted to space $\spa{S}$.
        \item (Ground state deviation bound)
        \[
            \abs{\braket{\psi}{\psi'}}^2\geq {1-\left(\frac{K+\sqrt{K^2+\delta(J-2K)}}{J-2K}\right)^2}.
            \]
        \item  (Energy obtained by perturbed state against $H$)
        \[
            \bra{\psi'}H\ket{\psi'}\leq\lmin(H)+\delta+2K\frac{K+\sqrt{K^2+\delta(J-2K)}}{J-2K}.
        \]
    \end{itemize}
\end{lemma}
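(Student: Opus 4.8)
\emph{Proof sketch.} Claim~1 is the Projection Lemma of~\cite{Kempe2006}; the plan is to recover all three claims from a single master inequality obtained by splitting $\ket{\psi}$ along $\spa{S}\oplus\spa{S}^\perp$. Set $\lambda_S\coloneqq\lmin(H_2|_{\spa{S}})$, and note $\abs{\lambda_S}\le K$ since $\norm{H_2|_{\spa{S}}}\le K$. For a unit vector $\ket{\psi}$ I would write $\ket{\psi}=\alpha\ket{v_1}+\beta\ket{v_2}$ with $\ket{v_1}\in\spa{S}$, $\ket{v_2}\in\spa{S}^\perp$ unit and $\abs{\alpha}^2+\abs{\beta}^2=1$, and put $b\coloneqq\abs{\beta}$. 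Since $H_1\succeq 0$, $\ker H_1=\spa{S}$, and $H_1\succeq J$ on $\spa{S}^\perp$, we have $\bra{\psi}H_1\ket{\psi}\ge b^2 J$; expanding $\bra{\psi}H_2\ket{\psi}$ in this decomposition and using $\bra{v_1}H_2\ket{v_1}\ge\lambda_S$, $\bra{v_2}H_2\ket{v_2}\ge -K$, $\abs{\bra{v_1}H_2\ket{v_2}}\le K$, together with $\lambda_S\le K$ and $2b\abs{\alpha}\le 2b$, I expect to arrive at the master bound
\[
\bra{\psi}H\ket{\psi}\;\ge\;\lambda_S+b^2(J-2K)-2bK .
\]

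For Claim~1, the upper bound $\lmin(H)\le\lambda_S$ follows by testing $H$ against a ground state of $H_2|_{\spa{S}}$ (which lies in $\ker H_1$); the lower bound follows by applying the master bound to a ground state of $H$ and minimising $b\mapsto b^2(J-2K)-2bK$ over $b\ge 0$, whose minimum is $-K^2/(J-2K)$, attained at $b=K/(J-2K)$. For Claim~2, given $\ket{\psi}$ with $\bra{\psi}H\ket{\psi}\le\lmin(H)+\delta\le\lambda_S+\delta$, the master bound forces $b^2(J-2K)-2bK-\delta\le 0$, so (the leading coefficient being positive) $b\le r\coloneqq\tfrac{K+\sqrt{K^2+\delta(J-2K)}}{J-2K}$; taking $\ket{\psi'}\coloneqq\ket{v_1}\in\spa{S}$ then gives $\abs{\braket{\psi}{\psi'}}^2=1-b^2\ge 1-r^2$ (if $\alpha=0$ this bound is vacuous and one takes any unit vector of $\spa{S}$). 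For Claim~3, since $\ket{\psi'}=\ket{v_1}\in\ker H_1$ one has $\bra{\psi'}H\ket{\psi'}=\bra{v_1}H_2\ket{v_1}$; re-expanding $\bra{\psi}H_2\ket{\psi}$, isolating $\bra{v_1}H_2\ket{v_1}$, and applying the same norm bounds with $\abs{\alpha}\le1$ gives $\bra{v_1}H_2\ket{v_1}\le\bra{\psi}H_2\ket{\psi}+2Kb^2+2Kb$, and combining with $\bra{\psi}H_2\ket{\psi}=\bra{\psi}H\ket{\psi}-\bra{\psi}H_1\ket{\psi}\le\lmin(H)+\delta-b^2J$ yields
\[
\bra{\psi'}H\ket{\psi'}\;\le\;\lmin(H)+\delta-b^2(J-2K)+2Kb\;\le\;\lmin(H)+\delta+\tfrac{K^2}{J-2K}\;\le\;\lmin(H)+\delta+2Kr ,
\]
the middle inequality being the maximum of the downward parabola $-b^2(J-2K)+2Kb$ at $b=K/(J-2K)$, and the last holding since $r\ge K/(J-2K)$.

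\textbf{Main obstacle.} None of this is deep; it is all careful bookkeeping. The step needing the most attention is keeping every triangle-inequality and operator-norm estimate oriented the right way when assembling the master bound, and then extremising the two elementary quadratics correctly—checking in particular that $b=K/(J-2K)$ is feasible, i.e.\ $\le r$—so that the constants coincide with those in the statement. The degenerate case $\alpha=0$ must be flagged, but is immediate.
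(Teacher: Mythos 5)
Your argument is correct and is essentially the standard proof: Claim~1 is the Projection Lemma of \cite{Kempe2006} via exactly this $\spa{S}\oplus\spa{S}^\perp$ decomposition and quadratic minimisation, and Claims~2 and~3 follow the extension of \cite{GY16_2}; the paper itself states the lemma without proof, citing precisely these references. The only loose end is the degenerate case $\alpha=0$ in Claim~3, where ``any unit vector of $\spa{S}$'' does not suffice---there you should take $\ket{\psi'}$ to be a ground state of $H_2|_{\spa{S}}$, for which Claim~1 gives $\bra{\psi'}H\ket{\psi'}=\lmin(H_2|_{\spa{S}})\le\lmin(H)+\tfrac{K^2}{J-2K}\le\lmin(H)+\delta+2Kr$.
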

Next, we state a quantum analogue of the union bound for commuting operators (see, e.g.~\cite{OMW19}).
\begin{lemma}[Commutative Quantum Union Bound]\label{lem:unionbound}
    Let $\set{P_i}_{i=1}^m$ be a set of pairwise commuting projectors, each satisfying $0\preceq P_i\preceq I$. Then for any quantum state $\rho$,
    \[
        1-\trace(P_m\cdots P_1\rho P_1\cdots P_m)\leq\sum_{i=1}^m\trace((I-P_i)\rho).
    \]
\end{lemma}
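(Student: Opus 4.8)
The plan is to reduce the claimed trace inequality to a single operator inequality and then prove that inequality by induction on $m$. First I would observe that since the projectors pairwise commute, $P \coloneqq P_m\cdots P_1 = P_1\cdots P_m$ is again a projector (a product of finitely many commuting projectors is the orthogonal projector onto the intersection of their ranges), so $P^2 = P$, and by cyclicity of the trace $\trace(P_m\cdots P_1\,\rho\,P_1\cdots P_m) = \trace(P\rho P) = \trace(P^2\rho) = \trace(P\rho)$. Hence the left-hand side of the claim equals $\trace((I-P)\rho)$, and it suffices to establish the operator inequality $I - P_m\cdots P_1 \preceq \sum_{i=1}^m (I - P_i)$; taking the trace against the positive semidefinite $\rho$ then yields the lemma, since $A \preceq B$ implies $\trace(A\rho) \le \trace(B\rho)$ whenever $\rho \succeq 0$.

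Next I would prove the operator inequality by induction on $m$. The base case $m=1$ is an equality. For the inductive step, set $Q \coloneqq P_{m-1}\cdots P_1$, which is a projector commuting with $P_m$, so that $QP_m$ is also a projector. The key algebraic identity, valid because $Q$ and $P_m$ commute, is $(I-Q) + (I-P_m) - (I - QP_m) = I - Q - P_m + QP_m = (I-Q)(I-P_m)$, and the right-hand side, being a product of two commuting projectors, is itself a projector and in particular positive semidefinite. Therefore $I - QP_m \preceq (I-Q) + (I-P_m)$. Combining this with the inductive hypothesis $I - Q \preceq \sum_{i=1}^{m-1}(I-P_i)$ gives $I - P_m\cdots P_1 = I - QP_m \preceq \sum_{i=1}^{m}(I-P_i)$, completing the induction.

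There is no serious obstacle here; the only point requiring care is the repeated use of commutativity, which is precisely what guarantees that every partial product of the $P_i$ remains a projector (hence positive semidefinite) — the engine driving the telescoping operator inequality. Without commutativity the analogous statement fails, so the argument genuinely uses the hypothesis. As a sanity check one may instead simultaneously diagonalize the commuting family $\set{P_i}$ and verify the scalar inequality $1 - \prod_i x_i \le \sum_i (1-x_i)$ for $x_i\in\set{0,1}$ on each joint eigenspace, but the inductive proof above is cleaner and avoids invoking simultaneous diagonalizability explicitly.
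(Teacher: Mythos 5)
Your proof is correct. Note that the paper does not actually prove this lemma — it states it and cites the literature (\cite{OMW19}) — so there is no in-paper argument to compare against; your reduction to the operator inequality $I - P_m\cdots P_1 \preceq \sum_{i=1}^m (I-P_i)$ via $\trace(P\rho P)=\trace(P\rho)$ for the projector $P=P_m\cdots P_1$, followed by the induction using $(I-Q)+(I-P_m)-(I-QP_m)=(I-Q)(I-P_m)\succeq 0$, is a clean and complete self-contained proof of exactly the commuting case the paper needs.
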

The following is a standard fact (see, e.g., Equation 1.33~\cite{G13} for a proof):
\begin{equation}\label{eqn:enorm}
    \trnorm{\ketbra{v}{v}-\ketbra{w}{w}}=2\sqrt{1-\abs{\braket{v}{w}}^2}\leq 2\enorm{\ket{v}-\ket{w}}.
\end{equation}

\subsection{Relevant (Oracle) Complexity Classes}

As discussed extensively in \cite[Sec.~4]{Kohler2020}, the natural complexity class for the local Hamiltonian problem for a translationally-invariant system is \QMAEXP.
Inuitively, this is because for  translationally invariant system, the only parameter is the system size $N$.
If $N$ is the input, then it can be encoded in a string which we label $\enc(N)$ (e.g.\ a binary string) which has length $O(\log(N))$.
Together with a promise gap which closes $\propto 1/T^2$, where $T$ is the run-time of the embedded computation, and a $1/\poly(N)$ promise gap allowed in the definition of the local Hamiltonian problem, this means that we can only saturate this bound if we allow $T=\poly(N)$---i.e., the encoded computation runs in time exponential in the input size ($\poly(\log N)$), which naturally gives \QMAEXP.\footnote{We take care to distinguish this from the class \normalfont\textsf{QMA\textsubscript{exp}} of \cite{Fefferman2016} which is for an exponentially small promise gap in the input size, but polynomial length run time, also called \normalfont\textsf{PreciseQMA}.}

\begin{definition}[\QMAEXP~\cite{Gottesman2009}] \label{Def:QMA_EXP}
	A promise problem $\Pi=(\ayes,\ano)$ is in $\QMAEXP$ if and only if there exists a $k\in \BigO(1)$ and a Quantum Turing Machine $M$ such that for each input $x\in\set{0,1}^*$ with $n=n$, and any proof $\ket{\psi}\in\base^{\otimes 2^{n^k}}$, on input $(x,\ket{\psi})$, $M$ halts in $2^{n^k}$ steps. Furthermore,
	\begin{itemize}
		\item (Completeness) If $x \in \ayes$, $\exists\, \ket{\psi}\in\base^{\otimes 2^{n^k}}$ such that $M$ accepts $(x,\ket{\psi})$ with probability $\ge2/3$.	
		\item (Soundness) If $x \in \ano$, then $\forall\, \ket{\psi}\in \base^{\otimes 2^{n^k}}$, $M$ accepts $(x,\ket{\psi})$ with probability $\le1/3$.
	\end{itemize}
\end{definition}

In \cite{GPY20}, the authors prove that APX-SIM is $\PP^{\QMA[\log]}$-complete by showing that APX-SIM is $\PP^{||\QMA}$-complete, and in a second step that $\PP^{||\QMA}=\PP^{\QMA[\log]}$.
For the translationally invariant version of APX-SIM, we are essentially dealing with the same encoded computation, yet with exponentially input less information (i.e.\ the problem is succinctly encoded), just like in the case for the local Hamiltonian problem.
Here, the base class now has a runtime $\poly(N)$, but an input of length $O(\log(N))$ and hence is an \emph{exponential} time computation. 
The right class for which TI-APX-SIM is complete is thus $\PQMAexp$, where in the following we will implicitly assume that \QMAEXP is defined to take a $\log(N)$-sized input.
Technically, this class comes out of the $\EXPpar$-completeness result, but it is also motivated in another way. Since \QMAEXP has verification time $\poly(N)$, its circuit-to-Hamiltonian construction will have norm $\poly(N)$. Thus, the number of adaptive queries to the \QMAEXP oracle to estimate the ground state energy within $1/\poly(N)$ error is $\log(N)=O(\poly(n))$, which is polynomial in the input size.
So the \PP machine, which runs in time $\poly(\log(N))=O(\poly(n))$, makes $\poly(\log(N))=O(\poly(n))$ queries to the oracle, which yields \QMAEXP.

\QMA with a classical proof yields the complexity class \QCMA.
\QCMAEXP is then to \QCMA what \QMAEXP is to \QMA, as given in the following definition.
\begin{definition}[\QCMAEXP]\label{def:QCMAEXP}
	A promise problem $\Pi=(\ayes,\ano)$ is in \QCMAEXP if and only if there exists $k\in \BigO(1)$ and an exponential-time uniform family of quantum circuits $\set{Q_n}$, where $Q_n$ takes as input a string $x\in\Sigma^*$ with $n=n$, a classical proof ${y}\in \set{0,1}^{2^{n^k}}$, and $2^{n^k}$ ancilla qubits in state $\ket{0}^{\otimes 2^{n^k}}$, such that:
	\begin{itemize}
		\item (Completeness) If $x\in\ayes$, $\exists y\in\set{0,1}^{2^{n^k}}$ such that $Q_n$ accepts $(x,{y})$ with probability $\ge2/3$.
		\item (Soundness) If $x\in\ano$,$\forall{y}\in \set{0,1}^{2^{n^k}}$, $Q_n$ accepts $(x,{y})$ with probability $\le1/3$.
	\end{itemize}
\end{definition}

\paragraph{Oracle complexity classes.} The classes $\PQMAlog$, $\PQMApar$, and $\PQMA$ denote the set of languages decidable by a polynomial-time deterministic Turing machine with access to, respectively, logarithmically many adaptive queries to a \QMA oracle, polynomially many parallel queries to a \QMA oracle, and polynomially many adaptive queries to a \QMA oracle, respectively. It is known that $\PQMAlog=\PQMApar$~\cite{GPY20}. The classes $\EXPpar$ and $\PQMAexp$ are defined analogously, except the former has an exponential-time deterministic Turing machine as its base (and hence can make exponentially many parallel queries), and where the oracle is for $\QMAEXP$, respectively. Note that by ``queries to an oracle'', we mean feeding the oracle an input to a complete problem for the oracle's class (e.g.\ an instance of the Local Hamiltonian problem for a \QMA oracle).

\section{Encoding Computation into Measurement Problems on Low Energy Spaces}\label{sec:apx-sim}

\subsection{Overview and Circuit-to-Hamiltonian Mappings}\label{sec:circuit-to-ham}
Given a quantum circuit $U = U_TU_{T-1}\ldots U_1$ acting on some Hilbert space $\mathcal H_q$, one can write down a so-called \emph{history state} Hamiltonian
\[
    \Hprop = \sum_{t=0}^{T-1} h_t
    \quad\text{where}\quad
    h_t \coloneqq  \sum_{\ket e} \big( \ket t \ket e - \ket{t+1} U_t \ket e \big)\big(\bra t \bra e - \bra{t+1} \bra e U_i^\dagger\big),
\]
where we sum over a basis $\{ \ket e \}$ for $\mathcal H_q$.
Then $\Hprop$ has a degenerate ground space spanned by states of the form
\begin{equation}\label{eq:history-state}
    \ket{\psi} = \frac{1}{\sqrt T} \sum_{t=0}^T \ket t \ket{\psi_t}
    \quad\text{where}\quad
    \ket{\psi_t} \coloneqq  U_tU_{t-1}\cdots U_1\ket{\psi_0}
\end{equation}
for any $\ket{\psi_0} \in \mathcal H_q$.
These history states---based on an idea by \cite{Feynman1986}---are a fundamental ingredient in deriving Hamiltonian complexity results, such as Kitaev's seminal proof of \QMA-hardness of the local Hamiltonian problem \cite{Kitaev2002}.
Since $\Hprop$ and variants thereof translate quantum circuits and their evolution into the ground space of a Hamiltonian, they are often collectively called \emph{circuit-to-Hamiltonian} mappings; for an in-depth historic overview see \cite{Bausch2016}.

\newcommand\tbhead[2]{\vtop{\hbox{\strut #1}\hbox{\strut #2}}}
\begin{table}[t]
    \centering
    \hspace*{-8mm}
    \begin{tabular}{rlllllll}
        \toprule
         & \tbhead{universal}{for} & \tbhead{scaling}{of $T(N)$} & symmetry & \tbhead{interaction}{graph} & locality & \tbhead{local}{dimension} & \tbhead{instance $\ell$}{encoded in} \\
         \midrule
        \cite{Kitaev2002} & \multirow{5}{*}{\QMA} & \multirow{5}{*}{$\displaystyle\frac{1}{\poly}$} & \multirow{5}{*}{none} & arbitrary & 5 & 2 & \multirow{5}{*}{local terms} \\
        \cite{Kempe2003a} & & & & arbitrary & 3 & 2 &  \\
        \cite{Kempe2006} & & & & arbitrary & 2 & 2 & \\
        \cite{Oliveira2008} & & & & 2D planar & 2 & 2 & \\
        \cite{Aharonov2009} & & & & 1D line & 12 & 2 & \\
        \midrule
        \cite{Bravyi_Bessen_Terhal_2006}  & \StoqMA & $\displaystyle\frac{1}{\poly}$ & none & arbitrary  & 2 & 2 & local terms \\
        \midrule
        \cite{Gottesman2009} & \multirow{3}{*}{\QMAEXP} & \multirow{3}{*}{$\displaystyle\frac{1}{\poly}$} & \multirow{3}{*}{translational} & 1D line & 2 & huge & \multirow{3}{*}{system size} \\
        \cite{Bausch2016} & & & & 1D line & 2 & 42 & \\
        \cite{Bausch2017} & & & & 3D fcc lattice & 4 & 4 & \\
        \midrule
        \cite{Fefferman2016}$^\dagger$ & \multirow{2}{*}{\PreciseQMA} & \multirow{2}{*}{$\displaystyle\frac{1}{\exp}$} & none & arbitrary & 3 & 2 & local terms \\
        \cite{Kohler2020}$^\ddagger$ & & & translational & 1D line & 2 & 42 & system size \\
         \bottomrule
    \end{tabular}
    \caption{History state Hamiltonians from existing literature that satisfy \cref{def:local-mapping} with varying characteristics. Shown the complexity class for which they encode a witness in the ground state energy, size $T$ of history state given in \cref{eq:history-state} in terms of the system size $N$, further properties of the Hamiltonian, as well as their dependence on the problem instances $\ell$.\\$^\dagger$ references \cite{Kempe2003a} as the underlying construction; $^\ddagger$ references \cite{Bausch2016}.}
    \label{tab:uwHs}
\end{table}

As it is currently written, there is no way for $\Hprop$ alone to ensure the ground state is spanned by states of the form \cref{eq:history-state}, but such that $\ket{\psi_0}$ is initialised to a specific state in any way.
Yet in order to raise a history state Hamiltonian to a construction which can encode a useful computation within the ground state, one requires the ability to constrain the input of the computation, e.g.\ to prepare ancillas in a predefined state.
This is readily achieved by adding an input projector conditioned on the clock register being in state $\ket0$---e.g.\ $\ketbra0 \otimes \Pin$ would restrict $\ket{\psi_0} \in \ker(\Pin)$.
In this fashion, one can e.g.\ encode a \BQP-computation into the unique ground state of a history state Hamiltonian.
The combination of such an input penalty and history state Hamiltonian we denote with $\Hw \coloneqq  \Hprop + \Pin$.

If only the input is constrained, the ground state energy of $\Hw$ is still exactly zero, assuming $\ker(\Pin) \neq \emptyset$.
In this case, for all correctly-initialised $\ket{\psi_0} \in \ker(\Pin)$, $\ker(\Hw)$ contains all correctly initialised history states of the form given in \cref{eq:history-state}.
For this null space spanned by correctly-initialised history states, we write $\Null{\Hw} \coloneqq  \ker(\Hw)$.
The spectral gap of $\Hw$---i.e.\ the difference in energy between ground space and first excited state---is then $\sim 1/T^2$, where $T$ is the size of the history state in \cref{eq:history-state}, which directly corresponds to the length of the embeddable computation \cite{Bausch2016a}.

By adding a similar \emph{output} penalty $P_\mathrm{out}$ that raises the energy for \NO answers of an embedded computation, one can achieve that the ground state energy $\lmin(\Hw + P_\mathrm{out})$ either lies below a threshold $\alpha$ for a \YES answer---or above $\beta$ in case of a \NO outcome. We emphasise that this new ground state energy is now generally no longer zero, and the sum of both Hamiltonians is generally frustrated.

And finally, if we leave part of the input to the circuit $U$ encoded in $\Hw(V)$ unconstrained, but we add an output penalty, this enforces the ground state of $\Hw + P_\mathrm{out}$ to have a state initialised to an input that minimises overlap with the output penalty; this corresponds to the encoding of a nondeterministic computation, and the input is then e.g.\ a witness for a \QMA verifier.
In essence, this is how \cite{Kitaev2002} proved that the local Hamiltonian problem is \QMA-hard, i.e.\ the question of approximating the ground state energy of a local Hamiltonian, where the precision to which one needs to approximate said energy scales with the promise gap of the problem.
Generally, for a \BQP computation, the difference $|\beta-\alpha| = 1/\poly$ in the instance size, which translates to a ground state approximation precision of $1/\poly$ in the system size (cf.\ \cref{sec:hamiltonian-embedding}).

We will avoid rolling out the entire theory of history state Hamiltonians in this work and go a more abstract route.
More precisely, we will offload the question of proving $\EXP^{\parallel\QMA}$-hardness of TI-APX-SIM to a quantum circuit that simulates the oracle calls.
Starting from a rigorous definition of this type of computation we wish to encode, we will then require two mild assumptions on the type of circuit-to-Hamiltonian mapping used to translate the circuit to a ground state (the ability to access two outputs of the computation locally).
Given the mapping has these two properties, $\EXP^{\parallel\QMA}$-hardness will follow.

\subsection{Hamiltonians with a Universal Ground State}\label{sec:universal-gs-hams}
As mentioned in the previous section, the specifics of the Hamiltonian used to encode computation is, for our purposes, irrelevant: whether it takes the shape of a history state Hamiltonian (\cref{eq:history-state}), features a more complicated clock construction such as in \cite{Aharonov2009,Gottesman2009,Breuckmann_Terhal_2014,Bausch2016}, or is something completely different is not important.

More concretely, what we require of the circuit-to-Hamiltonian mapping is the ability to single out a low-energy subspace that encodes valid computations---meaning either a single ground state, multiple ground states, or multiple eigenstates with energy below some cutoff, all of which are to represent valid computations under the chosen circuit-to-Hamiltonian mapping.
This valid low-energy subspace needs to be separated from the rest of the spectrum, which may include states which do not encode a valid computation. Beyond this fundamental requirement, and in order to translate a circuit that simulates oracle queries into a ground state energy problem, the circuit-to-Hamiltonian mapping needs to have two high-level properties (formal requirements in \cref{def:local-mapping}):
\begin{enumerate}
    \item The possibility to inflict an energy penalty onto the output of a computation, which means that we can break the low-energy subspace of valid computations into a space $V_\YES$ and $V_\NO$ that correspond to computations that evolve correctly and incorrectly, respectively; and such that the largest eigenvalue of $V_\YES$ is below the smallest eigenvalue of $V_\NO$.
    \item Locality in the encoding, in the sense that neighbouring qubits in the circuit at some time-step need to map to neighbouring spins in the many-body system. 
\end{enumerate}
Both these assumptions are mild, and readers familiar with circuit-to-Hamiltonian mappings will immediately recognise that both of these are generally satisfied.

\begin{definition}[Conformity]\label{def:conforming}
    Let $H$ be a Hamiltonian with some well-defined structure $S$ (such as $k$-local interactions, all constraints drawn from a fixed finite family, with a fixed geometry such as 1D, translational invariance, etc). We say a Hermitian operator $P$ \emph{conforms} to $H$ if $H+P$ also has structure $S$.
\end{definition}

\noindent For example, if $H$ is a $1$D translationally invariant Hamiltonian on qubits, then $P$ conforms to $H$ if $H+P$ is also $1$D translationally invariant.

We now define an abstract notion of local circuit-to-Hamiltonian mappings; intuition given subsequently.

\begin{definition}[Local Circuit-to-Hamiltonian Mapping]\label{def:local-mapping} Let $\spa{X}=(\C^2)^{\otimes m}$ and $\spa{Y}=(\C^2)^{\otimes n}$. A map $\Hw:\unitary{\spa{X}}\mapsto\herm{\spa{Y}}$ is a \emph{local circuit-to-Hamiltonian mapping} if, for any $T>0$ and any sequence of $2$-qubit unitary gates $U=U_TU_{T-1}\cdots U_1$, the following hold:
\begin{enumerate}
    \item (Overall structure) $\Hw(U)\succeq0$ has a non-trivial null space, i.e.\ $\Null{\Hw(U)}\neq \emptyset$. This null space is spanned by (some appropriate notion of) ``correctly initialized computation history states'', i.e.\ with ancillae qubits set ``correctly'' and gates in $U$ ``applied'' sequentially.
    \item (Local penalization and measurement) Let $q_1$ and $q_2$ be the first two output wires of $U$ (each a single qubit), respectively. Let $\spre\subseteq \spa{X}$ and $\spost\subseteq\spa{Y}$ denote the sets of input states to $U$ satisfying the structure enforced by $\Hw(U)$ (e.g.\ ancillae initialized to zeroes), and null states of $\Hw(U)$, respectively. Then, there exist projectors $M_1$ and $P_T$, projector $M_2$ conforming to $\Hw(U)$, and a bijection $\f:\spre\mapsto\spost$, such that for all $i\in\set{1,2}$ and $\ket{\phi}\in\spre$, the state $\ket{\psi}=\f(\ket{\phi})$ satisfies
        \begin{equation}\label{eqn:sim}
            \Tr\big(\ketbra{0}{0}_i(U_TU_{T-1}\ldots U_1)\ketbra{\phi}{\phi}(U_TU_{T-1}\ldots U_1)^\dagger\big) = \Tr\big(\ketbra{\psi_T} M_i\big),
        \end{equation}
        where $\ket{\psi_T} = P_T \ket\psi / \enorm{P_T \ket\psi}$ is $\ket\psi$ postselected on measurement outcome $P_T$ (we require $P_T \ket\psi\neq 0$). Moreover, there exists a function $\g:\nats\times\nats\mapsto\reals$ such that
        \begin{align}
            \enorm{P_T\ket{\psi}}^2&=\g(m,T)\text{ for all }\ket{\psi}\in \Null{\Hw(U)},\label{eqn:sim3}\\
            M_i&= P_T M_i P_T\label{eqn:sim2}.
        \end{align}
\end{enumerate}
The map $\Hw$, and all operators/functions above ($M_1$,$M_2$,$P_T$,$\f$,$\g$) are computable given $U$.
\end{definition}
\noindent\emph{Intuition.} We stress the following about \cref{def:local-mapping}:
    \begin{enumerate}
        \item It places no restrictions on the efficiency of computing $\Hw$, $M_1$, $M_2$, $P_T$, $f$, $g$. Any such resource-restriction will later be application-dependent. Also, there is no restriction to how ``correctly initialized computation history states'' are encoded, other than the existence of the bijection $f$ between ``states initialized according to the rules of $U$'' and null states of $\Hw(U)$ satisfying the properties above.

        \item The term ``local'' in ``local circuit-to-Hamiltonian mapping'' is not referring to the locality of $\Hw(U)$ (e.g.~\cref{def:local-mapping} also allows $\Hw(U)$ to be a non-local but sparse Hamiltonian). Rather, it refers to the fact that local measurements on the output qubits of $U$ can be simulated via local measurements on the ground space of $\Hw(U)$ (up to postselection) via bijection $\f$ and \cref{eqn:sim}. Also, there is no restriction \emph{a priori} on $\g(m,T)$, other than $\g(m,T)\neq 0$ for all $m,T\geq0$.
        \item For our applications, we only require simulation of local measurements on output qubits $1$ and $2$ of $U$; hence the phrasing of Point 2 in \cref{def:local-mapping} (which, we note, makes this notion of local simulation milder than that used in universality results such as \cite{CMP,Kohler2020,Piddock2020}).
        Intuitively, the first qubit will encode whether $U$ accepts or rejects (i.e.\ outputs $1$ or $0$, respectively). In the setting of APX-SIM, $M_1$ will play the role of observable $A$ from \cref{def:apx}; as such, $M_1$ need not necessarily conform to $\Hw(U)$. In contrast, $M_2$ will be used to penalize a certain ``flag qubit'' in our construction of \cref{lem:augmented-circuit}, and will be part of the Hamiltonian $H$ from \cref{def:apx}; as such, we require $M_2$ to conform to $\Hw(U)$. Finally, there is nothing particular about the choice of $\ketbra{0}{0}_i$ in \cref{eqn:sim}; any fixed single-qubit projector would suffice.

        \item \Cref{eqn:sim3} says all null states of $\Hw(U)$ have the same weight on the final time step, $T$. This is used, for example, in \cref{lem:low-energy-as-many-yes-as-possible} when we wish to exchange one $\ket{\phi'}\in\spost$ with another state $\ket{\phi}\in\spost$, and say something meaningful about the computation encoded in $\ket{\phi'}$ versus $\ket{\phi}$.

        \item For the case of Kitaev's circuit-to-Hamiltonian construction~\cite{KSV02}, $P_T$ projects the clock register down to $\ket{T}$, and is $3$-local since there the clock is encoded in unary; translationally-invariant history state Hamiltonians such as from \cite{Gottesman2009,Bausch2016} are readily modified to allow $P_T$ to be one-local, by increasing the local dimension slightly and adding a final step to the Turing machine that drives the clock construction, making the halting configuration locally recognizable.
        In all history state Hamiltonians, $\g(m,T)=\enorm{P_T \ket\psi}^2=1/(T+1)$ for $\ket\psi$ a uniform history state.\footnote{Modified history state Hamiltonians with non-uniform superpositions over the time steps such as in \cite{Bausch2016a} scale accordingly with the weight on the last timestep.}
        Finally, \cref{eqn:sim2} captures the fact that $M_i$ has a clock register projecting onto $\ketbra{T}{T}$, and so is supported solely on the Hilbert space corresponding to time $T$ (i.e.\ projected onto by $P_T$).
    \end{enumerate}

\subsection{Oracle Queries as Subroutines}\label{sec:oracle-subroutines}

Throughout this and subsequent subsections, we will often use $\PQMApar$ as a concrete guiding example; however, our constructions (such as \cref{lem:augmented-circuit}) are designed to apply to a much more general set of classes of form $\DQ$ (see \cref{def:dTM,def:qV}).  With this in mind, recall that $\PQMApar$ corresponds to a polynomial-time classical deterministic TM with polynomially many parallel queries to a $\QMA$ oracle. As a first step, in this section we map an arbitrary $\PQMApar$ computation to a single ``QMA verification circuit''. We stress, however, that this is just part of our overall reduction; indeed, the construction of \cref{lem:augmented-circuit} by itself is \emph{not sound} (i.e.\ NO instances of $\PQMApar$ are not mapped to NO ``instances of QMA''). In~\cref{sec:hamiltonian-embedding}, we take the second step of carefully embedding the circuit produced by \cref{lem:augmented-circuit} into an appropriate circuit-to-Hamiltonian construction to obtain $\PQMApar$-hardness results. \Cref{sscn:other} then uses the generality of our constructions here to show how similar results follow for a wide variety of settings, such as the translationally invariant and small promise gap settings.

\paragraph{Reducing $\PQMApar$  to a single quantum verification circuit.} We now give a generic construction for embedding an arbitrary $\PQMApar$ circuit into a \emph{single} quantum verification circuit (i.e.\ an ``un-sound QMA circuit''). In doing so, as explained in \cite{GY16_2}, we must allow for the possibility that QMA oracle queries may be \emph{invalid}, in that they might violate the QMA promise gap. This entails accounting for two potential obstacles: first, from an ``oracle query'' perspective, a QMA oracle fed an invalid QMA query may output $0$ or $1$ arbitrarily. Second, from a ``QMA verification circuit'' perspective, we cannot assume anything about the acceptance probability of the verifier when fed the optimal proof for an invalid instance, even after standard error reduction. Specifically, for a valid YES (resp., valid NO) QMA instance, standard error reduction implies the optimal proof is accepted with probability exponentially close to $1$ (resp., $0$); for an invalid instance, this optimal probability may still be $1/2$. The first of these obstacles requires one to define a valid $\PQMApar$ machine's final output bit to be independent of how invalid queries are answered~\cite{Goldreich_2006} (otherwise, the output of the $\PQMApar$ machine is not necessarily well-defined on a given input).
Note that when the QMA oracle is replaced with a class of decision languages such as NP (rather than promise problems), such issues do not arise.

We now give the construction in \cref{lem:augmented-circuit}. Since the aim of this paper is \emph{generic} reductions for lifting hardness results for one class of problems to another, we state the following lemma rather abstractly. For this, we first require two definitions, the first of which is standard.
\begin{definition}(Deterministic class)\label{def:dTM}
    A set $C$ of languages is a \emph{deterministic class} if, for any language $L\in C$, there exists a deterministic Turing machine $M$ which can decide $L$ under the resource constraints specified by $C$. Formally, given any input $x\in\set{0,1}^n$, $M$ halts after using $R(n)$ resources (where $R$ may specify bounds on time or space), and accepts if $x\in L$ or rejects if $x\not\in L$.
\end{definition}
\noindent Standard examples of deterministic classes include \PP, \PSPACE, and \EXP.

\begin{definition}(Existentially quantified quantum verification class (\qvc))\label{def:qV}
    A set $C$ of promise problems is an \emph{existentially quantified quantum verification class} if any promise problem $A=(\ayes,\ano,\ainv)$ in $C$ satisfies the following. There exist computable functions $f,g,h:\nats\mapsto\nats$, as well as a deterministic Turing machine $M$ which, for any input $x\in\set{0,1}^n$, uses $R(n)$ resources to produce a quantum verification circuit $V$ (consisting of $1$- and $2$-qubit gates) and thresholds $c,s\in\reals^+$ such that $c-s>1/h(n)$. Here, $R(x)$ refers to resources such as time, space, etc, as required by $C$. The circuit $V$ takes in a quantum proof $\ket{\psi}$ on $f(n)$ qubits, $g(n)$ ancilla qubits initialized to all zeroes, and has a designated output qubit, such that:
    \begin{itemize}
        \item (YES case) If $x\in \ayes$, there exists a quantum proof $\ket{\psi}$ on $f(n)$ qubits such that measuring the output qubit of
        $
            V\ket{\psi}\ket{0\cdots 0}
        $
        in the standard basis yields $1$ with probability at least $c$.
        \item (NO case) If $x\in \ano$, for all quantum proofs $\ket{\psi}$ on $f(n)$ qubits, measuring the output qubit of
        $
            V\ket{\psi}\ket{0\cdots 0}
        $
        in the standard basis yields $1$ with probability at most $s$.
    \end{itemize}
    Without loss of generality, we assume the output qubit of $V$ is the first wire exiting $V$.
\end{definition}
\noindent For example, for \QMA, $R(n)$ denotes a polynomial bound (with respect to $n$) on the number of time and space steps taken by $M$, while $f$ and $g$ are fixed polynomials, $c=2/3$ and $s=1/3$. \NP is similar, except $c=1$, $s=0$, $V$ is diagonal in the standard basis and measures its proof in the standard basis (via the principle of deferred measurement~\cite{Nielsen_and_Chuang}).
In this way, classes such as \NP, \NEXP, \QCMA, \QMA, and so forth are examples of a \qvc.

We are now ready to state the following abstract lemma. As a concrete guiding example, consider $\class{D}=\PP$ and $\class{Q}=\QMA$ below.
\begin{lemma}\label{lem:augmented-circuit}
Let $x\in\set{0,1}^n$ be an instance of a problem in $\DQ$, where $\class{D}$ is a deterministic class (such as \PP) and $\class{Q}$ is a \qvc\
(such as \QMA). Let $U$ be a $\DQ$ machine (we will typically think of $U$ as a circuit with access to an oracle for \class{Q}) making $m$ parallel queries to a \class{Q}-oracle to decide $x$. Then, there exists a quantum circuit $V$ with the following properties:
\begin{enumerate}
    \item Given $x$ and $U$, $V$ can be computed in time polynomial in the sizes of $U$ and the verifier for $\class{Q}$ (both of which may be viewed as quantum circuits consisting of $1$- and $2$-qubit gates).
    \item $V$ takes as input $m+2$ registers: n input register $A$ containing $x\in\set{0,1}^n$, $m$ proof registers $B_i$ containing a joint quantum proof $\ket{\wm}$ (where ideally $\ket{\wm}=\ket{w_1}\ket{w_2}\cdots\ket{w_m}$), with register $B_i$ to be verified by a \class{Q}-circuit $V_i$ (see \cref{fig:augmented-circuit}), and an ancilla register $C$ which is assumed to be initialized to the all-zeroes state. Without loss of generality, we assume each verifier $V_i$ has the same completeness and soundness parameters $c$ and $s$, respectively.

    \item $V$ has two designated output wires: $\qout$ is supposed to encode the output of $U$, and $\qflag$ the number of $\class{Q}$ queries made by $U$ which were YES instances. (Without loss of generality, these are the first and second wires exiting $V$, respectively.) Formally, suppose $V$ is fed the joint proof $\ket{\wm}$, and let $\ket{\psi}$ denote the output state of $V$.


        Let sets $S_0$ and $S_1$ partition $\set{0,1}^m$ such that the \class{D} machine underlying $U$ rejects (resp. accepts) when given a string of query responses $y\in S_0$ (resp. $y\in S_1$). Define
            \[
                \pyw\coloneqq \Pr\left(\bigwedge_{i=1}^m V_i \text{ outputs } y_i \bigg\vert \ket{\wm}\right).
            \]
            Note that in the ideal case $\ket{\wm}=\ket{w_1}\cdots\ket{w_m}$, $\pyw$ simplifies to
            \[
                \pyw=\prod_{i=1}^m\Pr(V_i \text{ outputs } y_i\mid \ket{w_i}).
            \]
            In both cases,
            \begin{eqnarray}
                \trace\left(\ketbra{\psi}{\psi}\cdot\ketbra{1}{1}_{\qout}\right)&=&\sum_{y\in S_1}\pyw\label{eqn:out}\\
                \trace\left(\ketbra{\psi}{\psi}\cdot\ketbra{1}{1}_{\qflag}\right)&=&\sum_{y\in \set{0,1}^m}\pyw\cdot \sin^2\left(\angf\cdot\HW(y)\right)\label{eqn:flag}.
            \end{eqnarray}
where $HW(y)$ is the Hamming weight of $y$.
\end{enumerate}
\end{lemma}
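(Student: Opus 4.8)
The plan is to establish \cref{lem:augmented-circuit} by an explicit coherent construction: peel the $\DQ$ computation apart into its ``deterministic skeleton'' and its ``quantum verification layer'', implement each as a circuit of $1$- and $2$-qubit gates, and then append a single-qubit Hamming-weight counter. Concretely, since $\textup{D}$ is a deterministic class (\cref{def:dTM}), running the $\DQ$ machine $U$ on $x$ yields the $m$ parallel query strings $z_1,\dots,z_m$ together with a classical reversible function $\f\colon\zo^m\to\zo$ that computes the final accept/reject bit from the $m$ oracle answers; by the lemma's conventions $\f^{-1}(1)=S_1$ and $\f^{-1}(0)=S_0$. Since $\textup{Q}$ is a \qvc\ (\cref{def:qV}), each string $z_i$ yields a verification circuit together with thresholds $(c_i,s_i)$; applying standard error reduction to a common pair $(c,s)$ (and padding the shorter circuits with identity gates so all have equal size) produces verifiers $V_1,\dots,V_m$ over $1$- and $2$-qubit gates, where $V_i$ reads proof register $B_i$ and a block of ancillae drawn from $C$ and has designated output wire $o_i$. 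The circuit $V$ is then: (i) apply $V_1\otimes\cdots\otimes V_m$ to the disjoint registers $B_1,\dots,B_m$ and their ancillae; (ii) reversibly write $\qout:=\f(o_1,\dots,o_m)$ onto a fresh $C$-qubit using Toffoli and CNOT gates, each decomposed into $1$- and $2$-qubit gates; (iii) introduce a fresh $C$-qubit $\qflag$ in state $\ket{0}$ and, for $i=1,\dots,m$ in turn, apply the two-qubit gate ``controlled on $o_i$, rotate $\qflag$ by $R(\theta)$'', where $R(\theta)\colon\ket{0}\mapsto\cos\theta\,\ket{0}+\sin\theta\,\ket{1}$, $\ket{1}\mapsto-\sin\theta\,\ket{0}+\cos\theta\,\ket{1}$ and $\theta=\angf$; (iv) relabel wires so that $\qout,\qflag$ are the first two output wires.

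To verify \cref{eqn:out,eqn:flag}, I would expand the state produced after stage (i), namely $\ket{\Phi}:=(V_1\otimes\cdots\otimes V_m)\big(\ket{\wm}\ket{0\cdots0}_C\big)$, in the computational basis of the output wires $o_1,\dots,o_m$: $\ket{\Phi}=\sum_{y\in\zo^m}\ket{y}_{o_1\cdots o_m}\otimes\ket{\xi_y}$ with $\enorm{\ket{\xi_y}}^2=\pyw$, which is precisely the probability---as defined in the lemma---that a standard-basis measurement of $o_1,\dots,o_m$ returns $y$, and which factorizes as $\prod_{i}\Pr(V_i\text{ outputs }y_i\mid\ket{w_i})$ when $\ket{\wm}$ is a product state (the $V_i$ act on disjoint registers). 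Stage (ii) sets $\qout$ to a classical function of the $o_i$ that commutes with this basis measurement, so for the final output state $\ket{\psi}$ of $V$ one gets $\trace\big(\ketbra{\psi}{\psi}\,\ketbra{1}{1}_{\qout}\big)=\sum_{y:\f(y)=1}\pyw=\sum_{y\in S_1}\pyw$, which is \cref{eqn:out}. For \cref{eqn:flag}, the stage-(iii) rotations are each controlled on an $o_i$ wire and hence leave those wires in the computational basis; conditioned on $o_1\cdots o_m=y$, the qubit $\qflag$ is mapped to $R(\theta)^{\HW(y)}\ket{0}=R(\theta\cdot\HW(y))\ket{0}$ since $Y$-rotations compose additively, so $\Pr(\qflag=1\mid y)=\sin^2(\theta\,\HW(y))$. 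Averaging over $y$ with weights $\pyw$---the cross terms vanishing because the $\ket{y}_{o_1\cdots o_m}$ are mutually orthogonal---yields $\trace\big(\ketbra{\psi}{\psi}\,\ketbra{1}{1}_{\qflag}\big)=\sum_{y\in\zo^m}\pyw\,\sin^2\!\big(\angf\cdot\HW(y)\big)$, which is \cref{eqn:flag}.

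For Property~1, extracting $z_1,\dots,z_m$ and $\f$ from $U$ costs $\poly(|U|)$, producing and amplifying each $V_i$ costs time polynomial in the size of the $\textup{Q}$-verifier, the reversible circuit for $\f$ has size $\poly(|U|)$, and the counter in stage (iii) adds only $m$ two-qubit gates; Properties~2 and~3 (the register structure and the roles of $\qout,\qflag$) are then immediate from the construction and the wire-labelling conventions (cf.\ \cref{fig:augmented-circuit}).

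I expect there to be no genuine obstacle here, but three points need care. First, everything is done \emph{coherently}: the $V_i$ run on possibly entangled quantum proofs and the ``controls'' $o_i$ are in superposition, so the two identities must be obtained from the computational-basis expansion above (orthogonality of the $\ket{y}_{o_1\cdots o_m}$ is what kills the off-diagonal terms) and they hold for an \emph{arbitrary} joint proof $\ket{\wm}$, not merely for product proofs. Second, the reduction to a common $(c,s)$ must be achieved by standard amplification without spoiling gate locality or the polynomial size bound. Third, the angle $\theta=\angf$ is chosen so that $\theta\cdot\HW(y)\in[0,\sqrt{3}/2]\subset[0,\pi/2)$ for every $y$---a fact not needed for \cref{lem:augmented-circuit} itself, but which makes $h\mapsto\sin^2(\theta h)$ strictly increasing on $\{0,\dots,m\}$ and underlies the soundness analysis of the later sections. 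Finally, I would stress, as the paper does, that \cref{lem:augmented-circuit} asserts \emph{no} soundness whatsoever: NO instances of $\DQ$ need not be mapped to anything meaningful, which is exactly why the subsequent Hamiltonian embedding with its flag-qubit penalty is required.
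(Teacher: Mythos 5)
Your construction is exactly the paper's: run the verifiers $V_i$ coherently on the (possibly entangled) joint proof, compute $\qout$ via the reversible classical circuit $U'$, and accumulate $\qflag$ with $m$ controlled rotations by $\angf$, so that conditioned on query-outcome string $y$ the flag qubit is $R(\angf\cdot\HW(y))\ket{0}$; the two identities then follow from the computational-basis expansion with weights $\pyw$. Your write-up is in fact slightly more careful than the paper's (which writes the post-verifier state with scalar amplitudes $\alpha_y$ rather than unnormalized branch vectors $\ket{\xi_y}$), but the argument is the same, so there is nothing further to add.
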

\begin{proof}
\begin{figure}[t]
    \centering
    \begin{tikzpicture}
    \node[right] at (0, 0) {%
    \begin{quantikz}
	      \lstick{$\ket{q_1}$}       & \qw & \gate[wires=2]{\text{verifier}\ V_1\ \,}\gateoutput{$\mathrm{out}_1$} & \ctrl{6}           & \qw\ \ldots\ & \qw                & \gate[6,nwires={2,3,5}]{\quad U'\quad} &                                              \\
	      \lstick{$\ket{w_1}$}       & \qw &                                                                       &                    &              &                    &                                                &                                              \\
	                                 &                   &                                                                       &                    &              &                    &                                                &                                              \\
	      \lstick{$\ket{q_m}$}       & \qw & \gate[wires=2]{\text{verifier}\ V_m}\gateoutput{$\mathrm{out}_m$}     & \qw                & \qw\ \ldots\ & \ctrl{3}           &                                                &                                              \\
	      \lstick{$\ket{w_m}$}       & \qw &                                                                       &                    &              &                    &                                                &                                              \\
	\lstick{$\ket{x}$} & \qw\qwbundle{n} & \qw                                                                   & \qw                & \qw\ \ldots\ & \qw                &                                                & \qw\rstick{$\ket\qout$} \\
	       \lstick{$\ket{0}$}        & \qw   & \qw                                                                   & \gate{R(\ang)} & \qw\ \ldots\ & \gate{R(\ang)} & \qw                                            & \qw\rstick{$\ket\qflag$}
    \end{quantikz}
    };
    \draw[white, line width=1mm] (2, 1.1) -- (14, 1.1);
    \draw[white, line width=1mm] (2, 1.3) -- (14, 1.3);
    \draw[white, line width=1mm] (2, 0.9) -- (14, 0.9);
    \node at (3.4, 1.2) {$\vdots$};
    \node at (1.4, 1.2) {$\vdots$};
    \end{tikzpicture}
    \caption{The circuit $V$ constructed in \cref{lem:augmented-circuit}. The $V_i$ are the \class{Q}-verifiers, each taking input $\ket{q_i}$ and proof/witness $\ket{w_i}$. (In principle, states $\ket{w_i}$ can be entangled as one joint state $\ket{\wm}$; this is dealt with in the proof of \cref{lem:low-energy-as-many-yes-as-possible}.) $U'$ denotes the host postprocessing circuit in the original $\DQ$ circuit $U$, which takes the \class{Q}-query responses and outputs $U$'s final answer. The gates $R(\ang)$ denote a rotation in the standard basis of angle $\ang$. For simplicity, we have not depicted any preprocessing needed by $U$ to compute the inputs $\ket{q_i}$ to the \class{Q} verifiers $V_i$, nor have we depicted the ancilla register $C$. For clarity, as a black box, the circuit $V$ takes in the input to the $U$ circuit, the joint proof $\ket{\wm}$, and the ancilla register $C$.
	}
    \label{fig:augmented-circuit}
\end{figure}
As depicted in \cref{fig:augmented-circuit}, $V$ is constructed by translating the \class{D} machine underlying $U$ into a quantum circuit $U'$, and then ``simulating'' the $m$ (parallel) oracle calls $U$ makes as sub-routines, in the sense of executing their \class{Q}-verification circuits $V_i$ on the relevant subset of $\ket{\wm}$ given by  $\ketbra{w_i}=\Tr_{B_j\neq B_i}[\ketbra{\wm}]$.
Note that $U'$ is diagonal in the standard basis, i.e.\ is a classical circuit, and computes the inputs to the \class{Q}-verification circuits $V_i$ on-the-fly; the lanes $\ket{q_i}$ in \cref{fig:augmented-circuit} indicating the inputs to the respective verification subroutines are product states diagonal in the computational basis, i.e.\ $\ket{q_1}\ket{q_2}\cdots\ket{q_m}$.

The gate $R(\theta)$ in \cref{fig:augmented-circuit} denotes the rotation matrix
\[
    R(\theta)=\left(
                \begin{array}{cc}
                  \cos\theta &  -\sin\theta\\
                  \sin\theta & \cos\theta \\
                \end{array}
              \right).
\]
To formally state the action of the overall circuit $V$, let $X,Y,Z$ denote the input registers to $U'$ holding input $x\in\set{0,1}^m$, query response string $y=y_1\cdots y_m$, and ancilla (initialized to all zeroes), respectively. Since $U'$ is a classical circuit, without loss of generality it maps any
\[
    \ket{x}_X\ket{y}_Y\ket{0\cdots 0}_Z \mapsto \ket{x}_X\ket{y}_Y\ket{0\cdots 0f(y)}_Z,
\]
where $f(y)$ is the output of $U'$ (i.e.\ the \class{D} machine) given query response string $y$.
If we now let $F$ denote the flag qubit register, the output $\ket{\psi}$ of $V$ is given by
\begin{equation}
    \ket{\psi}=\sum_{y\in\set{0,1}^m}\alpha_y\ket{x}_X\ket{y}_Y\ket{0\cdots 0 f(y)}_Z\left(\cos\left(\angf\cdot \HW(y)\right)\ket{0}+\sin\left(\angf\cdot \HW(y)\right)\ket{1}\right)_F,
\end{equation}
where for succinctness we omit registers such as those containing proof $\ket{\wm}$, since $U'$ does not act on these registers.
Here, $\HW(y)$ is the Hamming weight of $y$, and where
\[
    \abs{\alpha_y}^2= \Pr\left(\bigwedge_{i=1}^m V_i \text{ outputs } y_i \bigg\vert \ket{\wm}\right).
\]
This immediately yields Equations~(\ref{eqn:out}) and~(\ref{eqn:flag}).
\end{proof}

\noindent{\emph{Remarks.}} We now make several remarks regarding \cref{lem:augmented-circuit}, including for the case when \class{Q} is a class of \emph{promise} problems. For concreteness, in our discussion here we set $\class{D}=\PP$ and $\class{Q}=\QMA$, in which case the construction of \cref{lem:augmented-circuit} runs in polynomial-time in $n$.
\begin{enumerate}
    \item The construction of \cref{lem:augmented-circuit} and \cref{fig:augmented-circuit} says nothing about valid versus invalid \QMA queries (i.e.\ when $q_i$ is an invalid \QMA query string in \cref{fig:augmented-circuit}). This will be dealt with in \cref{sec:hamiltonian-embedding}.
    \item When we later use \cref{lem:augmented-circuit}, we will penalize the flag qubit register $F$ carefully so as to force all \emph{valid} queries to be answered correctly. This is, in a nutshell, the purpose of the flag qubit.
    \item It is tempting to assume in \cref{fig:augmented-circuit} that, without loss of generality, the optimal strategy of a (potentially dishonest) prover is to send the optimal proof $\ket{w_i}$ for verifier $V_i$---call this assumption $\Gamma$. However, recall that the outputs $y_i$ of verifiers $V_i$ are post-processed by $U'$ to determine the final output of $V$. Thus, it may be in a dishonest prover's interest to intentionally send a rejecting proof $\ket{w_i}$ to $V_i$, even if $q_i$ is a YES instance, since setting $y_i=0$ may result in $U'$ outputting $1$ (whereas perhaps setting $y_i=1$ caused $U'$ to output $0$). Indeed, if $\Gamma$ \emph{did} hold, then  \cref{lem:augmented-circuit} itself maps an arbitrary $\PQMApar$ computation to a single QMA instance (given by verifier $V$), implying $\PQMApar=\QMA$.
    This is unlikely to hold, as it would imply $\QMA=\coQMA$, since $\QMA,\coQMA\subseteq\PQMApar$ (the result $\QMA=\coQMA$ is not believed to be true).
\end{enumerate}

\subsection{Generic Hardness Constructions via a Lifting Lemma}\label{sec:hamiltonian-embedding}

We now proceed by encoding a family of $\DQ$ instances into a local circuit-to-Hamiltonian construction and carefully penalizing the \emph{flag} qubit (not the output qubit!) to encourage the ground space of $\Hw(V)$ to encode correct query answers made by the $\class{D}$ machine to the $\class{Q}$ oracle. Again, we will sometimes consider the concrete case of $\class{D}=\class{P}$ and $\class{Q}=\QMA$, i.e.\ $\PQMApar$, in our discussions, though our lemmas will be stated more generally.

Specifically, we embed the circuits $V$ constructed in \cref{lem:augmented-circuit} into a local circuit-to-Hamiltonian construction (\cref{def:local-mapping}), resulting in a Hamiltonian $\Hw(V)$.
As outlined in \cref{sec:circuit-to-ham} (and also guaranteed more generally by \cref{def:local-mapping}), $\Hw(V)\succeq 0$ is an unfrustrated Hamiltonian with non-empty null space, and since we have only added input penalties to correctly initialize the computation $V$, the ground space of $\Hw(V)$ is spanned by states encoding a computation with the properties stated in \cref{lem:augmented-circuit}. We now comment on two outstanding issues following \cref{lem:augmented-circuit}.

\paragraph{How to fix soundness for \cref{lem:augmented-circuit}.} Recall now that the reduction of \cref{lem:augmented-circuit} was not \emph{sound}, meaning a NO instance of $\PQMApar$ was not necessarily mapped to a ``NO QMA circuit'' $V$. (Formally, it is not necessarily true in the NO case that for all proofs $\ket{w_1}\otimes\cdots\otimes \ket{w_m}$, $V$ rejects.) This is because, intuitively, each of the verifiers $V_i$ in \cref{fig:augmented-circuit} has a potentially different implicit quantifier for its proof register $\ket{w_i}$ (namely, $\exists$ for YES queries $\ket{q_i}$ and $\forall$ for NO queries $\ket{q_i}$).

To address this, our aim here is to force the ground state to encode as many \QMA-query answers as possible as \YES. Intuitively, this is because \NO queries are not a problem; \emph{any} witness $\ket{w_i}$ fed to a verifier $V_i$ in this case is a ``bad'' witness, i.e.\ accepted with ``small'' probability. For a \YES query, however, a cheating prover has the choice of inputting either a ``good'' or ``bad witness'', thus flipping the output bit of verifier $V_i$, and potentially flipping the overall output of $V$. By penalizing the flag qubit, any \YES query $\ket{q_i}$ which incorrectly has $V_i$ outputting $\ket{0}$ is assigned an additional ``unnecessary'' energy penalty, lifting any such history state above the true ground state energy. (Crucially, we have no knowledge of the actual ground state energy itself, and this value encodes the number of \YES and \NO queries.) In short, adding the flag penalty will have the effect of ensuring that the ground space will be spanned by states encoding computation for which as many ``good'' witnesses $\ket{w_i}$ as possible are fed into $V$.\footnote{While we generally call states of the form \cref{eq:history-state} history states, adding a penalty will result in a ground state as superposition of the same vectors, but with weights biased away from the location of the penalty in the time register. The effect of penalties on the ground state and ground state energy of history state Hamiltonians is extensively studied in \cite{Bausch2016a}, and bounds on the ground state energy are generally addressed with a variant of Kitaev's geometrical lemma \cite{Kitaev2002}.}

\paragraph{What about invalid queries?} When $\class{Q}$ is a promise class (such as \QMA), the possibility of a $\DQ$ machine making invalid queries $\ket{q_i}$ to a $\class{Q}$-oracle arises. In such cases, recall that the $\class{Q}$-oracle may answer $0$ or $1$ arbitrarily, and the corresponding $\class{Q}$-verifier $V_i$ may accept the optimal proof $\ket{w_i}$ with arbitrary probability. In terms of our Hamiltonian construction, this means there may be multiple valid history states (i.e.~encoding all valid query answers correctly) with energy exponentially close to the ground state energy. However, as discussed in \cite{GY16_2}, there will be an energy gap between the low energy space (which answers all valid queries \emph{correctly}, yielding precisely the UNSAT penalty of the history state Hamiltonian, cf.\ \cite{Bausch2016a}), and any history state which encodes at least one valid query's answer \emph{incorrectly}. (We remark that the comparison with~\cite{GY16_2} is not entirely accurate, as there Ambainis' query Hamiltonian~\cite{Ambainis2013} construction is used, which is not employed here. For this reason, the discussion of this paragraph is for intuition; in our proofs here, we do not leverage the results of \cite{GY16_2} regarding invalid queries, but work from first principles.)

\paragraph{The Lifting Lemma.} The main lemma of this section which encompasses all of the open points raised is as follows.
Let $\Delta(A)$ denote the spectral gap of Hermitian operator $A$ (i.e.~the gap between the two smallest distinct eigenvalues of $A$):
\begin{align}
\Delta(A) \coloneqq  \lambda_1(A)-\lmin(A).
\end{align}

\begin{lemma}[Lifting Lemma for APX-SIM]\label{lem:lift}
Let $x\in\set{0,1}^n$ be an instance of an arbitrary $\DQ$ problem, $U$ a $\DQ$ machine deciding $x$, and $V$ the verification circuit output by \cref{lem:augmented-circuit}.
Fix a local circuit-to-Hamiltonian mapping $\Hw$, and assume the notation in \cref{def:local-mapping}.
Suppose, there exists a computable function $\alpha:\nats\mapsto\nats$, such that, for any $\epsilon$ satisfying
\begin{equation}
    0\leq \epsilon \leq \frac{1}{\alpha}\left(\frac{1}{\alpha}+\frac{12\norm{M_2}^2}{\Delta(\Hw(V))}\right)\left(\frac{8m^2}{3\g(m,T)}\right),
\end{equation}
Then the Hamiltonian
$H\coloneqq \alpha(n)\Hw(V)+ M_2$
satisfies:
\begin{itemize}
    \item If $x$ is a YES instance, then for all $\ket{\psi}$ with $\bra{\psi}H\ket{\psi}\leq\lmin(H)+\frac{1}{\alpha^2}$,
        \[
            \bra{\psi}M_1\ket{\psi}\leq g(m,T)\cdot m\cdot\max(1-c+\epsilon,s)+\frac{12\norm{M_2}}{\alpha\Delta}.
        \]
    \item If $x$ is a NO instance, then for all $\ket{\psi}$ with $\bra{\psi}H\ket{\psi}\leq\lmin(H)+\frac{1}{\alpha^2}$,
        \[
            \bra{\psi}M_1\ket{\psi}\geq g(m,T)\left(1-m\cdot\max(1-c+\epsilon,s)\right)-\frac{12\norm{M_2}}{\alpha\Delta}.
        \]
\end{itemize}

\end{lemma}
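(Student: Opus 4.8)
The plan is to establish the two bounds in three moves: (i) apply the Extended Projection Lemma to pass from an arbitrary low-energy state $\ket\psi$ of $H=\alpha\Hw(V)+M_2$ to a nearby null state $\ket{\psi'}$ of $\Hw(V)$ that is near-optimal for $M_2$ restricted to $\Null{\Hw(V)}$; (ii) decode $\ket{\psi'}$ through the bijection $f$ of \cref{def:local-mapping} into a proof $\ket{\wm}$ for the circuit $V$, rewrite $\bra{\psi'}M_1\ket{\psi'}$ and $\bra{\psi'}M_2\ket{\psi'}$ in terms of the query-response distribution $\pyw$ of \cref{lem:augmented-circuit} using \cref{eqn:sim,eqn:sim3,eqn:sim2}, and use the near-optimality of $M_2$ together with the robust exchange argument to force all valid \class{Q}-queries to be answered correctly up to a small error; (iii) read $\bra{\psi}M_1\ket{\psi}$ off the correctness of the encoded answers and the YES/NO status of $x$, transferring the bound from $\ket{\psi'}$ back to $\ket\psi$.

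For (i), write $H_1\coloneqq\alpha\Hw(V)\succeq 0$, $H_2\coloneqq M_2$, $\spa S\coloneqq\Null{\Hw(V)}$, $\Delta\coloneqq\Delta(\Hw(V))$, $J\coloneqq\alpha\Delta$, $K\coloneqq\norm{M_2}$, and take Projection-Lemma precision $\delta\coloneqq 1/\alpha^2$. Since $\spa S$ is the zero eigenspace of $H_1$ with gap $J$ to its complement, \cref{l:kkr} applies (we may assume $J>2K$, else $12\norm{M_2}/(\alpha\Delta)\ge 1$ and both bounds are vacuous) and yields $\ket{\psi'}\in\spa S$ with $\abs{\braket{\psi}{\psi'}}^2\ge 1-r^2$ and $\bra{\psi'}H\ket{\psi'}\le\lmin(H)+\delta+2Kr$, where $r\coloneqq(K+\sqrt{K^2+\delta(J-2K)})/(J-2K)$. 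As $\Hw(V)\ket{\psi'}=0$ and $\lmin(H)\le\lmin(M_2|_{\spa S})$, this gives $\bra{\psi'}M_2\ket{\psi'}\le\lmin(M_2|_{\spa S})+\delta+2Kr$, i.e.\ $\ket{\psi'}$ is $(\delta+2Kr)$-suboptimal for $M_2$ on $\spa S$; and a routine estimate ($\delta(J-2K)\le\Delta/\alpha$, $\sqrt{a+b}\le\sqrt a+\sqrt b$, $\alpha$ large) bounds the transfer error $2\sqrt{1-\abs{\braket{\psi}{\psi'}}^2}\le 2r\le 12\norm{M_2}/(\alpha\Delta)$.

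For (ii), set $\ket\phi\coloneqq f^{-1}(\ket{\psi'})\in\spre$ — membership in $\spre$ pins $\ket\phi$'s input/ancilla registers and leaves its proof registers as some $\ket{\wm}$. Running $V$ on $\ket\phi$, \cref{lem:augmented-circuit} gives $\Pr[\qout=1]=\sum_{y\in S_1}\pyw$ and $\Pr[\qflag=1]=\sum_y\pyw\sin^2(\ang\cdot\HW(y))$, so $\Pr[\qout=0]=\sum_{y\in S_0}\pyw$; combining \cref{eqn:sim} for $i\in\set{1,2}$ with \cref{eqn:sim2,eqn:sim3} — whence $\bra{\psi'}M_i\ket{\psi'}=\enorm{P_T\ket{\psi'}}^2\,\Tr(\ketbra{\psi_T}M_i)=g(m,T)\,\Tr(\ketbra{\psi_T}M_i)$ for $\ket{\psi_T}=P_T\ket{\psi'}/\enorm{P_T\ket{\psi'}}$ — yields $\bra{\psi'}M_1\ket{\psi'}=g(m,T)\sum_{y\in S_0}\pyw$ and $\bra{\psi'}M_2\ket{\psi'}=g(m,T)\sum_y\pyw\cos^2(\ang\cdot\HW(y))$. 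Because $f$ maps $\spre$ bijectively onto $\spost=\spa S$ with the proof registers unconstrained, minimizing $M_2$ over $\spa S$ is $g(m,T)$ times one minus the maximum, over all proofs, of the ``counted sum'' $\sum_y\pyw\sin^2(\ang\cdot\HW(y))$; hence dividing $\ket{\psi'}$'s $(\delta+2Kr)$-suboptimality by $g(m,T)$ makes $\ket{\wm}$ suboptimal for the counted sum by the same amount. Since $\ang$ is chosen so that $\sin^2(\ang\cdot k)$, $k\in\set{0,\ldots,m}$, lies in a strictly monotone, near-linear regime in which an $\epsilon$-deviation of any single marginal $\rho_i\coloneqq\Tr_{B_j\neq B_i}[\ketbra{\wm}]$ from $V_i$-optimality costs $\Omega(\epsilon/m^2)$ in the counted sum, the robust deviation bound (\cref{lem:lowhelp}, via \cref{lem:low-energy-as-many-yes-as-possible}) forces, for $\epsilon$ in the stated range, every marginal of a valid query to be within $\epsilon$ of $V_i$-optimal. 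Writing $y^\ast$ for the correct \class{Q}-oracle answers on the valid coordinates (the $\class{D}$-machine, being well-defined, ignores the invalid ones, so $S_0,S_1$ are cylinders over the valid coordinates), each valid query has $\Pr[y_i\neq y_i^\ast]\le\max(1-c+\epsilon,s)$, hence $\Pr[y\text{ differs from }y^\ast\text{ in some valid coordinate}]\le m\max(1-c+\epsilon,s)$ by the union bound.

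For (iii): if $x$ is a YES instance the $\class{D}$-machine accepts on $y^\ast$, so $S_0\subseteq\set{y:y\text{ differs from }y^\ast\text{ in some valid coordinate}}$, giving $\bra{\psi'}M_1\ket{\psi'}=g(m,T)\sum_{y\in S_0}\pyw\le g(m,T)\,m\max(1-c+\epsilon,s)$, after which $\abs{\bra{\psi}M_1\ket{\psi}-\bra{\psi'}M_1\ket{\psi'}}\le\norm{M_1}\,\trnorm{\ketbra{\psi}{\psi}-\ketbra{\psi'}{\psi'}}\le 2\sqrt{1-\abs{\braket{\psi}{\psi'}}^2}\le 12\norm{M_2}/(\alpha\Delta)$ (using $\norm{M_1}\le 1$ and \cref{eqn:enorm}) yields the YES bound; if $x$ is a NO instance then $y^\ast\in S_0$, so $\bra{\psi'}M_1\ket{\psi'}\ge g(m,T)\big(1-m\max(1-c+\epsilon,s)\big)$ and the same transfer applied downward yields the NO bound. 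The hard part will be the robust exchange argument of step (ii): turning suboptimality of the single flag-qubit penalty $M_2$ into simultaneous near-optimality of the (possibly entangled) proof $\ket{\wm}$ against all $m$ verifiers. The delicate issues there are that $\ket{\wm}$ need not factor over the registers $B_i$, so the counted sum is not a product over its marginals — one must exploit that $\sin^2(\ang\cdot k)$ stays strictly monotone and roughly linear for $k\le m$ (precisely why the rotation angle is $\Theta(1/m)$) together with a convexity/exchange argument over the Hamming-weight distribution — and that the constants must line up, since substituting $\delta=1/\alpha^2$ and the estimate for $r$ into $\delta+2Kr$, dividing by $g(m,T)$, and paying the $\Theta(m^2)$ factor of \cref{lem:lowhelp} is exactly what produces the $8m^2/3$ and $12\norm{M_2}^2/\Delta$ in the $\epsilon$-range; handling invalid \class{Q}-queries (which do not affect $S_0,S_1$) is a minor additional point.
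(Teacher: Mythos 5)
Your proposal is correct and follows essentially the same route as the paper: step (i) is the paper's \cref{Lemma:Trace_Distance} (Extended Projection Lemma with $H_1=\alpha\Hw(V)$, $H_2=M_2$, yielding the $12\norm{M_2}/(\alpha\Delta)$ trace-distance and energy bounds), step (ii) is \cref{lem:lowhelp} and \cref{lem:low-energy-as-many-yes-as-possible} (the flag-qubit exchange argument forcing all valid queries to be $\epsilon$-near-optimal, with the $3g(m,T)/(8m^2)$ penalty rate), and step (iii) is \cref{Lemma:A_Variable_Hardness} (union bound over valid queries giving $m\max(1-c+\epsilon,s)$, then transferring back to $\ket\psi$ via H\"older). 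You also correctly identify the entangled-proof exchange argument as the crux and the source of the constants in the $\epsilon$-range.
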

\noindent \emph{Remarks.}
\begin{enumerate}
    \item The Lifting Lemma's sole degree of freedom is the function $\alpha$. All other quantities appearing, i.e.\ $\norm{M_2}$ (flag penalty, \cref{lem:augmented-circuit}), $\Delta$ (spectral gap of $\Hw(V)$), $g(m,T)$ (weight of final time step $T$ in history state, \cref{eqn:sim3}), $c$ (completeness) and $s$ (soundness), $m$ (number of Q-queries), are fixed functions stemming from the choice of circuit-to-Hamiltonian construction (\cref{def:local-mapping}) and classes D and Q.
    \item If we think of $M_1$ as playing the role of observable $A$ from \cref{def:forall-apx-sim} (where we follow the notation from \cref{def:local-mapping}), the Lifting Lemma brings us very close to obtaining $\DQ$-hardness of APX-SIM for the class of Hamiltonians produced by $\Hw$. (Recall here that, by the two conditions imposed on the circuit-to-Hamiltonian mapping, \cref{def:conforming,def:local-mapping}, $M_2$ conforms to $\Hw(U)$, and thus $H$ preserves the class/structure of Hamiltonians produced by $\Hw$.) To formally ``instantiate'' such a hardness result for a fixed class $\DQ$, it remains to choose an appropriate function $\alpha(n)$ ($n$ the input size) so that $\epsilon$ is sufficiently small so as to create an appropriate promise gap between the YES and NO thresholds in the Lifting Lemma.
    \item Note that for $m\in \omega(1)$, the standard completeness/soundness parameters of $2/3$ and $1/3$ will not suffice to create a promise gap in the Lifting Lemma. Thus, error reduction for the VQClass Q to completeness $1-1/\poly(n)$ versus soundness $1/\poly(n)$ appears necessary\footnote{Intuitively, this is because \cref{fig:augmented-circuit} runs all verifiers $V_i$ on proofs, and then $U'$ is guaranteed to output the correct answer only if \emph{all verifiers output their respective correct query answers}. If completeness and soundness are $2/3$ and $1/3$, respectively, the odds of this happening, even given optimal local proofs to each $V_i$, scales in the worst case as only $(2/3)^m$.} to apply our construction as a black box. This, in particular, means our construction cannot \emph{a priori} be applied with $\class{Q}=\class{StoqMA}$, since the latter is not known to have error reduction. (Note that recently, it was shown by Aharonov, Grilo, and Liu~\cite{AGL20} that if error reduction to $1-1/\omega(\poly(n))$ completeness versus constant soundness for StoqMA holds, then $\class{StoqMA}=\class{MA}$. However, for clarity, our construction only requires $1-1/\poly(n)$ completeness, so even if exponential error reduction is not possible in $\class{StoqMA}$, this does not rule out the regime our construction operates in.)
\end{enumerate}

\paragraph{Proof of the Lifing Lemma.} We prove the Lifting Lemma (\cref{lem:lift}) in the following series of lemmas, by showing i.\ that any low energy state of $H$ is close to a uniform history state in kernel $\Null{\Hw}$ (which is nonempty, as assumed in \cref{sec:circuit-to-ham}), ii.\ that uniform history state has all \emph{valid} queries answered correctly, and iii.\ that there is a large enough overlap with the overall output of the computation, such that the associated measurement problem of said site is hard, which is what we wish to show.\\

\noindent\emph{Step i. Low energy states of $H$ are close to uniform history states.} We first show that for a suitably-chosen function $\alpha$ (throughout, we assume the notation of \cref{lem:lift}), any low energy state with respect to $H$ has a ground state of $\Hw(V)$ which is nearby. Henceforth, we shall often drop the dependence on $n$ for simplicity, e.g.\ $\alpha(n)$ will be written $\alpha$, and $\Delta(A)$ denotes the spectral gap of Hermitian operator $A$. (For clarity, we define the spectral gap as difference between the two smallest \emph{distinct} eigenvalues of $A$.)
\begin{lemma}\label{Lemma:Trace_Distance}
    For brevity, define shorthand $\Delta$ for $\Delta(\Hw(V))$. Fix any function $\alpha:\nats\mapsto\nats$ such that
    \begin{equation}\label{eqn:6}
        \alpha > \max\left(\frac{4\norm{M_2}}{\Delta},\frac{\Delta}{3\norm{M_2}^2},1\right),
    \end{equation}
    and any $\delta\leq 1/\alpha^2$. Then, for any $\ket{\psi}$ such that $\bra{\psi}H\ket{\psi}\leq \lmin(H)+\delta$, there exists a uniform history state $\ket\phi \in \Null{\Hw(V)}$ such that
    \begin{equation}\label{eqn:trbound2}
        \trnorm{\ketbra{\psi} - \ketbra{\phi}} \leq\frac{12\norm{M_2}}{\alpha\Delta}
    \end{equation}
    and where $\ket{\phi}$ has energy
    \begin{equation}\label{eqn:5}
        \bra{\phi}H\ket{\phi}\leq \lmin(H)+\delta+\frac{12\norm{M_2}^2}{\alpha\Delta}.
    \end{equation}
\end{lemma}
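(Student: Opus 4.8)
The plan is to derive \cref{Lemma:Trace_Distance} as a direct application of the Extended Projection Lemma (\cref{l:kkr}) to the splitting $H=H_1+H_2$ with $H_1\coloneqq\alpha\Hw(V)$ and $H_2\coloneqq M_2$. First I would verify the hypotheses: since $\Hw(V)\succeq 0$ has non-empty null space, $\spa{S}\coloneqq\Null{\Hw(V)}$ is exactly its zero eigenspace, and by definition of the spectral gap every eigenvector in $\spa{S}^\perp$ has eigenvalue at least $\Delta\coloneqq\Delta(\Hw(V))$. Hence $H_1$ has $\spa{S}$ as a zero eigenspace and all eigenvectors in $\spa{S}^\perp$ of eigenvalue at least $J\coloneqq\alpha\Delta$. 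Writing $K\coloneqq\norm{H_2}=\norm{M_2}$, the requirement $J>2K$ becomes $\alpha\Delta>2\norm{M_2}$, which is implied by the first term of the maximum in \cref{eqn:6}: from $\alpha>4\norm{M_2}/\Delta$ we get $\alpha\Delta>4\norm{M_2}$, so in fact $J-2K>\tfrac12\alpha\Delta$. Thus \cref{l:kkr} applies, and for our $\delta\le 1/\alpha^2$ and any $\ket\psi$ with $\bra\psi H\ket\psi\le\lmin(H)+\delta$ it produces a normalized $\ket{\psi'}\in\spa{S}=\Null{\Hw(V)}$ satisfying the stated deviation and energy bounds.

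Next I would control the single quantity $r\coloneqq\frac{K+\sqrt{K^2+\delta(J-2K)}}{J-2K}$ that drives both bounds in \cref{l:kkr}. Using $\delta\le 1/\alpha^2$ and $0<J-2K\le J=\alpha\Delta$ gives $\delta(J-2K)\le\Delta/\alpha$, and the second term of the maximum in \cref{eqn:6}, namely $\alpha>\Delta/(3\norm{M_2}^2)$, yields $\Delta/\alpha<3\norm{M_2}^2$. Hence $K^2+\delta(J-2K)<4\norm{M_2}^2$, so the numerator satisfies $K+\sqrt{K^2+\delta(J-2K)}<3\norm{M_2}$. Combining with $J-2K>\tfrac12\alpha\Delta$ from the previous paragraph gives $r<6\norm{M_2}/(\alpha\Delta)$.

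Finally I would assemble the conclusion. The deviation bound of \cref{l:kkr} reads $\abs{\braket{\psi}{\psi'}}^2\ge 1-r^2$, so by \cref{eqn:enorm},
\[
\trnorm{\ketbra{\psi}-\ketbra{\psi'}} \;=\; 2\sqrt{1-\abs{\braket{\psi}{\psi'}}^2}\;\le\;2r\;<\;\frac{12\norm{M_2}}{\alpha\Delta},
\]
which is \cref{eqn:trbound2}; and the energy bound of \cref{l:kkr} reads $\bra{\psi'}H\ket{\psi'}\le\lmin(H)+\delta+2Kr<\lmin(H)+\delta+\tfrac{12\norm{M_2}^2}{\alpha\Delta}$, which is \cref{eqn:5}. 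Setting $\ket\phi\coloneqq\ket{\psi'}$ then finishes the argument, since $\ket\phi\in\Null{\Hw(V)}$ is normalized and, by \cref{def:local-mapping}, the null space of $\Hw(V)$ is spanned by---hence consists of---correctly initialized uniform history states.

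I do not expect a substantive obstacle here: the content lies entirely in matching the three terms of the maximum in \cref{eqn:6} to the three things that must be tamed---$(i)$ the condition $J>2K$ together with the lower bound on $J-2K$; $(ii)$ the nested radical $K^2+\delta(J-2K)$; and $(iii)$ the trivial $\alpha\ge 1$, which in particular keeps $\delta\le 1$ and keeps $r$ small enough that the projection of $\ket\psi$ onto $\spa{S}$ is nonzero so that $\ket{\psi'}$ is well-defined. The one point requiring a word of care is the final identification of the projection-lemma output with a genuine (normalized) uniform history state, rather than merely some element of $\Null{\Hw(V)}$; this I would resolve simply by invoking the defining structural property of $\Null{\Hw(V)}$ from \cref{def:local-mapping}.
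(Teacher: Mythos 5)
Your proposal is correct and follows essentially the same route as the paper: apply the Extended Projection Lemma with $H_1=\alpha\Hw(V)$, $H_2=M_2$, $J=\alpha\Delta$, $K=\norm{M_2}$, then bound the common quantity $r$ by showing the numerator is at most $3\norm{M_2}$ (via $\delta(J-2K)\le\Delta/\alpha<3\norm{M_2}^2$) and the denominator is at least $\alpha\Delta/2$ (via $\alpha\Delta>4\norm{M_2}$), yielding the stated trace-norm and energy bounds exactly as in the paper's chain of inequalities. Your added remarks on well-definedness of $\ket{\psi'}$ and on identifying it with a state in $\Null{\Hw(V)}$ are sensible elaborations of points the paper leaves implicit.
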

\begin{proof}
     The ground state deviation bound of the Extended Projection Lemma (\cref{l:kkr}), and Equations~(\ref{eqn:enorm}) and (\ref{eqn:6}) imply the first claim via
     \[
        \trnorm{\ketbra{\psi} - \ketbra{\phi}} \leq 2\frac{\left(\norm{M_2}+\sqrt{\norm{M_2}^2+\frac{\Delta}{\alpha}}\right)}{\alpha\Delta-2\norm{M_2}}
        \leq\frac{6\norm{M_2}}{\alpha\Delta-2\norm{M_2}}
        \leq\frac{12\norm{M_2}}{\alpha\Delta}.
     \]
     Here we have used \cref{eqn:6} to go from the second last to last expression in this chain of inequalities.

     A similar calculation using the third claim of \cref{l:kkr} yields the second claim of this lemma.
\end{proof}


\noindent\emph{Step ii. All valid queries answered correctly.} So far we have shown that all low-energy states with respect to $H$ are close to unbiased history states, i.e.\ the states in the kernel of $\Hw(V)$. These history states guarantee a correct computation in superposition, as given in \cref{eq:history-state}. Yet $M_2$ penalizes the flag qubit, as formalized in \cref{lem:augmented-circuit}. What does this imply for the witness $\ket{\wm}$? We now show that given a sufficiently small $\delta$, the history state close to $\ket\psi$ encodes a series of queries such that all \emph{valid} queries are answered correctly---in other words, that as many of the valid queries as possible are answered as \YES.

For this, we first require the following lemma, which works in ``circuit-world'' (i.e.\ in \cref{fig:augmented-circuit}, before a circuit-to-Hamiltonian construction is applied).
\begin{lemma}\label{lem:lowhelp}
Assume the notation of \cref{lem:augmented-circuit}, which showed
    \[
        \trace\left(\ketbra{\psi}{\psi}\cdot\ketbra{0}{0}_{\qflag}\right)=\sum_{y\in \set{0,1}^m}\pyw\cdot \cos^2\left(\HW(y)\ang\right),
    \]
    where $\ket{\psi}$ denoted the output of $V$ given joint proof $\ket{\wm}$, and $\pyw=\Pr\left(\bigwedge_{i=1}^m V_i \text{ outputs } y_i\mid \ket{\wm}\right)$. Suppose there exists an $i\in\set{1,\ldots, m}$ and $\epsilon>0$ such that $\ket{\wm}$ is ``$\epsilon$-suboptimal on proof $i$'', meaning there exists a local proof $\ket{w'_i}$
     such that
    \begin{equation}\label{eqn:perturb}
        \Pr(V_i \text{ outputs } 1 \mid \ket{\wm})=\Pr(V_i \text{ outputs } 1 \mid \ket*{w_i'})-\epsilon.
    \end{equation}
    Then there exists a proof $\ket{\wm'}=\ket{w_1'}\otimes\cdots\otimes\ket{w_m'}$ which causes $V$ to output $\ket{\psi'}$ satisfying
    \[
        \trace\left(\ketbra{\psi}{\psi}\cdot\ketbra{0}{0}_{\qflag}\right)\geq \trace\left(\ketbra{\psi'}{\psi'}\cdot\ketbra{0}{0}_{\qflag}\right)+\frac{3}{8m^2}\epsilon.
    \]
\end{lemma}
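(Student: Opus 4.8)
I would argue entirely in ``circuit-world'' (about the output state of $V$, without invoking any circuit-to-Hamiltonian map), reduce everything to an elementary computation in $\C$, handle the product case in full detail (this is the heart of the statement), and then reduce the entangled case to it. The first step is to Fourier-expand the flag weight: writing $z:=e^{2\ii\ang}-1$ and using $\cos^2\theta=\tfrac12+\tfrac12\operatorname{Re}e^{2\ii\theta}$ together with $e^{2\ii\ang\HW(y)}=\prod_{j}e^{2\ii\ang y_j}=\prod_j(1+y_jz)=\sum_{S\subseteq[m]}z^{|S|}\prod_{j\in S}y_j$, then summing against $\pyw$, one obtains
\[
    \trace\left(\ketbra{\psi}{\psi}\,\ketbra{0}{0}_{\qflag}\right)=\tfrac12+\tfrac12\operatorname{Re}\sum_{S\subseteq[m]}z^{|S|}q_S,\qquad q_S:=\Pr\left(\bigwedge_{j\in S}V_j\text{ outputs }1\mid\ket{\wm}\right).
\]
For a product proof $\ket{\wm}=\bigotimes_j\ket{w_j}$ this collapses to $\tfrac12+\tfrac12\operatorname{Re}\prod_{j=1}^m(1+p_jz)$, with $p_j:=\Pr(V_j\text{ outputs }1\mid\ket{w_j})$, since then $q_S=\prod_{j\in S}p_j$.

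\textbf{Product case.} I would take $\ket{\wm'}$ to be $\ket{\wm}$ with its $i$-th component replaced by $\ket{w_i'}$ (all others unchanged), so that the identity above holds for $\ket{\psi'}$ with $p_i$ replaced by $p_i+\epsilon$. Cancelling the common factor $w:=\prod_{j\ne i}(1+p_jz)$ and using $z=2\ii\sin\ang\,e^{\ii\ang}$ gives
\[
    \trace\left(\ketbra{\psi}{\psi}\,\ketbra{0}{0}_{\qflag}\right)-\trace\left(\ketbra{\psi'}{\psi'}\,\ketbra{0}{0}_{\qflag}\right)=-\tfrac{\epsilon}{2}\operatorname{Re}(z\,w)=\epsilon\,\sin\ang\,\abs{w}\,\sin(\ang+\arg w).
\]
Each factor $1+p_jz=(1-p_j)+p_je^{2\ii\ang}$ lies on the chord from $1$ to $e^{2\ii\ang}$, which yields $\abs{1+p_jz}^2=1-4p_j(1-p_j)\sin^2\ang\ge\cos^2\ang$ and $\arg(1+p_jz)\in[0,2\ang]$; hence $\abs{w}\ge\cos^{m-1}\ang$ and $\ang+\arg w\in[\ang,(2m-1)\ang]\subseteq[\ang,\pi-\ang]$ (using $(2m-1)\ang<\sqrt3<\pi-\ang$), an interval on which $\sin\ge\sin\ang$. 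So the difference is at least $\epsilon\,\sin^2\ang\,\cos^{m-1}\ang$, and substituting $\ang=\sqrt3/(2m)$ and invoking $\sin\ang\ge\ang-\ang^3/6$, $\cos\ang\ge1-\ang^2/2$ and Bernoulli's inequality gives $\sin^2\ang\,\cos^{m-1}\ang\ge\frac{3}{8m^2}$ for all $m\ge1$, which is the claim.

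\textbf{Entangled case, and the main obstacle.} Here $\ket{\wm'}$ is still required to be a product state, and one cannot simply ``decorrelate'' $\ket{\wm}$: a positively correlated joint proof can make $\trace(\ketbra{\psi}{\psi}\,\ketbra{0}{0}_{\qflag})$ strictly smaller than that of every product proof. The feature I would lean on is that the $m$ verifier-output measurements act on pairwise disjoint registers, so $q_S\le\prod_{j\in S}p_j^{*}$ with $p_j^{*}:=\max_{\ket w}\Pr(V_j\text{ outputs }1\mid\ket{w})$. I would then take $\ket{\wm'}$ to be the all-optimal product proof but with $\ket{w_i'}$ in slot $i$ (legitimate since $p_i+\epsilon\le p_i^{*}$), set $q_S'$ to be the corresponding product $\prod_{j\in S}$ of acceptance probabilities, and bound $\tfrac12\operatorname{Re}\sum_{S}z^{|S|}(q_S-q_S')$ from below using $q_S-q_S'\le0$ for $S\not\ni i$ and $q_S-q_S'\le(p_i^*-p_i-\epsilon)\prod_{j\in S\setminus\{i\}}p_j^*$ for $S\ni i$. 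This is where I expect the real difficulty: since $\operatorname{Re}(z^{k})=(2\sin\ang)^{k}\cos\bigl(k(\ang+\tfrac\pi2)\bigr)$ changes sign with $k$, a clean $\tfrac{3}{8m^2}\epsilon$ lower bound does not follow from a one-line Jensen-type argument and requires carefully dominating the higher-order ($|S|\ge2$) contributions by the favourable $|S|=1$ term (whose relevant part equals $2\epsilon\sin^2\ang$) — the product case above being exactly the instance in which the higher-order terms factor out and only the $|S|=1$ contribution survives.
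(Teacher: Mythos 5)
Your product-case analysis is correct and, in fact, rather elegant: the identity $\cos^2(\ang\HW(y))=\tfrac12+\tfrac12\operatorname{Re}\,e^{2\ii\ang\HW(y)}$ together with the factorization $\prod_j(1+p_jz)$ gives an exact expression for the deficit, $\epsilon\sin\ang\,\abs{w}\sin(\ang+\arg w)$, and your chord estimates $\abs{1+p_jz}\ge\cos\ang$, $\arg(1+p_jz)\in[0,2\ang]$ do yield $\ge\epsilon\sin^2\ang\cos^{m-1}\ang\ge\frac{3}{8m^2}\epsilon$. This is a genuinely different (complex-analytic) route to the constant than the paper's, which instead only uses the discrete monotonicity gap $\cos^2(\ang x)-\cos^2(\ang(x+1))\ge\frac{3}{8m^2}$ for integer Hamming weights.

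However, the proposal has a genuine gap: the entangled case is the substantive content of the lemma and is not proven. The lemma is stated for an arbitrary joint proof $\ket{\wm}$, and this generality is essential downstream — in \cref{lem:low-energy-as-many-yes-as-possible} the proof contained in the preimage of a low-energy history state cannot be assumed to be a tensor product. You correctly identify why your Fourier expansion stalls there: your bounds $q_S\le\prod_{j\in S}p_j^*$ are one-sided, and since $\operatorname{Re}(z^{\abs{S}})$ alternates in sign with $\abs{S}$, one-sided bounds on the coefficients do not give a lower bound on $\tfrac12\operatorname{Re}\sum_S z^{\abs{S}}(q_S-q_S')$; the higher-order terms are not controlled. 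The paper avoids this obstruction entirely by never Fourier-expanding. Instead it conditions sequentially on the verifier outcomes: because the $V_j$ act on disjoint registers, the conditional acceptance probability $\Pr(y_r=1\mid y_1\cdots y_{r-1}=z)$ is at most the local optimum $\prs(y_r=1)$, so every offset $\epsilon_z$ is nonnegative. A first pass rewrites the flag probability as a telescoped product of conditionals plus offsets; a second pass, from right to left, removes the offsets one layer at a time by pairing the $C(y_1\cdots y_{r-1}0\cdots)$ and $C(y_1\cdots y_{r-1}1\cdots)$ terms and invoking the monotonicity gap — each removal can only decrease the expression — until only the designated $\epsilon$ in slot $i$ survives and contributes the claimed $\frac{3}{8m^2}\epsilon$. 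To complete your write-up you would either need to supply an argument of this conditional-exchange type for the entangled case, or find a way to dominate the $\abs{S}\ge2$ Fourier contributions, which as you note is not a one-line matter.
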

\begin{proof}
    We proceed in two steps: first, we bound the rate at which $\cos^2(\sqrt{3}x/(2m))$ changes with integer increments in $x$. Second, we apply an exchange argument to swap out $\ket{\psi'}$ for $\ket{\psi}$.

    \paragraph{Bounding the cosine function.} Begin by writing
    \[
        \cos^2\left(\angf x\right)=\frac{1}{2}\left(1+\cos\left(\angft x\right)\right) \eqqcolon \frac{1}{2}(1+f(x)).
    \]
    Since $\sqrt{3}x/(2m)<\pi/2$ for $x\in[0,m]$, $f(x)$ is non-negative and concave down over $x\in[0,m]$. Thus, for $\epsilon>0$ satisfying $0\leq (x+\epsilon)\sqrt{3}/m\leq \pi/2$, we have $f(x+\epsilon)\leq f(x)+\epsilon f'(x)$, implying
    \begin{eqnarray*}
        f(x)&\geq& f(x+1)- f'(x)\\
        &=&f(x+1)+\angft\sin\left(\angft x\right)\\
        &\geq& f(x+1)+\angft\left(\angft x - \frac{1}{6}\left(\angft\right)^3x^3\right)\\
        &\geq& f(x+1)+\angftt x\\
        &\geq& f(x+1)+\angftt
    \end{eqnarray*}
    where the third statement follows from the Taylor series for sine, the fourth since
    \[
        \frac{1}{6}\left(\angft\right)^3x^3\leq \frac{1}{2}\angft x
    \]
    for $x\in[0,m]$, and the last when $x\geq 1$. As for the case of $x=0$ (which breaks the argument above since $f'(0)=0$), a direct Taylor series truncation for cosine yields $f(0)\geq f(1)+3/(4m^2)$. We conclude that for any $x,y\in\set{0,1}^{m}$ such that $\HW(y)>\HW(x)$,
    \begin{equation}\label{eqn:drop}
        \cos^2\left(\HW(x)\ang\right)-\cos^2\left(\HW(y)\ang\right)\geq \frac{3}{8m^2}.
    \end{equation}

    \paragraph{The exchange argument.} We now wish to run an exchange argument, meaning we wish to swap out $\ket{\wm}$ for a better proof $\ket*{\wm'}$ to get the desired statement. Since we cannot assume $\ket{\wm}$ is in tensor product across proofs, this argument needs to be done recursively (and in particular, via conditional probabilities). For this, assume without loss of generality that $i=1$ in the claim. For brevity, define $C(x)\coloneqq \cos^2(\sqrt{3}x/(2m))$, and we use shorthand such as \begin{align*}
    \Pr(y_1y_2=01) \quad&\text{to mean}\quad \Pr(V_1\text{ outputs }0\text{ and } V_2\text{ outputs }1\mid \ket{\wm}), \quad\text{and}\\
    \prs(y_1=1) \quad&\text{to mean}\quad \Pr(V_1\text{ outputs }1\mid \ket*{w_1'}).
    \end{align*}
    We recall that $\ket{w_1'}$ is defined as in \cref{eqn:perturb}. Our exchange argument then proceeds in two steps:
    \begin{itemize}
        \item Step 1 unrolls a recurrence ``from left to right'', ``partially'' achieving the desired exchange as it goes. At the end of this step, the expression we obtain contains a mixture of terms from the distribution induced by our desired target proof, $\ket*{\wm'}$, and ``offset'' terms $\epsilon_z$.
        \item Step 2 then unrolls a recurrence from ``right to left'', this time carefully pairing terms together and using \cref{eqn:drop} repeatedly to eliminate all but one $\epsilon_z$ term, which is the $\epsilon$ appearing in the statement of the final claim.
    \end{itemize}

    \noindent \emph{Step 1: Unrolling a recurrence from ``left'' to ``right''.}
    Since we are assuming $i=1$, we have the following recursive decomposition (for pedagogical reasons, we demonstrate the first two steps of the recurrence explicitly):
    starting from the first equation in \cref{lem:lowhelp}, $\trace\left(\ketbra{\psi}{\psi}\cdot\ketbra{0}{0}_{\qflag}\right)=:A$, we have
    \begingroup
    \allowdisplaybreaks
    \begin{eqnarray*}
        A&=&\sum_{y\in\set{0,1}^m}\pyw C(y)\\
        &=&\Pr(y_1=0)\left(\sum_{y'\in \set{0,1}^{m-1}}\Pr(y'\mid y_1=0) C(0y')\right)+
        \Pr(y_1= 1)\left(\sum_{y'\in \set{0,1}^{m-1}}\Pr(y'\mid y_1=1) C(1y')\right)\\
        &=&({\prs}(y_1=0)+\epsilon)\left(\sum_{y'\in \set{0,1}^{m-1}}\Pr(y'\mid y_1=0) C(0y')\right)+\\
        &&({\prs}(y_1=1)-\epsilon)\left(\sum_{y'\in \set{0,1}^{m-1}}\Pr(y'\mid y_1=1) C(1y')\right)\\
        &=&({\prs}(y_1=0)+\epsilon)\left[\Pr(y_2=0\mid y_1=0)\left(\sum_{y'\in \set{0,1}^{m-2}}\Pr(y'\mid y_1y_2=00) C(00y')\right)+\right.\\
        &&\hspace{29mm}\left.\Pr(y_2=1\mid y_1=0)\left(\sum_{y'\in \set{0,1}^{m-2}}\Pr(y'\mid y_1y_2=01) C(01y')\right)\right]+\\
        &&({\prs}(y_1=1)-\epsilon)\left[\Pr(y_2=0\mid y_1=1)\left(\sum_{y'\in \set{0,1}^{m-2}}\Pr(y'\mid y_1y_2=10) C(10y')\right)+\right.\\
        &&\hspace{29mm}\left.\Pr(y_2=1\mid y_1=1)\left(\sum_{y'\in \set{0,1}^{m-2}}\Pr(y'\mid y_1y_2=11) C(11y')\right)\right]\\
        &=&({\prs}(y_1=0)+\epsilon)\left[(\prs(y_2=0)+\epsilon_0)\left(\sum_{y'\in \set{0,1}^{m-2}}\Pr(y'\mid y_1y_2=00) C(00y')\right)+\right.\\
        &&\hspace{29mm}\left.(\prs(y_2=1)-\epsilon_0)\left(\sum_{y'\in \set{0,1}^{m-2}}\Pr(y'\mid y_1y_2=01) C(01y')\right)\right]+\\
        &&({\prs}(y_1=1)-\epsilon)\left[(\prs(y_2=0)+\epsilon_1)\left(\sum_{y'\in \set{0,1}^{m-2}}\Pr(y'\mid y_1y_2=10) C(10y')\right)+\right.\\
        &&\hspace{29mm}\left.(\prs(y_2=0)-\epsilon_1)\left(\sum_{y'\in \set{0,1}^{m-2}}\Pr(y'\mid y_1y_2=11) C(11y')\right)\right]\\
    \end{eqnarray*}
    \endgroup
    where $\epsilon_0,\epsilon_1\geq0$ since they denote the differences in acceptance probability for $V_2$ between the optimal proof $\ket{w'_2}$ and the current proof $\ket{\wm}$. We can continue unrolling the recurrence using terms $\epsilon_{z}$, so that in step $r$ of the recurrence (i.e.\ above, we have completed $2$ steps), $z\in\set{0,1}^{r-1}$. Here, $z$ denotes the string we have conditioned on thus far in the previous $r-1$ steps, meaning
    \[
        \epsilon_z = \Pr(y_r=0\mid y_1\cdots y_{r-1}=z)-\prs(y_r=0)=\prs(y_r=1)-\Pr(y_r=1\mid y_1\cdots y_{r-1}=z).
    \]
    Thus, unrolling the full recurrence yields:
    \begin{eqnarray}\label{eqn:doneright}
    \trace\left(\ketbra{\psi}{\psi}\cdot\ketbra{0}{0}_{\qflag}\right)&=&\sum_{y\in\set{0,1}^m}\left[\prod_{i=1}^m \left(\prs(y_i)+(-1)^{y_i}\epsilon_{y_1\cdots y_{i-1}}\right)\right]C(y).
    \end{eqnarray}
    \emph{Step 2: Unrolling a recurrence from ``right'' to ``left''.} We now iteratively unroll a recurrence in Equation~(\ref{eqn:doneright}), this time from right to left, each time removing a round of additive terms $\epsilon_z$ by carefully matching up certain expressions, until we are only left with the $\epsilon$ term. Namely, $\trace\left(\ketbra{\psi}{\psi}\cdot\ketbra{0}{0}_{\qflag}\right)$ equals:
    \begin{align*}
        &\sum_{y_1\cdots y_{m-1}}\left[\prod_{i=1}^{m-1} \left(\prs(y_i)+(-1)^{y_i}\epsilon_{y_1\cdots y_{i-1}}\right)\right]\cdot\\
        &\hspace{15mm}\left((\prs(y_m=0)+\epsilon_{y_1\cdots y_{m-1}})C(y_1\cdots y_{m-1}0)+(\prs(y_m=1)-\epsilon_{y_1\cdots y_{m-1}})C(y_1\cdots y_{m-1}1)\right)\\
        \geq&\sum_{y_1\cdots y_{m-1}}\left[\prod_{i=1}^{m-1} \left(\prs(y_i)+(-1)^{y_i}\epsilon_{y_1\cdots y_{i-1}}\right)\right]\cdot\\
        &\hspace{15mm}\left(\prs(y_m=0)C(y_1\cdots y_{m-1}0)+\prs(y_m=1)C(y_1\cdots y_{m-1}1)+\frac{3\epsilon_{y_1\cdots y_{m-1}}}{8m^2}\right)\\
        \geq&\sum_{y_1\cdots y_{m-1}}\left[\prod_{i=1}^{m-1} \left(\prs(y_i)+(-1)^{y_i}\epsilon_{y_1\cdots y_{i-1}}\right)\right]\cdot\\
        &\hspace{15mm}\left(\prs(y_m=0)C(y_1\cdots y_{m-1}0)+\prs(y_m=1)C(y_1\cdots y_{m-1}1)\right)
    \end{align*}
    where the second statement follows from \cref{eqn:drop}, and the third since $\epsilon_{y_1\cdots y_{i-1}}\geq 0$. (Note that all terms $\prs(y_i)+(-1)^{y_i}\epsilon_{y_1\cdots y_{i-1}}$ are also non-negative, since each such term was itself a probability.) Repeating this step $m-1$ times in total, we thus have
    \begin{eqnarray*}
        \trace\left(\ketbra{\psi}{\psi}\cdot\ketbra{0}{0}_{\qflag}\right)&\geq&\left(\prs(y_1=0)+\epsilon\right)\sum_{y_2\cdots y_m}\left(\prod_{i=2}^m\prs(y_i)\right)C(0y_2\cdots y_m)+\\
        &&\left(\prs(y_1=1)-\epsilon\right)\sum_{y_2\cdots y_m}\left(\prod_{i=2}^m\prs(y_i)\right)C(1y_2\cdots y_m)\\
        &=&\left(\sum_{y\in\set{0,1}^m}\left(\prod_{i=1}^m\prs(y_i)\right)C(y)\right)+\\
        &&\hspace{20mm}\epsilon\left(\sum_{y_2\cdots y_m}\left(\prod_{i=2}^m\prs(y_i)\right)\right)(C(0y_2\cdots y_m)-C(1y_2\cdots y_m))\\
        &\geq&\left(\sum_{y\in\set{0,1}^m}\left(\prod_{i=1}^m\prs(y_i)\right)C(y)\right)+
        \frac{3\epsilon}{m^2}\left(\sum_{y_2\cdots y_m}\left(\prod_{i=2}^m\prs(y_i)\right)\right)\\
        &=&\Tr\left(\ketbra{\psi'}\cdot\ketbra{0}_{\qflag}\right)+
        \frac{3\epsilon}{m^2},
    \end{eqnarray*}
where the third statement follows by \cref{eqn:drop} and the last since (1) $\ket{\psi'}$ is the state obtained by applying $V$ to the locally optimal (i.e.\ tensor product) proof $\ket{w_1'}\cdots\ket{w_m'}$, and (2) the latter has, by definition, the property that for all $i\in[m]$, $\ket{w_i'}$ is accepted by $V_i$ with probability $\prs(y_i=1)$.
%
%
%
%
%
\end{proof}

We now translate \cref{lem:lowhelp}, which held in ``circuit-world'', to the following main lemma of Step ii, which gives us the corresponding desired statement in ``Hamiltonian-world'' (i.e.\ after the circuit-to-Hamiltonian construction is applied).

\begin{lemma}\label{lem:low-energy-as-many-yes-as-possible}
Suppose history state $\ket{\phi} \in \Null{\Hw(V)}$ has preimage $\ket{\psiin}=\f^{-1}(\ket{\phi})$ (for bijection $\f$ from \cref{def:local-mapping}), where $\ket{\psiin}$ has proof $\ket{\wm}$. If there exists $\epsilon\geq 0$ and $i\in[m]$ such that $\ket{\wm}$ is $\epsilon$-suboptimal on proof $i$ (as defined in \cref{lem:lowhelp}), then
\begin{equation}\label{eqn:eta}
    \bra{\phi}H\ket{\phi}\geq \lmin(H)+\frac{3\g(m,T)}{8m^2}\epsilon,
\end{equation}
for $\g(m,T)$ defined in \cref{def:local-mapping}.
\end{lemma}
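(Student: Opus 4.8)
\emph{Proof plan.} The plan is to port the ``circuit-world'' estimate of \cref{lem:lowhelp} into ``Hamiltonian-world'' using the structural properties of the mapping $\Hw$ recorded in \cref{def:local-mapping}, and then to exhibit an explicit competitor null state of $\Hw(V)$ which, via the variational principle, pins $\lmin(H)$ below the energy of $\ket{\phi}$ by the claimed margin. I would assume $\epsilon>0$ throughout, since $\epsilon=0$ reduces to the trivial bound $\bra{\phi}H\ket{\phi}\geq\lmin(H)$.

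First I would observe that, since $\Hw(V)\succeq 0$ and $\ket{\phi}\in\Null{\Hw(V)}$, we have $\bra{\phi}H\ket{\phi}=\alpha(n)\bra{\phi}\Hw(V)\ket{\phi}+\bra{\phi}M_2\ket{\phi}=\bra{\phi}M_2\ket{\phi}$, and that this identity holds verbatim for \emph{any} element of $\Null{\Hw(V)}$ — a fact I will reuse. Next I would ``strip the clock register'': using \cref{eqn:sim2} ($M_2=P_TM_2P_T$), \cref{eqn:sim3} ($\enorm{P_T\ket{\phi}}^2=\g(m,T)$ on $\Null{\Hw(V)}$), and \cref{eqn:sim} with $i=2$ — recalling that $\qflag$ is the second output wire of $V$ and $\ket{\psiin}=\f^{-1}(\ket{\phi})$ — one obtains
\[
    \bra{\phi}M_2\ket{\phi}=\g(m,T)\,\bra{\psi_T}M_2\ket{\psi_T}=\g(m,T)\,\trace\!\big(\ketbra{0}{0}_{\qflag}\,V\ketbra{\psiin}{\psiin}V^\dagger\big),
\]
where $\ket{\psi_T}=P_T\ket{\phi}/\enorm{P_T\ket{\phi}}$. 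The trace on the right is exactly the quantity $\trace(\ketbra{\psi}{\psi}\cdot\ketbra{0}{0}_{\qflag})$ appearing in \cref{lem:lowhelp}, for $\ket{\psi}=V\ket{\psiin}$ the output of $V$ on the joint proof $\ket{\wm}$.

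Then I would invoke \cref{lem:lowhelp}: since $\ket{\wm}$ is $\epsilon$-suboptimal on proof $i$, it supplies a product proof $\ket{\wm'}=\ket{w_1'}\otimes\cdots\otimes\ket{w_m'}$ such that the input $\ket{\psiin'}\in\spre$ obtained from $\ket{\psiin}$ by overwriting its proof registers with $\ket{\wm'}$ (leaving the registers holding $x$ and the ancillae untouched) satisfies $\trace(\ketbra{0}{0}_{\qflag}V\ketbra{\psiin}{\psiin}V^\dagger)\geq \trace(\ketbra{0}{0}_{\qflag}V\ketbra{\psiin'}{\psiin'}V^\dagger)+\tfrac{3}{8m^2}\epsilon$. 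Setting $\ket{\phi'}:=\f(\ket{\psiin'})\in\Null{\Hw(V)}$ and applying the two steps above to $\ket{\phi'}$ in place of $\ket{\phi}$ gives $\bra{\phi'}H\ket{\phi'}=\g(m,T)\,\trace(\ketbra{0}{0}_{\qflag}V\ketbra{\psiin'}{\psiin'}V^\dagger)$; chaining this with the previous display yields $\bra{\phi'}H\ket{\phi'}\leq\bra{\phi}H\ket{\phi}-\tfrac{3\g(m,T)}{8m^2}\epsilon$, and since $\lmin(H)\leq\bra{\phi'}H\ket{\phi'}$ by the variational characterization of $\lmin$, rearranging delivers \cref{eqn:eta}.

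The one genuine idea here is \emph{not} to bound $\bra{\phi}H\ket{\phi}$ against $\lmin(H)$ directly, but to use \cref{lem:lowhelp} to construct the ``repaired'' null state $\ket{\phi'}$ — the same history state but with the $\epsilon$-suboptimal local proof swapped for a better one — whose energy provably undercuts that of $\ket{\phi}$ by exactly the gap \cref{lem:lowhelp} promises, and which therefore upper-bounds $\lmin(H)$ variationally. I expect the only fiddly points to be the identification of the abstract ``second output qubit'' of \cref{eqn:sim} with the flag wire $\qflag$ of \cref{lem:augmented-circuit}, and the check that $\ket{\psiin'}$ genuinely lies in $\spre$, so that $\f(\ket{\psiin'})$ is a legitimate element of $\Null{\Hw(V)}$.
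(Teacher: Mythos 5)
Your proposal is correct and follows essentially the same route as the paper's proof: reduce $\bra{\phi}H\ket{\phi}$ to $\g(m,T)\trace(\ketbra{0}{0}_{\qflag}V\ketbra{\psiin}{\psiin}V^\dagger)$ via \cref{eqn:sim,eqn:sim2,eqn:sim3}, invoke \cref{lem:lowhelp} to pass to the locally-optimal product proof, and use the image $\ket{\phi'}=\f(\ket{\psiin'})\in\Null{\Hw(V)}$ as a variational competitor for $\lmin(H)$. The only cosmetic difference is that you chain the inequalities through $\ket{\phi'}$ explicitly rather than in one display, and you flag (correctly) the minor checks the paper leaves implicit.
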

\begin{proof}
Let $\ket{\psiout}=V\ket{\psiin}$ and $\ket{\psiout'}$ denote the output of $V$ (\cref{fig:augmented-circuit}) when all proofs are proofs are optimal (i.e.\ set to $\ket{w_i'}$ in the terminology of \cref{lem:lowhelp}). Recall \cref{eqn:sim} in the \cref{def:local-mapping} of a local circuit-to-Hamiltonian mapping, which said $M_2$ simulates the projector $\ketbra{0}{0}_{\qflag}$ via
    \begin{equation}\label{eqn:here}
            \Tr\left(\ketbra{0}{0}_2(U_TU_{T-1}\ldots U_1)\ketbra{\psiin}(U_TU_{T-1}\ldots U_1)^\dagger\right) = \Tr\big(\ketbra{\phi_T} M_i\big),
    \end{equation}
(since we assumed in \cref{lem:augmented-circuit} that the second output qubit of $V$ is the flag qubit), where $\ket{\phi_T}$ is the history state $\ket{\phi}$ projected down onto time step $T$, which succeeds with probability $\g(m,T)$ (\cref{def:local-mapping}). We thus have
\begin{eqnarray}
    \bra{\phi}H\ket{\phi}&=&\bra{\phi}M_2\ket{\phi}\\
    &=& g(m,T)\bra{\phi_T}M_2\ket{\phi_T}\\
    &=& g(m,T)\trace\left(\ketbra{\psiout}\cdot\ketbra{0}{0}_{\qflag}\right)\\
    &\geq&g(m,T)\left(\trace\left(\ketbra{\psiout'}\cdot\ketbra{0}{0}_{\qflag}\right)+\frac{3}{8m^2}\epsilon\right)\\
    &=&\bra{\phi'}M_2\ket{\phi'}+g(m,T)\left(\frac{3}{8m^2}\epsilon\right)\\
    &\geq&\lmin(H)+\frac{3g(m,T)}{8m^2}\epsilon,
\end{eqnarray}
where the second statement follows from Equation \cref{eqn:sim2}, the third from \cref{eqn:here}, fourth from \cref{lem:lowhelp}, fifth from \cref{eqn:here} and defining $\ket{\phi'}=\f(V^\dagger\ket{\psiout'})$, and the last since $\ket{\phi'}\in\Null{\Hw(V)}$ by the definition of $\f$ in \cref{def:local-mapping} being a bijection.
\end{proof}

\noindent\emph{Step iii. Sufficiently high overlap with computation.} With the previous two steps in hand, we are in a position to argue that any low energy state of $H$ must indeed correctly encode the original $\DQ$ computation, and hence an appropriate measurement of a ground state will read off the $\DQ$ computation's answer.

\begin{lemma} \label{Lemma:A_Variable_Hardness}
    Consider any $\ket{\psi}$ satisfying $\bra{\psi}H\ket{\psi}\leq\lmin+\delta$. If $\delta\leq 1/\alpha^2$ and
    \begin{equation}
         \epsilon < \left(\delta+\frac{12\norm{M_2}^2}{\alpha\Delta}\right)\left(\frac{8m^2}{3\g(m,T)}\right),
    \end{equation}
    then
\begin{itemize}
    \item if $x$ is a YES instance for $\DQ$, then
    \begin{equation}
        \Tr(\ketbra{\psi}M_1)\leq g(m,T)\cdot m\cdot\max(1-c+\epsilon,s)+\frac{12\norm{M_2}}{\alpha\Delta}.
    \end{equation}
    \item if $x$ is a NO instance for $\DQ$, then
    \begin{equation}
        \Tr(\ketbra{\psi}M_1)\geq g(m,T)\left(1-m\cdot\max(1-c+\epsilon,s)\right)-\frac{12\norm{M_2}}{\alpha\Delta}.
    \end{equation}
\end{itemize}
\end{lemma}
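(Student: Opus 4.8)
The plan is to chain the two previous steps (\cref{Lemma:Trace_Distance} and \cref{lem:low-energy-as-many-yes-as-possible}) with the local-simulation identities of \cref{def:local-mapping} and a union bound over the $m$ verifiers. First I would apply \cref{Lemma:Trace_Distance} to the given state $\ket{\psi}$ (which requires $\alpha$ large enough for \cref{eqn:6}, a condition already needed for \cref{lem:lift}): since $\delta\le 1/\alpha^2$, this produces a uniform history state $\ket{\phi}\in\Null{\Hw(V)}$ with $\trnorm{\ketbra{\psi}-\ketbra{\phi}}\le 12\norm{M_2}/(\alpha\Delta)$ and energy $\bra{\phi}H\ket{\phi}\le\lmin(H)+\delta+12\norm{M_2}^2/(\alpha\Delta)$. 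Write $\ket{\psiin}=\f^{-1}(\ket{\phi})$ for the preimage under the bijection of \cref{def:local-mapping}, and let $\ket{\wm}$ be the joint proof sitting in its proof registers. Feeding the energy bound on $\ket{\phi}$ into the contrapositive of \cref{lem:low-energy-as-many-yes-as-possible} shows that $\ket{\wm}$ cannot be $\epsilon'$-suboptimal on any proof register $i$ for $\epsilon'$ exceeding $\bigl(\delta+12\norm{M_2}^2/(\alpha\Delta)\bigr)\cdot 8m^2/(3\g(m,T))$; together with the hypothesised bound on $\epsilon$, this means every $V_i$ accepts its marginal of $\ket{\wm}$ with probability within $\epsilon$ of the optimum for $V_i$.

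Next I would rewrite $\bra{\phi}M_1\ket{\phi}$ in circuit language. Using $M_1=P_TM_1P_T$ (\cref{eqn:sim2}), $\enorm{P_T\ket{\phi}}^2=\g(m,T)$ (\cref{eqn:sim3}), and \cref{eqn:sim} with $i=1$ together with \cref{eqn:out} of \cref{lem:augmented-circuit}, one obtains
\[
  \bra{\phi}M_1\ket{\phi}=\g(m,T)\,\Tr\!\bigl(\ketbra{0}{0}_{\qout}\,V\ketbra{\psiin}V^\dagger\bigr)=\g(m,T)\sum_{y\in S_0}\pyw,
\]
i.e.\ $\g(m,T)$ times the probability that the $\class{D}$ machine underlying $U$ \emph{rejects}, on verifier outputs drawn from the distribution induced by $\ket{\wm}$. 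Now I would invoke the standing assumption (from \cref{sec:oracle-subroutines}) that $U$ is a \emph{valid} $\DQ$ machine: its output depends only on the answers to \emph{valid} $\class{Q}$-queries, and equals $1$ (resp.\ $0$) whenever $x$ is a YES (resp.\ NO) instance and all valid queries are answered correctly. Hence, in the YES case, any $y\in S_0$ must disagree with the correct answer on at least one valid query, so Boole's inequality gives $\sum_{y\in S_0}\pyw\le\sum_{i\text{ valid}}\Pr(V_i\text{ answers }q_i\text{ incorrectly}\mid\ket{\wm})$. For a valid YES query, being incorrect means $V_i$ outputs $0$, which by near-optimality and completeness has probability at most $1-(c-\epsilon)=1-c+\epsilon$; for a valid NO query, being incorrect means $V_i$ outputs $1$, which by soundness has probability at most $s$ regardless of $\ket{\wm}$. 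With at most $m$ valid queries this yields $\bra{\phi}M_1\ket{\phi}\le\g(m,T)\cdot m\cdot\max(1-c+\epsilon,s)$. The NO case is symmetric, with $S_1$ in place of $S_0$: every $y\in S_1$ disagrees with a valid query, so $\Tr(\ketbra{0}{0}_{\qout}V\ketbra{\psiin}V^\dagger)=1-\sum_{y\in S_1}\pyw\ge 1-m\max(1-c+\epsilon,s)$, giving $\bra{\phi}M_1\ket{\phi}\ge\g(m,T)\bigl(1-m\max(1-c+\epsilon,s)\bigr)$.

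Finally I would transfer the bound from $\ket{\phi}$ to $\ket{\psi}$: since $M_1$ is a projector,
\[
  \abs{\bra{\psi}M_1\ket{\psi}-\bra{\phi}M_1\ket{\phi}}=\abs{\Tr\!\bigl(M_1(\ketbra{\psi}-\ketbra{\phi})\bigr)}\le\trnorm{\ketbra{\psi}-\ketbra{\phi}}\le\frac{12\norm{M_2}}{\alpha\Delta},
\]
which shifts both of the inequalities above by the stated additive $12\norm{M_2}/(\alpha\Delta)$ term, proving the lemma.

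The step I expect to be the main obstacle is the treatment of invalid $\class{Q}$-queries in the middle paragraph. One must ensure the union bound ranges only over the \emph{valid} queries — invalid ones carry no completeness or soundness guarantee — and this is legitimate precisely because a valid $\DQ$ machine's output is, by definition, insensitive to how invalid queries are answered; that well-definedness is exactly what licenses the implication ``$y\in S_0$ (in the YES case) $\Rightarrow$ some valid query is answered incorrectly''. Everything else is bookkeeping: combining \cref{Lemma:Trace_Distance}, \cref{lem:low-energy-as-many-yes-as-possible}, the local-simulation identities \cref{eqn:sim,eqn:sim2,eqn:sim3}, the formula \cref{eqn:out}, and the elementary trace-norm perturbation estimate.
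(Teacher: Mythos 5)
Your proposal is correct and follows essentially the same route as the paper's proof: apply \cref{Lemma:Trace_Distance} to pass to a nearby history state, invoke \cref{lem:low-energy-as-many-yes-as-possible} to rule out $\epsilon$-suboptimality of any proof register, translate $\bra{\phi}M_1\ket{\phi}$ to circuit language via \cref{eqn:sim,eqn:sim2,eqn:sim3}, union-bound over the valid queries (the paper phrases this via the commutative quantum union bound, \cref{lem:unionbound}, but on the classical distribution $\pyw$ this is the same as your Boole's-inequality step), and transfer back to $\ket{\psi}$ by H\"older. Your treatment of invalid queries — that a valid $\DQ$ machine's output is insensitive to their answers, so any wrong final output forces some \emph{valid} query to be answered incorrectly — is exactly the argument the paper uses.
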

\begin{proof}
    We first use \cref{Lemma:Trace_Distance} to map, assuming $\delta\leq 1/\alpha^2$, $\ket{\psi}$ to a history state $\ket{\phi}\in \Null{\Hw(V)}$ such that
    \begin{equation}\label{eqn:step1}
        \trnorm{\ketbra{\psi} - \ketbra{\phi}} \leq\frac{12\norm{M_2}}{\alpha\Delta}\quad\text{and}\quad
        \bra{\phi}H\ket{\phi}\leq \lmin(H)+\delta+\frac{12\norm{M_2}^2}{\alpha\Delta}.
    \end{equation}
    We next use \cref{lem:low-energy-as-many-yes-as-possible} to obtain that, for any $\epsilon$ satisfying
    \[
        \epsilon < \left(\delta+\frac{12\norm{M_2}^2}{\alpha\Delta}\right)\left(\frac{8m^2}{3\g(m,T)}\right),
    \]
    the preimage $\ket{\phiin}=f^{-1}(\ket{\phi})$ contains proof $\ket{\wm}$ which is not $\epsilon$-suboptimal on any proof $i\in[m]$. In words, for any $i\in[m]$, if $V_i$ has optimal acceptance probability $p_i^*$, then
    \[
        \Pr(V_i\text{ outputs 1}\mid \ket{\phiin})\geq p_i^*-\epsilon.
    \]
    Now, by \cref{lem:augmented-circuit}, if $q_i$ is a YES query, $p_i^*\geq c$, if $q_i$ is a NO query, $p_i^*\leq s$, and if $q_i$ is invalid, then $p_i^*$ can only be said to satisfy the trivial bounds $0\leq p_i^*\leq 1$. Letting $\syes,\sno,\sinv$ denote the partition of $[m]$ corresponding to YES, NO, and INVALID queries, it follows via the commutative quantum union bound (\cref{lem:unionbound}) that
    \[
        \Pr\left(\bigwedge_{i\in [m]} V_i \text{ outputs correct answer}\mid\ket{\phiin}\right)\geq 1-\left(\abs{\syes}(1-c+\epsilon)+\abs{\sno}s\right)=:\pgood
    \]
    where note $\abs{\sinv}$ does not appear since any answer to an invalid query is considered correct.

    Finally, since by \cref{def:dTM}, $U'$ in \cref{fig:augmented-circuit} is a deterministic computation, it follows that $V$ correctly accepts (respectively, correctly rejects) with probability at least $\pgood$, given $\ket{\phiin}$, when the $\DQ$ instance $x$ is a YES instance (respectively, NO instance). Recall from \cref{eqn:sim} that measuring $M_1$ on a history state $\ket{\phi}=f(\ket{\phiin})$ simulates a measurement on the output qubit of $V$ (\cref{fig:augmented-circuit}) via
    \begin{equation}
        \Tr\big(\ketbra{0}{0}_1V\ketbra{\phiin}V^\dagger\big) = \Tr\big(\ketbra{\phi_T} M_1\big),
    \end{equation}
    for $\ket{\phi_T} = P_T \ket\phi / \enorm{P_T \ket\phi}$. Thus, if $x$ is a YES instance,
    \begin{eqnarray*}
        \pgood&\leq&\trace\left(V\ketbra{\phiin}{\phiin}V^\dagger\cdot\ketbra{1}{1}_{\qout}\right)\\
        &=&1-\trace\left(V\ketbra{\phiin}{\phiin}V^\dagger\cdot\ketbra{0}{0}_{\qout}\right)\\
        &=&1-\Tr(\ketbra{\phi_T}M_1)\\
        &=&1-\frac{1}{g(m,T)}\Tr(\ketbra{\phi}M_1),
    \end{eqnarray*}
    where the third/fourth statements follow from \cref{def:local-mapping} and Equations~(\ref{eqn:sim3}) and (\ref{eqn:sim2}). Via an analogous argument for the NO case, we conclude:
\begin{itemize}
    \item If $x$ is a YES instance for $\DQ$, then
        \[
            \Tr(\ketbra{\phi}M_1)\leq g(m,T)(1-\pgood)\leq g(m,T)\cdot m\cdot\max(1-c+\epsilon,s).
        \]
    \item If $x$ is a NO instance for $\DQ$, then
    \[
        \Tr(\ketbra{\phi}M_1)\geq g(m,T)\left(1-m\cdot\max(1-c+\epsilon,s)\right).
    \]
\end{itemize}
This was for history state $\ket{\phi}$.
 The claim now follows for the original state $\ket{\psi}$ in the claim by combining \cref{eqn:step1} with H\"{o}lder's inequality.
\end{proof}

The Lifting Lemma now follows; we restate it below for convenience.
\begin{replemma}{lem:lift}[Lifting Lemma]
Let $x\in\set{0,1}^n$ be an instance of an arbitrary $\DQ$ problem, $U$ a $\DQ$ machine deciding $x$, and $V$ the verification circuit output by \cref{lem:augmented-circuit}. Fix a local circuit-to-Hamiltonian mapping $\Hw$, and assume the notation in \cref{def:local-mapping}. Then, there exists a computable function $\alpha:\nats\mapsto\nats$ such that, for any $\epsilon$ satisfying
\begin{equation}
    0\leq \epsilon \leq \frac{1}{\alpha}\left(\frac{1}{\alpha}+\frac{12\norm{M_2}^2}{\Delta(\Hw(V))}\right)\left(\frac{8m^2}{3\g(m,T)}\right),
\end{equation}
the Hamiltonian
$H\coloneqq \alpha(n)\Hw(V)+ M_2$
satisfies:
\begin{itemize}
    \item If $x$ is a YES instance, then for all $\ket{\psi}$ with $\bra{\psi}H\ket{\psi}\leq\lmin(H)+\frac{1}{\alpha^2}$,
        \[
            \bra{\psi}M_1\ket{\psi}\leq g(m,T)\cdot m\cdot\max(1-c+\epsilon,s)+\frac{12\norm{M_2}}{\alpha\Delta}.
        \]
    \item If $x$ is a NO instance, then for all $\ket{\psi}$ with $\bra{\psi}H\ket{\psi}\leq\lmin(H)+\frac{1}{\alpha^2}$,
        \[
            \bra{\psi}M_1\ket{\psi}\geq g(m,T)\left(1-m\cdot\max(1-c+\epsilon,s)\right)-\frac{12\norm{M_2}}{\alpha\Delta}.
        \]
\end{itemize}
\end{replemma}
\begin{proof}
    Follows immediately from \cref{Lemma:A_Variable_Hardness} by setting $\delta=1/\alpha^2$.
\end{proof}
As mentioned earlier, note that to apply \cref{lem:lift} to particular classes $\DQ$, it remains to select $\alpha$ appropriately so that \cref{lem:lift} creates the desired promise gap between YES and NO cases.

\subsection{Applying the Lifting Lemma}\label{sscn:other}

We now employ the Lifting Lemma (\cref{lem:lift}) to obtain hardness results for various complexity classes and types of circuit-to-Hamiltonian constructions. For clarity, the lemma applies for any local circuit-to-Hamiltonian construction (\cref{def:local-mapping}) and class $\DQ$ for deterministic class D (\cref{def:dTM}) and QVClass Q (\cref{def:qV}). What is required to apply it is to set $\alpha$ (relative to the other fixed quantities, which are $\norm{M_2}$ (flag penalty, \cref{lem:augmented-circuit}), $\Delta$ (spectral gap of $\Hw(V)$), $g(m,T)$ (weight of final time step $T$ in history state, \cref{eqn:sim3}), $c$ (completeness) and $s$ (soundness), $m$ (number of Q-queries)) so that the desired problem gap is obtained in \cref{lem:lift}.
An important point that for a given \qvc we must be able to choose the completeness and soundness bounds must satisfy
\begin{align}
m\cdot\max(1-c+\epsilon,s) = O\left(\frac{1}{\poly(n)}\right).
\end{align}
To be able to do this we need to be able to amplify the \qvc by the appropriate amount.
This is not always possible.

By construction, \cref{lem:lift} then yields a hardness result in which the constructed Hamiltonian satisfies all structural constraints (e.g.\ locality, geometry, etc) of the local circuit-to-Hamiltonian construction used as a starting point.


\paragraph{Application 1: $\PQMApar$ and 1D Hamiltonian.} In~\cite{GY16_2,GPY20}, it was shown that $\forall$-APX-SIM (and hence APX-SIM; as mentioned in \cref{sec:apx-sim}, $\forall$-APX-SIM reduces to APX-SIM trivially) is $\PQMApar$-complete for $5$-local Hamiltonians $H$ (\cite{GY16_2}, on arbitrary geometry, via Kitaev's circuit-to-Hamiltonian construction~\cite{KSV02}) and 1D 8-dimensional systems (\cite{GPY20}, via Hallgren, Nagaj, and Narayanaswami's 1D construction~\cite{HNN13}). Here, we re-derive both results directly via the Lifting Lemma.
\begin{corollary}\label{Corollary:1D}
    $\forall$-APX-SIM, and hence APX-SIM, are $\PQMApar$-hard for $5$-local Hamiltonians on qubits or 1D $8$-dimensional Hamiltonians, where in both cases the observable is $1$-local and $\delta\leq 1/\poly(n)$, and $b-a\geq 1/\poly(n)$, for $n$ the input size.
\end{corollary}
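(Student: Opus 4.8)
The plan is a direct instantiation of the Lifting Lemma (\cref{lem:lift}) with $\class{D}=\PP$ and $\class{Q}=\QMA$, so that $\DQ=\PQMApar=\PQMAlog$ (the last equality by \cite{GPY20}). For the $5$-local qubit case I would take $\Hw$ to be Kitaev's circuit-to-Hamiltonian mapping $\Hprop+\Pin$ on general geometry (unary clock, $5$-local), and for the 1D case I would take $\Hw$ to be Hallgren--Nagaj--Narayanaswami's nearest-neighbour construction on $8$-dimensional qudits \cite{HNN13}. The first task is to check both are local circuit-to-Hamiltonian mappings in the sense of \cref{def:local-mapping}: each is unfrustrated with a nonempty null space spanned by correctly-initialised history states (clause 1); $P_T$ is the projector onto the final time step (for the unary clock this is a $1$-local check on the last clock qubit, and $O(1)$-local for HNN), with $g(m,T)=1/(T+1)$ for the uniform history state, giving \cref{eqn:sim3}; and $M_1$ (resp.\ $M_2$) is the operator obtained by composing $P_T$ with $\ketbra{1}{1}$ (resp.\ $\ketbra{0}{0}$) on the first (resp.\ second) output wire of $V$, so that \cref{eqn:sim} and \cref{eqn:sim2} hold. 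Since $M_2$ is $O(1)$-local and, in the 1D case, supported on $O(1)$ adjacent qudits, it conforms to $\Hw(V)$, so $H\coloneqq\alpha\Hw(V)+M_2$ retains the locality and geometry of $\Hw$.

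Next I would fix parameters. Given a $\PQMApar$ instance $x$ and a $\PP^{\parallel\QMA}$ machine $U$ deciding it with $m=\poly(n)$ parallel queries, apply \cref{lem:augmented-circuit} --- after amplifying the $\QMA$ verifier, via standard error reduction, to completeness $c=1-2^{-q(n)}$ and soundness $s=2^{-q(n)}$ for a large polynomial $q$ (which leaves the proof length polynomial) --- to obtain the poly-size verification circuit $V$; then set the observable $A\coloneqq M_1$ and the Hamiltonian $H=\alpha\Hw(V)+M_2$. Here $T=\poly(n)$ is the verification time, $\norm{M_2}\le 1$, and $\Delta(\Hw(V))=\Omega(1/\poly(n))$ is the inverse-polynomial spectral gap of the underlying propagation Hamiltonian. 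Choosing $\alpha$ to be a sufficiently large polynomial (in particular satisfying \cref{eqn:6}) makes the quantity $\epsilon\coloneqq\frac{1}{\alpha}\big(\frac{1}{\alpha}+\frac{12\norm{M_2}^2}{\Delta(\Hw(V))}\big)\big(\frac{8m^2}{3g(m,T)}\big)$ and the additive error $12\norm{M_2}/(\alpha\Delta)$ each at most any prescribed inverse polynomial; consequently $m\cdot\max(1-c+\epsilon,s)=\poly(n)\cdot(2^{-q(n)}+\epsilon)$ can likewise be driven below any desired inverse polynomial. Taking, say, $a=\frac{1}{4(T+1)}$ and $b=\frac{3}{4(T+1)}$ (so $b-a=\Omega(1/\poly(N))$, where $N=\poly(n)$ is the qubit count of $H$), together with $\delta=1/\alpha^2=\Omega(1/\poly(N))$, the two bullets of \cref{lem:lift} are exactly the \YES and \NO conditions of $\forall$-APX-SIM$(H,A,k,1,a,b,\delta)$.

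It then remains to observe that $x\mapsto(H,A,k,1,a,b,\delta)$ is computable in polynomial time (all of $V$, $\Hw(V)$, $M_1$, $M_2$, $\alpha$, $a$, $b$, $\delta$ are polynomial-time computable), giving a polynomial-time many-one reduction that establishes $\PQMApar$-hardness of $\forall$-APX-SIM for $5$-local qubit Hamiltonians in one case and for $1$D nearest-neighbour $8$-dimensional Hamiltonians in the other. Since a \NO (resp.\ \YES) instance of $\forall$-APX-SIM is a \NO (resp.\ \YES) instance of APX-SIM, the same hardness transfers to APX-SIM, and $\PQMApar=\PQMAlog$.

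\textbf{Main obstacle.} The parameter bookkeeping is routine; the real content is verifying that the two concrete constructions genuinely meet \cref{def:local-mapping} and, in particular, that the observable can be taken $1$-local. As literally phrased, \cref{eqn:sim2} forces $M_1=P_TM_1P_T$, which for the unary clock yields a $2$-local observable rather than a $1$-local one. I would close this gap with the standard ``idle-padding'' manipulation at the circuit level --- pad $V$ with identity gates after its output wire has been written, so the output qubit is stable over a $1-o(1)$ fraction of time steps, whence the raw $1$-local observable $\ketbra{1}{1}$ on that wire reproduces the acceptance probability up to a constant factor that can be absorbed into $g$ and the choice of $a,b$ --- and check that HNN's more intricate $1$D clock admits an analogous $1$-local readout and an $O(1)$-neighbour $M_2$ penalty conforming to the $1$D geometry. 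Both facts are implicit in \cite{GY16_2,GPY20}, from which this corollary is re-derived.
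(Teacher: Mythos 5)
Your proposal follows essentially the same route as the paper's proof sketch: instantiate the Lifting Lemma with Kitaev's construction (with $\hout$ dropped) resp.\ the HNN 1D construction, amplify the \QMA verifiers, and choose $\alpha$ a sufficiently large polynomial so that both $m\cdot\max(1-c+\epsilon,s)$ and $12\norm{M_2}/(\alpha\Delta)$ fall well below $g(m,T)=1/(T+1)$, giving an inverse-polynomial promise gap. One small slip: by \cref{eqn:sim}, $M_1$ must simulate $\ketbra{0}{0}$ on the output wire (so accepting computations yield a \emph{small} expectation, matching the YES bullet of \cref{lem:lift}); composing $P_T$ with $\ketbra{1}{1}$ as you write would swap the YES/NO inequalities. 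Your concern about literal $1$-locality of $M_1$ given $M_1=P_TM_1P_T$ is legitimate---the paper's own sketch waves at it with ``properly encoded''---and your idle-padding fix is consistent with how~\cite{GY16_2} handles it.
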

\begin{proof}[Proof sketch]
    Due to the generality of \cref{lem:lift}, we may treat both cases simultaneously. Namely, we start by taking the local circuit-to-Hamiltonian mapping in either \cite{KSV02} ($5$-local case) or \cite{HNN13} ($1$D $8$-dimensional), and make one modification---we drop the term $\hout$. (Normally, $\hout$ penalizes rejecting computations. In our setting, however, this task is delegated to the observable $A$ in APX-SIM.) For example, for concreteness, for~\cite{KSV02} we set $\Hw(U)=\hin+\hprop+\hstab$ (for $\hin,\hprop,\hstab$ defined in~\cite{KSV02}; their precise definitions are not needed here). In both cases, $M_1$ and $M_2$ are (properly encoded) $1$-local rank-$1$ projector ontos $\ketbra{0}{0}$ at time step $T$ on the output and flag qubits, respectively; thus, $\norm{M_2}=1$. As mentioned in \cref{sec:circuit-to-ham}, $\Delta$ scales as $1/T^2$, and $g(m,T)=1/(1+T)$ in both cases.
    Since $\class{D}=\PP$, the number of \QMA queries is $m\in\poly(n)$ ($n$ the input size); thus, setting completeness and soundness for QMA of $c\geq 1-1/\poly(n)$ and $s\leq 1/\poly(n)$, and $\alpha$ to be a sufficiently large fixed polynomial, we obtain an inverse polynomial promise gap in \cref{lem:lift}. Finally, all functions considered ($g(m,T)$, $\Delta$) and the reduction itself run in time polynomial in the input size, $n$. Thus, \cref{lem:lift} yields the claimed hardness results.
\end{proof}

\begin{table}[t]
    \hspace*{-0.4cm}
    \begin{tabular}{lllll}
        \toprule
        \tbhead{circuit-to-Ham.}{construction} & interaction topology & \tbhead{measurement}{precision} & APX-SIM variant & hardness \\
        \midrule
        \cite{Oliveira2008} & 2D planar, n-n, local dim 2 & \multirow{2}{*}{$\displaystyle \frac{1}{\poly}$} & \multirow{2}{*}{APX-SIM} & \multirow{2}{*}{$\displaystyle \PP^{\parallel \QMA}$} \\
        \cite{Aharonov2009} & 1D line, n-n, local dim 12 & & & \\
        \midrule
        \cite{Fefferman2016} & 3-local, local dim 2 & $1/\exp$ & APX-SIM & \PSPACE \\
        \midrule
        \cite{Bausch2016} & t-i, 1D line, n-n, local dim 44$^\dagger$ & \multirow{2}{*}{$\displaystyle \frac{1}{\poly}$} & \multirow{2}{*}{TI-APX-SIM} & $\displaystyle \EXP^{\parallel \QMA}$ \\
        \cite{Bausch2017} & t-i, 3D fcc lattice, 4-local, local dim 4 & & & $\EXP^{\parallel \QMA}$ \\
        \midrule
        \cite{Kohler2020} & t-i, 1D line, n-n, local dim 42 & $1/\exp$ & TI-APX-SIM & \PSPACE \\
         \bottomrule
    \end{tabular}
    \caption{Hardness of APX-SIM variants for various families of many-body systems.
    Measure precision is in terms of the system (Hamiltonian) size, not the input size.
    n-n stands for 2-local nearest-neighbour interactions, and t-i abbreviates translationally-invariant systems.
    Since $\PreciseQMA=\PSPACE$, and the latter is low for itself, $\PP^{\parallel\PreciseQMA} = \PSPACE$.\\
    $^\dagger$) By \cref{Corollary:1D_TI_NN_Hardness}, two extra symbols are necessary as compared to the raw construction in \cref{tab:uwHs}, increasing the local dimension from 42 to 44.}
    \label{tab:apx-sim-hardness}
\end{table}

\paragraph{Application 2: Translationally-invariant systems.} As alluded to in (e.g.) \cref{def:TIAPX}, in this setting we are only allowed $\BigO(\polylog(N))$ bits to describe a Hamiltonian on $N$ qubits, which is achieved (e.g.) in 1D by repeating a single fixed $2$-local term on a 1D chain consisting of $N$ $d$-dimensional systems ($d\gg 2$ in existing constructions~\cite{Gottesman2009,Bausch2016}). For such systems, we show the following.
\begin{corollary}\label{Corollary:1D_TI_NN_Hardness}
    $\forall$-TI-APX-SIM (and hence APX-SIM) is $\EXP^{\parallel \QMA}$-hard for a nearest neighbour, translationally invariant Hamiltonian on qudits of local dimension $44$, for a 1-local observable $A$, and $\delta=1/\exp(n)$, $b - a =\Omega(1/\exp(n))$.
\end{corollary}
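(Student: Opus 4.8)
The plan is to invoke the Lifting Lemma (\cref{lem:lift}) with deterministic class $\class{D}=\EXP$ and \qvc\ $\class{Q}=\QMA$ (so that $\DQ=\EXP^{\parallel\QMA}$), fed a $1$D, nearest-neighbour, translationally invariant local circuit-to-Hamiltonian mapping---exactly paralleling the proof of \cref{Corollary:1D}, but in the succinctly-encoded regime. Concretely, start from the $1$D translationally invariant $\QMAEXP$ history-state construction~\cite{Bausch2016} (local dimension $42$, cf.\ \cref{tab:uwHs}), which satisfies \cref{def:local-mapping}, and apply the two minor modifications flagged in the discussion after \cref{def:local-mapping}: (i)~drop the output penalty $\hout$, whose job is handed to the APX-SIM observable $A$; and (ii)~append a final halting step to the clock-driving Turing machine so that the time-$T$ configuration is locally recognisable, making $P_T$---and hence the time-$T$-gated projectors $M_1,M_2$---$1$-local. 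These modifications cost two extra tape symbols, raising the local dimension from $42$ to $44$ while preserving $1$D nearest-neighbour translation invariance. As in \cref{Corollary:1D}, $M_1$ and $M_2$ are the $1$-local projectors $\ketbra{0}{0}$ onto the output and flag wires at time $T$, so $\norm{M_2}=1$, $M_2$ conforms to $\Hw(V)$, and we take the observable to be $A:=M_1$.

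Next, track the parameters. For an $\EXP^{\parallel\QMA}$ instance $x$ of size $n$, the underlying $\EXP$ machine $U$ runs in $2^{n^k}$ steps and issues $m\le 2^{n^k}$ parallel $\QMA$ queries, each of size at most $2^{n^k}$. First amplify every $\QMA$ verifier to completeness $c=1-2^{-p(n)}$ and soundness $s=2^{-p(n)}$ for a degree-$k$ polynomial $p$ with a large enough leading coefficient; this inflates each verifier---and hence the length $T$ of the computation embedded by \cref{lem:augmented-circuit}---by only a $\poly(n)$ factor, so $T,N=2^{\Theta(n^k)}$, $\g(m,T)=2^{-\Theta(n^k)}$ (e.g.\ $1/(T+1)$ for a uniform history state), and $\Delta(\Hw(V))=\Theta(1/T^2)=2^{-\Theta(n^k)}$. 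Now choose $\alpha(n)=2^{q(n)}$ for a degree-$k$ polynomial $q$ whose leading coefficient is large enough that the admissible bound $\epsilon_0=\tfrac{8m^2}{3\g(m,T)}\bigl(\tfrac1{\alpha^2}+\tfrac{12}{\alpha\Delta}\bigr)$ from \cref{lem:lift} and the error term $\tfrac{12\norm{M_2}}{\alpha\Delta}$ are both $\le 2^{-3n^k}$, and match $p$ so that $1-c,s\le 2^{-3n^k}$ too; then $m\cdot\max(1-c+\epsilon_0,s)\le 2^{-n^k}\le\tfrac14$. Plugging into \cref{lem:lift} with $\delta:=1/\alpha^2$: in the YES case every $\ket{\psi}$ with $\bra{\psi}H\ket{\psi}\le\lmin(H)+\delta$ obeys $\bra{\psi}M_1\ket{\psi}\le a:=\g(m,T)\,m\max(1-c+\epsilon_0,s)+\tfrac{12}{\alpha\Delta}$, and in the NO case $\bra{\psi}M_1\ket{\psi}\ge b:=\g(m,T)\bigl(1-m\max(1-c+\epsilon_0,s)\bigr)-\tfrac{12}{\alpha\Delta}$; hence $b-a\ge\tfrac12\g(m,T)-\tfrac{24}{\alpha\Delta}\ge\tfrac14\g(m,T)=\Omega(1/\exp(n))$, while $\delta=1/\alpha^2=\Omega(1/\exp(n))$ and, since $q$ has degree $k$ and $N=2^{\Theta(n^k)}$, $\delta=N^{-O(1)}$ still satisfies the APX-SIM promise $\delta\ge N^{-c'}$.

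Finally, the reduction must run in time $\poly(n)$: as for $\QMAEXP$-hardness of LH~\cite{Gottesman2009}, we never write $V$ out explicitly. From the $\poly(n)$-size Turing machine for $U$ and the ($\poly(n)$-describable, uniform) amplified $\QMA$ verifier, build a $\poly(n)$-size Turing machine $M_V$ that runs the verifiers $V_i$ sequentially on its tape and applies the flag rotations $R(\ang)$ as in \cref{lem:augmented-circuit}; feed $M_V$ together with the size parameter $N=2^{\Theta(n^k)}$ (binary, $\poly(n)$ bits) into the modified construction of~\cite{Bausch2016}; and output the succinct description of $H=\alpha\,\Hw(V)+M_2$ (with $\alpha$ a $\poly(n)$-bit integer and $M_2$ a fixed $44$-dimensional nearest-neighbour term conforming to $\Hw(V)$), together with $A=M_1$ and $a,b,\delta$. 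This gives $\EXP^{\parallel\QMA}$-hardness of $\forall$-TI-APX-SIM with a $1$-local observable and $\delta,b-a=\Omega(1/\exp(n))$; since $\forall$-TI-APX-SIM many-one reduces to TI-APX-SIM (remarks after \cref{def:forall-apx-sim}), the hardness carries over to TI-APX-SIM.

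I expect the work to concentrate in two places. The first is the retrofitting of~\cite{Bausch2016}: making the halting configuration locally recognisable and introducing a single locally identifiable flag site, all without breaking nearest-neighbour locality or translation invariance (the source of the $42\to 44$ jump), and checking that the construction does not suffer the ``workspace-shuffling'' pathology noted for the $3$D fcc and~\cite{Aharonov2009} constructions---which would obstruct the $T=\poly(N)$ embedding we rely on. The second is the parameter bookkeeping: because $\alpha$, the $\QMA$ error-reduction precision $p$, and the induced blow-up of $T$ (hence of $N$, $\g(m,T)$, $\Delta$) are mutually dependent, one must check that a single polynomial degree $k$ with sufficiently large leading coefficients simultaneously (a)~drives $m\cdot\max(1-c+\epsilon_0,s)$ and $\tfrac{12}{\alpha\Delta}$ small enough to open a gap $b-a=\Omega(1/\exp(n))$, and (b)~keeps $\delta=1/\alpha^2$ above the $N^{-O(1)}$ floor required by APX-SIM---the feedback being harmless only because error reduction costs a mere $\poly(n)$ factor and so leaves the exponent $n^k$ unchanged.
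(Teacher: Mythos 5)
Your proposal is correct and follows essentially the same route as the paper's proof: apply the Lifting Lemma with $\class{D}=\EXP$, $\class{Q}=\QMA$ to the 1D TI construction of~\cite{Bausch2016} (modified to make the halting configuration locally recognisable, giving the $42\to44$ bump and a $1$-local $M_1,M_2$), amplify the verifiers to $1-1/\exp(n)$ completeness and $1/\exp(n)$ soundness, take $\alpha$ exponential in $n$ (polynomial in $N$), and keep the reduction $\poly(n)$-time by describing $V$ implicitly via a fixed-size Turing machine rather than writing out the exponential-size circuit. The paper's sketch routes first through~\cite{Gottesman2009} and then swaps in~\cite{Bausch2016} for the dimension bound, but the substance and all key observations (including the QTM trick for poly-time uniformity) coincide with yours.
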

\begin{proof}[Proof sketch]
    Let $x\in\set{0,1}^n$ be the input given to the EXP machine. The construction is similar to \cref{Corollary:1D}, but with some additional important observations.
    \begin{itemize}
        \item First, it is crucial to note that the circuit $V$ constructed in \cref{lem:augmented-circuit} takes in as input $x$, and takes in as proof $\ket{\wm}$. Since the EXP machine can produce QMA queries $\ket{q_i}$ of length \emph{exponential} in $n$ (since the EXP machine itself runs in $\exp(n)$ time), this means that although $x$ is $n$ bits, the proof $\ket{\wm}$ is on $\exp(n)$ qubits and the verifiers $V_i$ are exponential-time. This exactly matches what is expected for a ``\QMAEXP''-setup. (Recall, however, that while $V$ looks like a \QMAEXP verifier, taken on its own, it violates the soundness requirements of the reduction.)
        \item With $V$ in hand, we apply \cite{Gottesman2009} as our local circuit-to Hamiltonian construction (where the local terms for this construction are describable in $\BigO(1)$ bits).
        However, there is a catch---if we were to write out the full circuit of \cref{lem:augmented-circuit} during the reduction, it would take $\exp(n)$ time (since recall $V$ has $\exp(n)$ size).
        Yet this is not an issue: \cite{Gottesman2009} is able to implicitly implement the action of $V$ from \cref{lem:augmented-circuit}, not as an explicit circuit, but via a \emph{Quantum Turing Machines (QTM)} (i.e.\ each $V_i$ is given via a QTM, and these are called as subroutines by a global QTM taking in $x$ and $\ket{\wm}$). As in \cite{Gottesman2009}, these QTMs are by definition described under the hood by finite state machines whose descriptions are $\BigO(1)$-size\footnote{While this may seem odd at first glance, it holds for a simple reason---a Turing machine deciding a language L is specified independently of any particular input $x\in\set{0,1}^*$. Thus, its description cannot depend on the size of any particular input it is fed. This is intuitively why in 1D translationally invariant constructions such as \cite{Gottesman2009}, the local dimension can be $\BigO(1)$, and the only parameter the length of the chain (which encodes the input $x$ in unary).}. In this way, in contrast to what is stated in the claim of \cref{lem:augmented-circuit}, in this case we can implement \cref{lem:augmented-circuit} in $\poly(n)$ time (as opposed to polynomial in $U$ and $V$).
    \end{itemize}
    As per \cite{Gottesman2009}, $A=M_1$ and $M_2$ can now be chosen as constant-norm $2$-local observables. Since, however, the length of the computations in EXP and verifiers $V_i$ are exponential, we now require $\Delta,g(m,T)$ scaling as $1/\exp(n)$ (since $m$ and $T$ are in $\BigO(\exp(n))$). Similarly, since $m\in\exp(n)$, we use standard error reduction on the verifiers $V_i$ to ensure $c\geq 1-1/\poly(N)$ (i.e.\ $1-1/\exp(n)$) and $s\leq 1/\poly(N)$ (i.e.\ $s\leq 1/\exp(n)$). Setting $\alpha$ to be a sufficiently large fixed polynomial in $N$ (i.e.\ a fixed exponential in $n$), we obtain a $1/\poly(N)$ (i.e.\ $1/\exp(n)$) promise gap in \cref{lem:lift}. Finally, due to our use of QTMs rather than explicit circuits, the reduction runs in time polynomial in $n$.

    The reduced local dimension can be obtained by using \cite{Bausch2016}'s translationally-invariant construction instead of \cite{Gottesman2009}; the rest of the argument is the same.
    We obtain a $1$-local observable $A$ by incrementing the local dimension slightly (from $42$ to $44$; two extra letters for the quantum Thue system suffice to signal that the computation halted, as can be seen easily).
\end{proof}
Results for higher-dimensional lattices can be trivially obtained by repeating the Hamiltonian in a translationally-invariant fashion; a qualitatively different result for a 3D crystal lattice is the following:
\begin{corollary}\label{Corollary:3D_TI_NN_Hardness}
	$\forall$-TI-APX-SIM (and hence APX-SIM) is $\EXP^{\parallel \QMA}$-hard for a $4$-local, translationally invariant Hamiltonian on qudits of local dimension $4$ on a 3D fcc lattice, for a 1-local observable $A$, $\delta=1/\exp(n)$, $b - a =\Omega(1/\exp(n))$.
\end{corollary}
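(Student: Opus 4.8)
The plan is to instantiate the Lifting Lemma (\cref{lem:lift}) with deterministic class $\class{D}=\EXP$ and \qvc\ $\class{Q}=\QMA$, so that $\DQ=\EXPpar$, and with the $4$-local, translationally invariant, local-dimension-$4$, 3D fcc lattice circuit-to-Hamiltonian construction of~\cite{Bausch2017} --- recorded in \cref{tab:uwHs} as satisfying \cref{def:local-mapping} --- playing the role of $\Hw$. This is precisely the recipe used to prove \cref{Corollary:1D_TI_NN_Hardness}; only the host circuit-to-Hamiltonian mapping changes, and by construction \cref{lem:lift} then automatically produces a hardness result whose Hamiltonian inherits exactly the $4$-local, translationally invariant, 3D fcc, dimension-$4$ structure of~\cite{Bausch2017}. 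So the work is to check that~\cite{Bausch2017} is an admissible host and then to re-run the 1D parameter bookkeeping.

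Concretely, given an $\EXPpar$ instance $x\in\set{0,1}^n$ and a machine $U$ deciding it, I would first apply \cref{lem:augmented-circuit} to obtain the verification circuit $V$. As in \cref{Corollary:1D_TI_NN_Hardness}, since the base $\EXP$ machine runs in time $\exp(n)$, the \QMA queries $\ket{q_i}$, the joint proof $\ket{\wm}$, and the verifiers $V_i$ all have size $\exp(n)$, with $m,T\in\exp(n)$; this matches a ``\QMAEXP-style'' setup (apart from the usual caveat that $V$ taken alone is not sound). Following~\cite{Gottesman2009}, I would implement each $V_i$ and the host postprocessing circuit $U'$ implicitly via Quantum Turing Machines rather than explicit gate sequences, so that~\cite{Bausch2017} can encode their joint action using a single local term describable in $\BigO(1)$ bits; the reduction then runs in $\poly(n)$ time, outputting this fixed local term together with the lattice dimensions, i.e.\ $N\in\exp(n)$ written in binary using $\poly(n)$ bits.

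It then remains to fix the free parameter $\alpha$ and the auxiliary operators. I would take $M_1=A$ and $M_2$ to be constant-norm $\BigO(1)$-local observables supported at time step $T$ on the output and flag sites respectively, with $M_2$ conforming to the $4$-local translationally invariant structure of~\cite{Bausch2017}; as in the 1D case, a slight enlargement of the underlying finite-state-machine alphabet makes the halting configuration locally recognizable, which also lets $A$ be taken $1$-local on dimension $4$. Since $\Delta(\Hw(V))$ scales as $1/T^2$ and $\g(m,T)$ as $1/T$ with $T\in\exp(n)$, and $m\in\exp(n)$, I would apply standard error reduction to the $V_i$ so that $c\geq 1-1/\poly(N)$ and $s\leq 1/\poly(N)$, giving $m\cdot\max(1-c+\epsilon,s)=O(1/\poly(N))$. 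Setting $\alpha$ to a sufficiently large fixed polynomial in $N$ (equivalently, a fixed exponential in $n$) then forces the $\epsilon$ permitted by \cref{lem:lift} to be small enough that the YES and NO bounds on $\bra{\psi}M_1\ket{\psi}$ are separated by $\Omega(1/\poly(N))=\Omega(1/\exp(n))$. This yields $\EXPpar$-hardness of $\forall$-TI-APX-SIM with $\delta=1/\exp(n)$ and $b-a=\Omega(1/\exp(n))$; hardness of TI-APX-SIM follows from the reduction from $\forall$-TI-APX-SIM noted after \cref{def:TIAPX}.

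The only genuinely new work relative to \cref{Corollary:1D_TI_NN_Hardness}, and the step I expect to be the main obstacle, is verifying that~\cite{Bausch2017} really meets every hypothesis of \cref{def:local-mapping}: the existence of a halting projector $P_T$ (which, after a minor tweak to the state machine driving the clock, can be taken $1$-local), the fact that all null states of $\Hw(V)$ carry the same weight $\g(m,T)=\Theta(1/T)$ on the final time step, and the conformity of the flag-penalty term $M_2$ to the $4$-local 3D fcc translationally invariant structure. I would also need to confirm that the ``workspace-shuffling'' feature of~\cite{Bausch2017}'s history state --- which ties the system size to the computation length, and which (as observed in the discussion following \cref{th-intro:apx-sim-precise}) obstructs the exponential-precision \PSPACE variant --- causes no trouble here: for $\EXPpar$ the embedded computation has $T\in\poly(N)$, so $N\sim T$ is exactly the intended regime and nothing breaks. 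Beyond these checks, the argument is a verbatim transcription of the 1D proof.
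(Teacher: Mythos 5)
Your proposal matches the paper's proof, which is literally the one-line remark ``Follows as in \cref{Corollary:1D_TI_NN_Hardness}, but using \cite{Bausch2017}''; you have simply spelled out the parameter bookkeeping and the admissibility checks that this remark leaves implicit. The points you flag as potential obstacles (verifying \cref{def:local-mapping} for \cite{Bausch2017}, and the workspace-shuffling caveat being harmless in the $1/\poly(N)$-precision regime) are exactly the right ones and are consistent with the paper's own discussion.
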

\begin{proof}[Proof idea.]
	Follows as in \cref{Corollary:1D_TI_NN_Hardness}, but using \cite{Bausch2017}.
\end{proof}

\paragraph{Application 3: Exponential Precision.}
We now consider the case studied in \cite{Fefferman2016} and \cite{Kohler2020} where the Hamiltonian from the circuit-to-Hamiltonian mapping has local terms which are described in $\BigO(\poly(N))$ bits (i.e.\ the input size for each instance is $n=\poly N$).
Here $N$ is the number of spins in the system; and we emphasise that the Hamiltonian can be translationally invariant nonetheless by using the quantum phase estimation extractor from \cite{Kohler2020} in the following.
Crucially, we ask for the local observable within an energy subspace within $\delta =\BigO(1/\exp(n))$ of the ground state.\footnote{Previously, in the translationally-invariant case, we had $n=\polylog N$; here $n=\poly N$; as such the gap in this case is exponentially small in the system size, whereas previously it was polynomially small.}

\begin{corollary}\label{Corollary:PreciseQMA_Hardness}
	$\forall$-TI-APX-SIM (and hence APX-SIM) is $\PSPACE$-hard for a nearest-neighbour translationally invariant Hamiltonian on qudits of local dimension $44$, for a 1-local observable $A$, $\delta=1/\exp(n)$, $b - a =\Omega(1/\exp(n))$.
\end{corollary}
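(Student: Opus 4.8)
The plan is to follow the template of \cref{Corollary:1D_TI_NN_Hardness}, but (i) replace the circuit-to-Hamiltonian mapping of \cite{Gottesman2009} by the translationally invariant, nearest-neighbour, local-dimension-$42$ mapping of \cite{Kohler2020}, and (ii) instantiate the Lifting Lemma (\cref{lem:lift}) with a pair $\class{D},\class{Q}$ whose completion is $\PSPACE$. Concretely, take $\class{D}=\PP$ and let $\class{Q}$ be the \qvc{} of languages decidable in polynomial space --- equivalently, in exponential time --- with the empty witness; then $\DQ=\PP^{\parallel\PSPACE}=\PSPACE$, which is exactly the class $\PP^{\parallel\PreciseQMA}$ appearing in \cref{tab:apx-sim-hardness} (using $\PreciseQMA=\PSPACE$ \cite{Fefferman2016} and self-lowness of $\PSPACE$). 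The point of routing the oracle through a \emph{deterministic} space-bounded verifier --- legitimate, since every $\PreciseQMA$ problem lies in $\PSPACE$ and hence needs no proof --- is that it has perfect completeness $c=1$ and soundness $s=0$, so $\max(1-c+\epsilon,s)=\epsilon$, the error-reduction proviso of \cref{sscn:other} is vacuous, and no amplification of a promise class is required. This is essential: $\PreciseQMA$'s $2^{-\poly(n)}$ promise gap cannot be amplified to $1/\poly(n)$ error in polynomial time (it would imply $\PSPACE\subseteq\QMA$).

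Given a $\PSPACE$-instance $x\in\set{0,1}^n$, a single oracle call suffices (one query to a $\PSPACE$-complete problem decides any $\PSPACE$ language), so we may take the $\DQ$-machine $U$ underlying \cref{lem:augmented-circuit} to make $m=1$ query; the circuit $V$ it outputs then realises an exponential-time, polynomial-space computation, so $T\in\exp(n)$. Exactly as in \cref{Corollary:1D_TI_NN_Hardness}, $V$ is never written out explicitly: \cite{Kohler2020}'s construction implements it implicitly via Quantum Turing Machines with $\BigO(1)$-size finite controls, carrying the input $x$ inside a coupling constant --- which is precisely why the local terms need $\BigO(\poly(N))$ bits and the input size satisfies $n=\poly(N)$, while the chain length stays $N\in\poly(n)$ (a polynomial-space configuration suffices as a clock counting to $T$). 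The Hamiltonian $\Hw(V)$ is then $1$D, nearest-neighbour, translationally invariant of local dimension $42$, with spectral gap $\Delta(\Hw(V))\in\Theta(1/T^2)$ and $g(m,T)=1/(T+1)$. Incrementing the local dimension by two (as in \cref{Corollary:1D_TI_NN_Hardness}) makes the halting time step locally recognisable, so $P_T$, $M_1$, $M_2$ become $1$-local and $M_2$ conforms to $\Hw(V)$; hence $A\coloneqq M_1$ is $1$-local and $H\coloneqq\alpha\,\Hw(V)+M_2$ is again $1$D, nearest-neighbour, translationally invariant of local dimension $44$.

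It remains only to fix the single free parameter $\alpha$. With $\norm{M_2}\in\BigO(1)$, $\Delta\in\Theta(1/T^2)$ and $g(m,T)=1/(T+1)$ for $T\in\exp(n)$, choose $\alpha\in\Theta(T^3)$ --- an explicit $2^{\poly(n)}$, hence writable in $\poly(n)$ bits and absorbable into the $\BigO(\poly(N))$-bit description of the rescaled translationally invariant local term. Then $12\norm{M_2}/(\alpha\Delta)\le\tfrac13 g(m,T)$, and applying \cref{lem:lift} with $\epsilon=0$ (there is no witness) gives, for every $\ket{\psi}$ with $\bra{\psi}H\ket{\psi}\le\lmin(H)+\delta$ and $\delta=1/\alpha^2=1/\exp(n)$: in the \YES{} case $\bra{\psi}A\ket{\psi}\le a\coloneqq\tfrac13 g(m,T)$, and in the \NO{} case $\bra{\psi}A\ket{\psi}\ge b\coloneqq\tfrac23 g(m,T)$, so that $b-a=\Omega(1/T)=\Omega(1/\exp(n))$. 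The implicit QTM realisation keeps the entire reduction $\poly(n)$-time, and \cref{lem:lift} preserves the $1$D, nearest-neighbour, translation-invariant structure of \cite{Kohler2020}; this establishes $\PSPACE$-hardness of $\forall$-TI-APX-SIM, hence of TI-APX-SIM, for qudits of local dimension $44$ with a $1$-local observable and $\delta=1/\exp(n)$, $b-a=\Omega(1/\exp(n))$.

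The main obstacle is the amplification issue flagged in \cref{sscn:other}: unlike in \cref{Corollary:1D,Corollary:1D_TI_NN_Hardness}, where $\class{Q}=\QMA$ can be error-reduced so that $m\cdot\max(1-c+\epsilon,s)=O(1/\poly(n))$, no such amplification exists for $\PreciseQMA$, so that requirement cannot be met through a promise-class oracle. Routing instead through a witness-free, space-bounded verifier makes $c=1$, $s=0$ and removes the difficulty; what then remains to verify carefully is that \cite{Kohler2020}'s mapping, with its implicit QTM realisation, genuinely encodes the length-$\exp(n)$ computation $V$ in a $1$D, nearest-neighbour, \emph{translationally invariant} Hamiltonian whose local terms --- even after multiplication by the exponentially large $\alpha$ --- fit in $\BigO(\poly(N))$ bits and whose halting configuration is locally recognisable. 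This is the counterpart of the remark in the main text that \cite{Aharonov2009}'s $1$D chain and the fcc-lattice construction do \emph{not} support such an exponential-precision reduction without modification, since there $T\sim N$.
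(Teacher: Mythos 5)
Your proposal is correct and lands on the same skeleton as the paper's proof — same circuit-to-Hamiltonian mapping (\cite{Kohler2020}), same use of $\PreciseQMA=\PSPACE$ and self-lowness, a single query, and the same reliance on the chain hosting an exponential-time but polynomial-space computation with $T\in\exp(n)$, $N\in\poly(n)$ — but it handles the central obstacle (non-amplifiability of $\PreciseQMA$) differently. The paper keeps the query as a genuine $\PreciseQMA$ query and applies Marriott--Watrous strong error reduction~\cite{MW05} \emph{inside} the verification circuit, blowing the runtime up to exponential while keeping space polynomial, so that the single oracle subquery has completeness/soundness polynomially close to $1$ and $0$; it then identifies the flag and output qubits and reruns the Step i--iii lemmas with $1/\poly$ replaced by $1/\exp$. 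You instead discard the witness entirely, routing the query through a deterministic exponential-time, polynomial-space decider (legitimate since every $\PreciseQMA$ problem sits in $\PSPACE\subseteq\EXP$), which gives $c=1$, $s=0$, makes $\epsilon$-suboptimality vacuous, and lets \cref{lem:lift} be applied verbatim with $m=1$. Your route is the more elementary one — no amplification lemma is invoked, and the flag-qubit machinery becomes harmless dead weight since there is no prover to discipline — whereas the paper's route stays inside the oracle-query framework (which is what one would need if the goal were to phrase the result natively as $\PP^{\sparallel\PreciseQMA}$-hardness before collapsing to $\PSPACE$). Both arguments lean on exactly the same load-bearing assumption about \cite{Kohler2020} — that its clock can count to $\exp(n)$ on a $\poly(n)$-length translationally invariant chain with $\poly(N)$-bit local terms — and you correctly flag this, together with the contrast to the block-shuffling constructions for which $T\sim N$, as the point requiring care.
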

\begin{proof}[Proof idea.]
    Use \cite{Kohler2020} (translationally invariant) as the circuit-to-Hamiltonian construction.
    As $\PreciseQMA=\PSPACE$ by \cite{Fefferman2016}, we know that $\PP^{\parallel\PreciseQMA} = \PP^{\parallel\PSPACE} = \PSPACE = \PreciseQMA$. We encode the circuit from \cref{lem:augmented-circuit}, but modify it so that only a single query to a \PreciseQMA oracle is made and have it such that the flag qubit and the output qubit are the same.
    We note that while we cannot generally amplify a \PreciseQMA completeness-soundness gap polynomially close to $1$ and $0$ in polynomial time (otherwise $\PreciseQMA=\QMA$), respectively, we \emph{can} attain such amplification in \emph{exponential} time in ``circuit-world''. Namely, we can apply the Marriott-Watrous strong error reduction technique~\cite{MW05}, which will blow up the runtime of the circuit to exponential, but crucially keep the space usage polynomial. We can thus augment \cref{lem:augmented-circuit} such that it performs such an amplification manually, so we can assume that the single oracle subquery has an output which, for valid queries, gives a probability polynomially close to $1$ and $0$.

    We add a projector $P$ which penalises the output/flag qubit.
    Finally, let the observable we wish to measure in APX-SIM also be the observable $P$.
    Then equivalents of \cref{Lemma:Trace_Distance}, \cref{Lemma:Trace_Distance}, and \cref{lem:low-energy-as-many-yes-as-possible} can be proven where the $1/\poly$ factors are replaced by $1/\exp$, and in \cref{lem:low-energy-as-many-yes-as-possible} is applies to just a single query.
    It is then possible to prove \cref{Lemma:A_Variable_Hardness} where the measurement $M_1=P$.
    Since $\PreciseQMA$ has an exponentially small promise gap, and is a polynomial time computation, then measuring $P$ with promise gap $b-a=\Omega(1/\exp(n))$ is \PreciseQMA-hard.
\end{proof}

\subsection{Containment Results}


In this section, we speak of the $1/\poly$-gap and $1/\exp$-gap regimes to mean the settings in which all promise gaps in LH and APX-SIM are at least inverse polynomial and at least inverse exponential, respectively, in the input size, $n$. For example, the ``standard'' Local Hamiltonian problem of~\cite{KSV02} is in the $1\poly$-gap regime. In contrast, the translationally invariant 1D setting (e.g.~\cite{Gottesman2009}) is in the succinct $1/\exp$-regime. Note that to make the notions of $1/\poly$- and $1/\exp$-gap meaningful, without loss of generality we may assume all Hamiltonians below are renormalized to be spectral-norm $1$. (Thus, a Hamiltonian with exponential weights on its local terms is now a norm-$1$ Hamiltonian with a $1/\exp$ promise gap.) Finally, we define $\textup{LH}(g,N)$ as the Local Hamiltonian problem~\cite{KSV02} with promise gap $g$.


\begin{lemma}\label{Lemma:Containment_Proof}
        Let $\mathcal{F}$ be a family Hamiltonians and Q a QVClass (\cref{def:qV}) such that $\textup{LH}(g)$ for $\mathcal{F}$ is in Q. In the $1/\poly$-gap regime, $\textup{APX-SIM}\in \PP^{\class{Q}[\log]}$, and in the $1/\exp$-gap regime, $\textup{APX-SIM}\in \PP^{\class{Q}}$.
\end{lemma}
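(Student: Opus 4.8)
This is a converse to the Lifting Lemma, generalising the containment $\textup{APX-SIM}\in\PQMAlog$ of~\cite{Ambainis2013,GY16_2}; the plan is the same binary-search-then-one-more-query strategy, but phrased so the only oracle invoked is a solver for $\textup{LH}$ on (a mild closure of) $\mathcal F$. Let $(H,A,k,l,a,b,\delta)$ be an input APX-SIM instance with $H\in\mathcal F$; by the normalisation convention of this section we may take $\snorm H\le 1$ and $\snorm A\le 1$, so the relevant eigenvalues lie in $[-2,2]$, and $b-a,\delta\ge 1/\poly(n)$ in the $1/\poly$-gap regime (resp.\ $\ge 1/\exp(n)$ in the $1/\exp$-gap regime). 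Set $\gamma\coloneqq\delta/\bigl(4(\snorm A+|b|+1)\bigr)$ and $\eta_0\coloneqq\gamma(b-a)/8$; note $\gamma,\eta_0$ are $\Theta(1/\poly(n))$ (resp.\ $\Theta(1/\exp(n))$).

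\emph{Step 1 (estimate $\lmin(H)$).} Using the $\class{Q}$-oracle for $\textup{LH}(g)$ on $\mathcal F$, run a standard adaptive binary search to produce $\eta$ with $|\eta-\lmin(H)|\le\eta_0$. Each query asks ``$\lmin(H)\le t$?'' via an $\textup{LH}(g)$ instance on $H$ with threshold $t$ and gap $g\coloneqq\eta_0/10$; since the oracle may answer adversarially when $\lmin(H)$ is within $g/2$ of $t$, one maintains an interval known to contain $\lmin(H)$ and, after each answer, halves it and widens the moved endpoint by $g/2$. This converges to width $O(\eta_0)$ after $O(\log(1/\eta_0))$ queries, exactly as in~\cite{Ambainis2013,GY16_2}. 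Each such $\textup{LH}(g)$ query is answerable by the $\class{Q}$-oracle since $g\ge 1/\poly(n)$ (resp.\ $1/\exp(n)$) and $H\in\mathcal F$.

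\emph{Step 2 (fold in $A$, one final query).} Put $H_{\textup{new}}\coloneqq H+\gamma A$. A short case split on whether a state has $H$-energy at most $\lmin(H)+\delta$ gives: in the APX-SIM YES case, the guaranteed ground state $\ket{\psi}$ of $H$ with $\bra{\psi}A\ket{\psi}\le a$ yields $\lmin(H_{\textup{new}})\le\lmin(H)+\gamma a$; in the NO case, every state $\ket{\psi}$ satisfies $\bra{\psi}H_{\textup{new}}\ket{\psi}\ge\lmin(H)+\gamma b$ (for states of $H$-energy $>\lmin(H)+\delta$ the slack $\delta$ beats $\gamma\snorm A$ by the choice of $\gamma$; for the rest, $\bra{\psi}A\ket{\psi}\ge b$ and $\bra{\psi}H\ket{\psi}\ge\lmin(H)$). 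These two bounds differ by $\gamma(b-a)$. The machine then makes one more query: an $\textup{LH}(g')$ instance on $H_{\textup{new}}$ with $g'\coloneqq\gamma(b-a)/8$ and threshold $\eta+\eta_0+\gamma a+g'$. Using $|\eta-\lmin(H)|\le\eta_0$ and $\eta_0=g'=\gamma(b-a)/8$, one checks this query never straddles its promise gap---so it is legal---and its YES/NO answer decides APX-SIM. Since $A$ is $O(1)$-local, $H_{\textup{new}}$ is $O(1)$-local with a description of the same succinctness as $H$, so $\textup{LH}(g')$ on $H_{\textup{new}}$ is still in $\class{Q}$ (for every class we instantiate---$\QMA,\QCMA,\QMAEXP,\PreciseQMA$---$\textup{LH}$ for arbitrary $O(1)$-local Hamiltonians at the relevant scale and precision lies in $\class{Q}$; equivalently, replace $\mathcal F$ by its closure under adding one $O(1)$-local term).

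\emph{Accounting and the main obstacle.} All classical bookkeeping, instance construction, and the final decision take $\poly(n)$ time. In the $1/\poly$-gap regime the total is $O(\log(1/\eta_0))+1=O(\log n)$ adaptive $\class{Q}$-queries, giving $\textup{APX-SIM}\in\PP^{\class{Q}[\log]}$; in the $1/\exp$-gap regime it is $O(\log(1/\eta_0))+1=\poly(n)$ adaptive queries, giving $\textup{APX-SIM}\in\PP^{\class{Q}}$. The one delicate point is Step 1: an $\textup{LH}(g)$ oracle is a \emph{promise} oracle, so one must argue the binary search still localises $\lmin(H)$ despite adversarial answers on near-threshold instances; this is handled by the shrinking-interval bookkeeping above together with taking the per-query gap $g$ a constant factor below the target precision $\eta_0$. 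The only other subtlety---that $\textup{LH}$ for the perturbed $H_{\textup{new}}$ remains a legal $\class{Q}$-query---is addressed above.
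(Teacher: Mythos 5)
Your proof is correct and follows essentially the same route as the paper's: a binary search with the $\class{Q}$-oracle to estimate $\lmin(H)$ to precision matching the promise gap (logarithmically or polynomially many adaptive queries in the $1/\poly$- and $1/\exp$-gap regimes, respectively), followed by one final query, exactly as in the algorithm of Ambainis (Section A.2) that the paper invokes. The only cosmetic difference is that you realise the final query as an $\textup{LH}$ instance on the perturbed Hamiltonian $H+\gamma A$ (correctly flagging that $\mathcal F$ must be closed under adding the $O(1)$-local observable), whereas the paper phrases it as a direct $\class{Q}$-verification that some low-energy state of $H$ has small expectation against $A$; both implementations work and carry the same mild implicit assumption on $\class{Q}$.
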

\begin{proof}
    This is a straightforward application of the algorithm given in Section A.2 of \cite{Ambainis2013}. The algorithm first uses the Q-oracle to conduct a binary search in order to obtain an additive error $\eta$ estimate of the ground state energy of $H$. (This is where the containment of $\textup{LH}(g,N)$ in Q is used, for example.) With $\eta$ in hand, we run one final Q-query to check if there is a low-energy state (relative to $\eta$) of $H$ whose expected value for the observable $A$ satisfies the YES-criteria for APX-SIM.

    The only question is the level precision required, i.e.\ how should $\eta$ scale? In the $1/\poly$-gap regime, it suffices to use logarithmically many adaptive queries to the Q-oracle to obtain $\eta\in \BigO(1/\poly (n))$, as required to distinguish inverse polynomial promise gaps. In the $1/\exp$-gap regime, we need polynomially many adaptive queries to obtain $\eta\in \BigO(1/\exp(n))$. This yields the claim.
\end{proof}

A similar, but not quite identical result is
\begin{corollary}\label{Lemma:Containment_Proof_PSPACE}
    APX-SIM where $Q=\PreciseQMA$ in \cref{Corollary:PreciseQMA_Hardness} is contained in \PSPACE.
\end{corollary}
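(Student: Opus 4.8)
The plan is to combine the generic containment algorithm of \cref{Lemma:Containment_Proof} with two standard facts: the Fefferman--Lin collapse $\PreciseQMA=\PSPACE$~\cite{Fefferman2016}, and the self-lowness of \PSPACE.

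First I would pin down the parameter regime. In \cref{Corollary:PreciseQMA_Hardness} the Hamiltonian family $\mathcal F$ consists of 1D, nearest-neighbour, translationally invariant Hamiltonians whose single local term is specified in $\poly(N)$ bits (via the phase-estimation extractor of~\cite{Kohler2020}), and the precisions are $\delta,\,b-a=\Omega(1/\exp(n))$ with $n=\poly(N)$ the input size; hence every promise gap is inverse-exponential in $n$ and we sit in the ``$1/\exp$-gap regime'' of \cref{Lemma:Containment_Proof}. To invoke that lemma I must exhibit a \qvc $\class Q$ with $\textup{LH}(1/\exp)$ for $\mathcal F$ in $\class Q$; the natural choice is $\class Q=\PreciseQMA$, since the local Hamiltonian problem with an inverse-exponential promise gap is by design a \PreciseQMA problem (run the standard history-state/phase-estimation verifier in $\poly(n)$ time, which for the succinctly specified TI family is exactly the verifier implicit in~\cite{Kohler2020}; its completeness--soundness gap is inverse-exponential in $n$). \cref{Lemma:Containment_Proof}, applied in the $1/\exp$-gap case, then immediately gives that APX-SIM for $\mathcal F$ with these parameters lies in $\PP^{\PreciseQMA}$.

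The remaining step is to show $\PP^{\PreciseQMA}=\PSPACE$. By~\cite{Fefferman2016}, $\PreciseQMA=\PSPACE$, so $\PP^{\PreciseQMA}=\PP^{\PSPACE}$; and $\PP^{\PSPACE}=\PSPACE$ because a polynomial-time machine writes only polynomial-length oracle queries, each answerable in polynomial space with the oracle's workspace reused across queries, while the base machine itself uses polynomial space. Passing through $\PreciseQMA=\PSPACE$ additionally dispenses with the usual invalid-query subtlety attached to $\PP^{\class Q}$ for a promise class $\class Q$: one may instead instantiate the oracle with a \emph{total} \PSPACE language that performs exact smallest-eigenvalue comparisons on the (succinctly described, possibly exponentially large) Hamiltonians arising in the binary search of \cref{Lemma:Containment_Proof}, so no promise is ever violated and the simulation is clean. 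Chaining these inclusions places APX-SIM for $\mathcal F$ in \PSPACE, which together with \cref{Corollary:PreciseQMA_Hardness} yields \PSPACE-completeness.

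I do not expect a genuine obstacle here; the only point needing care -- and precisely the reason this is ``similar but not quite identical'' to \cref{Lemma:Containment_Proof} -- is that \cref{Lemma:Containment_Proof} only delivers the relativized class $\PP^{\class Q}$, so one must additionally observe that for $\class Q=\PreciseQMA$ this collapses all the way down to \PSPACE, and that the queries issued by the binary search can be taken to be total. Both are routine given the constructions of~\cite{Kohler2020,Fefferman2016} already in play.
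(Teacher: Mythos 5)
Your proposal is correct and follows essentially the same route as the paper: invoke the containment algorithm of \cref{Lemma:Containment_Proof} in the $1/\exp$-gap regime with $\class{Q}=\PreciseQMA$ to get $\PP^{\PreciseQMA}$, then collapse to \PSPACE via $\PreciseQMA=\PSPACE$ and the self-lowness of \PSPACE. The extra care you take with totalizing the oracle queries is a reasonable elaboration of a point the paper leaves implicit, but it is not a different argument.
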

\begin{proof}
	The proof follows as for \cref{Lemma:Containment_Proof} in the $1/\exp$-gap regime; we need $1/\poly N$ many queries to obtain $\eta\in \BigO(1/\exp(N))$.
	As $\PP^\PreciseQMA=\PSPACE$ the claim follows.
\end{proof}

Finally, \cref{Corollary:1D_TI_NN_Hardness} showed $\forall$-TI-APX-SIM (and hence APX-SIM) is $\EXP^{\parallel \QMA}$-hard (in the $1/\exp$-gap regime). This is a somewhat odd-looking complexity class; the following theorem shows it is actually equal to something ``closer to $\PQMA$ in appearance''.

\begin{theorem}\label{thm:equality}
	$\EXP^{\parallel\QMA}=\PQMAexp$.
\end{theorem}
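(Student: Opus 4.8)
The plan is to establish the two inclusions $\EXPpar\subseteq\PQMAexp$ and $\PQMAexp\subseteq\EXPpar$ separately; the former drops out of results already proven, while the latter is an ``unfold the oracle, then parallelise the queries'' argument of a fairly standard flavour.

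For $\EXPpar\subseteq\PQMAexp$: \cref{Corollary:1D_TI_NN_Hardness} shows that $\forall$-TI-APX-SIM — and hence TI-APX-SIM — on the relevant $1$D nearest-neighbour translationally invariant family is $\EXPpar$-hard under polynomial-time many-one reductions, while \cref{Lemma:Containment_Proof}, instantiated with this family and the class $\QMAEXP$ (using that the associated TI-LH problem lies in $\QMAEXP$~\cite{Gottesman2009}), places TI-APX-SIM in $\PP^{\QMAEXP}=\PQMAexp$. Composing the reduction with the containment and using that $\PQMAexp$ is closed under polynomial-time many-one reductions yields $\EXPpar\subseteq\PQMAexp$.

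For $\PQMAexp\subseteq\EXPpar$, take a $\PQMAexp$ machine $M$ running in polynomial time and making $t=\poly(n)$ adaptive queries to a $\QMAEXP$ oracle. \emph{Step 1: unfold the oracle.} By definition of $\QMAEXP$, each query $q$ (of size $\poly(n)$) comes with an exponential-time verification circuit on an exponential-size proof; applying Kitaev's circuit-to-Hamiltonian construction~\cite{KSV02} to this circuit yields an \emph{explicit} Local Hamiltonian instance $\tilde q$ of size $\exp(n)$ whose ground-state-energy decision is a standard $\QMA$ query equivalent to $q$ — and an invalid (promise-violating) $\QMAEXP$ query maps to an invalid $\QMA$ query, so the convention that the output not depend on answers to invalid queries~\cite{Goldreich_2006} is preserved. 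An exponential-time machine $M_1$ therefore simulates $M$, writing out $\tilde q$ whenever $M$ issues $q$ and forwarding the $\QMA$-oracle's reply, so $M$ is decided in $\EXP^{\QMA}$ with $t$ adaptive, exponential-size $\QMA$ queries. \emph{Step 2: adaptive to parallel.} The $t$ adaptive queries of $M_1$ form a binary decision tree of depth $t$ with at most $2^{t}-1=\exp(n)$ nodes; an exponential-time machine $M_2$ computes every query in the tree and issues them as one batch of $\exp(n)$ parallel $\QMA$ queries. Given the answers, $M_2$ descends the tree from the root, at each node reading off the answer to the query on its current branch, thereby replaying the run of $M_1$ under those same answers, and outputs $M_1$'s verdict; this is sound even with invalid queries present because $M_1$'s verdict is invariant under every oracle consistent on the valid queries, and $M_2$ has fixed one such oracle. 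Hence $\PQMAexp\subseteq\EXP^{\parallel\QMA}=\EXPpar$, which together with the first inclusion gives the theorem.

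The main points to get right are the treatment of promise-violating queries: in Step 1 one must check the succinct-to-explicit translation sends invalid $\QMAEXP$ queries to invalid $\QMA$ queries so that the Goldreich-style convention survives end to end, and in Step 2 one must argue soundness of the parallel simulation in their presence — this step is essentially an exponentially scaled-up version of $\PQMAlog=\PQMApar$~\cite{GPY20}, and one may alternatively invoke a relativised form of that result. It is also worth noting why the first inclusion is routed through APX-SIM rather than proven directly: a direct compression of the exponentially many parallel $\QMA$ answers into a few $\QMAEXP$ queries would require certifying within $\QMAEXP$ that certain queries are NO-instances, which the promise nature of $\QMA$ forbids, whereas the containment of \cref{Lemma:Containment_Proof} escapes this because the answer to APX-SIM is governed by a single quantity — the ground-state energy — which can be pinned down by binary search using only single-witness queries, in the manner of Ambainis' algorithm~\cite{Ambainis2013}. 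The only resource bookkeeping is that the $2^{\poly(n)}\cdot\poly(n)=\exp(n)$ parallel queries and the tree descent of Step 2 stay within the budget of an exponential-time base machine, which they do.
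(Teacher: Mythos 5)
Your proposal is correct and follows essentially the same route as the paper: the forward inclusion via composing the $\EXPpar$-hardness of TI-APX-SIM (\cref{Corollary:1D_TI_NN_Hardness}) with its containment in $\PQMAexp$ (\cref{Lemma:Containment_Proof}), and the reverse inclusion via Beigel's decision-tree trick together with expanding each succinct $\QMAEXP$ query into an explicit exponential-size $\QMA$ query. The only cosmetic difference is that the paper expands queries by writing out the $\exp(n)$ local terms of a TI Hamiltonian (using \QMAEXP-completeness of TI-LH) rather than applying Kitaev's construction to the verification circuit, and your added care about promise-violating queries is a welcome refinement of a point the paper leaves implicit.
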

\begin{proof}
    For $\EXP^{\parallel\QMA}\subseteq\PQMAexp$,  \cref{Corollary:1D_TI_NN_Hardness} showed $\forall$-TI-APX-SIM (and hence APX-SIM) is $\EXP^{\parallel \QMA}$-hard in the $1/\exp$-gap regime. But since the family of Hamiltonians in TI-APX-SIM can be verified in the $1/\exp$-gap regime in \QMAEXP, \cref{Lemma:Containment_Proof} says $\textup{APX-SIM}\in \PQMAexp$.
	
	The reverse containment, $\PQMAexp\subseteq \EXP^{\parallel\QMA}$ is similar to Beigel's original proof~\cite{Beigel91} that $\PNPlog\subseteq\PNPpar$: the \EXP machine makes all possible queries within the decision tree of the \PP machine up front and in parallel, then simulates the P machine, along the way selectively using the results of whichever queries the adaptive nature of the P machine would have needed. The only catch is that normally, Hamiltonians verified in \QMAEXP act on $\exp(n)$ qubits, whereas those in \QMA act on $\poly(n)$ qubits. However, since an $\EXP$ Turing Machine can write exponentially long queries to the oracle, it can simply take the TI Hamiltonian $H$ the P machine would have fed to the \QMAEXP oracle (here we are using that the TI-local Hamiltonian problem is \QMAEXP-complete~\cite{Gottesman2009}), and explicitly write it out fully (i.e.\ write out all $\exp(n)$ local terms $H_i$), and feed it to the QMA oracle (which, being fed an $\exp(n)$-size input, is allowed an $\exp(n)$-size proof by definition).
\end{proof}

\begin{corollary}\label{Corollary:Completeness}
	$\forall$-TI-APX-SIM and TI-APX-SIM are both $\PQMAexp$-complete for families of Hamiltonians for which the Local Hamiltonian problem is \QMAEXP-complete.
\end{corollary}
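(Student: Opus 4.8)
The plan is to combine three facts already established in the excerpt: the abstract hardness lifting of \cref{lem:lift} (as instantiated in \cref{Corollary:1D_TI_NN_Hardness,Corollary:3D_TI_NN_Hardness}), the containment of \cref{Lemma:Containment_Proof}, and the identity $\EXP^{\parallel\QMA}=\PQMAexp$ of \cref{thm:equality} — glued together by the trivial many-one reduction from $\forall$-TI-APX-SIM to TI-APX-SIM recorded after \cref{def:TIAPX}. Since that reduction shows hardness transfers from $\forall$-TI-APX-SIM to TI-APX-SIM and containment transfers from TI-APX-SIM to $\forall$-TI-APX-SIM, it suffices to show $\forall$-TI-APX-SIM is $\PQMAexp$-hard and TI-APX-SIM lies in $\PQMAexp$.

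For hardness, let $\mathcal F$ be a family of translationally invariant Hamiltonians for which the Local Hamiltonian problem is \QMAEXP-complete; in particular LH on $\mathcal F$ is \QMAEXP-hard, and such a hardness proof supplies a local circuit-to-Hamiltonian mapping $\Hw$ (\cref{def:local-mapping}) producing $\mathcal F$ — concretely, this is exactly the situation for each of \cite{Gottesman2009,Bausch2016,Bausch2017,Kohler2020} in \cref{tab:uwHs}. Feeding $\Hw$ into \cref{lem:lift} with $\class D=\EXP$ and $\class Q=\QMA$ (with verifiers and proofs of size $\exp(n)$, implemented via QTMs exactly as in the proof of \cref{Corollary:1D_TI_NN_Hardness}), and choosing the free parameter $\alpha$ as a sufficiently large fixed exponential in $n$ after error-reducing the verifiers to completeness $1-1/\exp(n)$ and soundness $1/\exp(n)$ (so that $\Delta,\g(m,T)\in 1/\exp(n)$ and the promise gap produced by \cref{lem:lift} is $1/\exp(n)$), yields $\EXP^{\parallel\QMA}$-hardness of $\forall$-TI-APX-SIM for $\mathcal F$; crucially, because $M_2$ conforms to $\Hw$, the constructed Hamiltonian stays within $\mathcal F$ (same geometry, locality, translational invariance). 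By \cref{thm:equality} this is $\PQMAexp$-hardness, and hence TI-APX-SIM for $\mathcal F$ is $\PQMAexp$-hard as well.

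For containment, since LH on $\mathcal F$ is \QMAEXP-complete it lies in the QVClass \QMAEXP, and since $\mathcal F$ is specified succinctly all its promise gaps are $1/\poly(N)=1/\exp(n)$, i.e.\ we are in the $1/\exp$-gap regime of \cref{Lemma:Containment_Proof}. That lemma then places TI-APX-SIM for $\mathcal F$ in $\PP^{\QMAEXP}$ (polynomially many adaptive queries), and the Beigel-style direction of the proof of \cref{thm:equality} applies verbatim: an \EXP machine makes, up front and in parallel, all of the (at most $2^{\poly(n)}$) queries in the decision tree of the polynomial-time \PP machine, writing each succinct \QMAEXP instance out as an explicit \QMA instance on $\exp(n)$ qubits, so $\PP^{\QMAEXP}\subseteq\EXP^{\parallel\QMA}=\PQMAexp$. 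Hence TI-APX-SIM — and therefore $\forall$-TI-APX-SIM, which reduces to it — is in $\PQMAexp$, which together with the previous paragraph gives $\PQMAexp$-completeness of both problems.

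The one step that is not purely mechanical, and thus the part I would be most careful about, is the passage from ``LH for $\mathcal F$ is \QMAEXP-complete'' to ``there is a mapping satisfying \cref{def:local-mapping} producing $\mathcal F$ to which \cref{lem:lift} can be applied'': for the concrete families of \cref{tab:uwHs} this is read off directly, but phrasing it as a general statement forces one to pin down precisely which minimal interface ($\Hw$, $M_1$, $M_2$, $P_T$, $\f$, $\g$) is extracted from a given \QMAEXP-hardness proof. Everything else is a direct appeal to \cref{lem:lift,Lemma:Containment_Proof,thm:equality} together with the $\forall$-TI-APX-SIM $\le$ TI-APX-SIM reduction.
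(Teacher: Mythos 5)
Your proposal is correct and follows essentially the same route as the paper, whose proof is a one-line appeal to exactly the three ingredients you identify: the hardness instantiation of the Lifting Lemma (\cref{Corollary:1D_TI_NN_Hardness}), the containment result (\cref{Lemma:Containment_Proof}), and the identity $\EXP^{\parallel\QMA}=\PQMAexp$ (\cref{thm:equality}), glued together by the trivial reduction from $\forall$-TI-APX-SIM to TI-APX-SIM. The caveat you flag at the end---that ``LH is \QMAEXP-complete for $\mathcal F$'' must be read as supplying a mapping satisfying \cref{def:local-mapping}, which is verified concretely for the constructions in \cref{tab:uwHs} rather than in full generality---is a fair observation about the corollary's phrasing, and the paper glosses over it in exactly the same way.
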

\begin{proof}
	Follows from \cref{Corollary:1D_TI_NN_Hardness}, \cref{Lemma:Containment_Proof}, and \cref{thm:equality}.
\end{proof}

\begin{corollary}\label{Corollary:1D_TI_NN_Completeness}
    TI-APX-SIM is $\PP^{\QMAEXP}$-complete for a 1D, nearest neighbour, translationally invariant Hamiltonian for a 2-local observable $A$, and $\delta=1/\exp(n)$, $b - a =\Omega(1/\exp(n))$.
\end{corollary}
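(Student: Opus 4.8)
The plan is to assemble this statement from the machinery already in place: it is essentially \cref{Corollary:Completeness} instantiated for the concrete family of 1D, nearest-neighbour, translationally invariant Hamiltonians, and I would establish hardness and containment separately.

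For hardness, \cref{Corollary:1D_TI_NN_Hardness} already shows that $\forall$-TI-APX-SIM (and hence TI-APX-SIM) is $\EXPpar$-hard for a 1D, nearest-neighbour, translationally invariant Hamiltonian with $\delta = 1/\exp(n)$ and $b-a = \Omega(1/\exp(n))$; applying \cref{thm:equality}, $\EXPpar = \PQMAexp$, so TI-APX-SIM is $\PQMAexp$-hard in exactly this parameter regime. The only discrepancy with the present statement is the locality of the observable: \cref{Corollary:1D_TI_NN_Hardness} uses a $1$-local observable $A$, obtained from the local-dimension-$42$ construction of \cite{Bausch2016} by adjoining two extra symbols so that $M_1$ (and $M_2$) become $1$-local. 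Since a $1$-local observable is a fortiori $2$-local, the hardness direction follows immediately; alternatively, one instantiates \cref{lem:lift} directly with the \emph{raw} Gottesman--Irani construction \cite{Gottesman2009} (or \cite{Bausch2016}), for which $M_1$ and $M_2$ are naturally $2$-local observables supported at the final time step $T$, so the two-extra-symbols modification is not needed at all when a $2$-local observable is allowed.

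For containment, the family $\mathcal{F}$ of 1D, nearest-neighbour, translationally invariant Hamiltonians has the property that the local Hamiltonian problem for $\mathcal{F}$ in the succinct $1/\exp$-gap regime is $\QMAEXP$-complete \cite{Gottesman2009}; in particular $\textup{LH}(g)$ for $\mathcal F$ lies in $\QMAEXP$. \cref{Lemma:Containment_Proof}, applied in the $1/\exp$-gap regime with $\textup{Q}=\QMAEXP$, then yields $\textup{TI-APX-SIM}\in\PP^{\QMAEXP}$, and $\PP^{\QMAEXP}=\PQMAexp$ (equivalently, by \cref{thm:equality}). Combining the two directions gives $\PP^{\QMAEXP}$-completeness with the stated parameters.

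I do not expect a genuine obstacle here; the work is bookkeeping. The one point that warrants a brief check is that the concrete circuit-to-Hamiltonian mapping invoked (\cite{Gottesman2009} or \cite{Bausch2016}) actually meets the abstract hypotheses of \cref{def:local-mapping}, so that \cref{lem:lift} is applicable: namely that (i) its null space is spanned by correctly-initialised history states, (ii) output wires $1$ and $2$ of the encoded circuit can be read off at time $T$ by $2$-local observables $M_1,M_2$ with $M_2$ conforming to the 1D translationally invariant structure, and (iii) the uniform-history-state weight $g(m,T)$ on the final time step is well defined and nonzero. These are asserted in \cref{tab:uwHs} and discussed in the remarks following \cref{def:local-mapping}. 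Given that, taking $\alpha$ to be a fixed polynomial in $N$ (hence a fixed exponential in $n$), exactly as in the proof sketch of \cref{Corollary:1D_TI_NN_Hardness}, produces the promise gaps $\delta, b-a = \Omega(1/\exp(n))$ claimed, completing the proof.
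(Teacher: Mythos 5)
Your proposal is correct and follows essentially the same route as the paper: the paper derives this corollary directly from \cref{Corollary:Completeness}, which is itself the combination of \cref{Corollary:1D_TI_NN_Hardness}, \cref{Lemma:Containment_Proof}, and \cref{thm:equality} that you spell out. Your observation about the $1$-local versus $2$-local observable is consistent with the paper's own remarks (a $1$-local observable is in particular $2$-local, and the raw constructions give $2$-local $M_1,M_2$ without the two extra symbols).
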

\begin{proof}
    From \cref{Corollary:Completeness}.
\end{proof}

\section{QCMA\textsubscript{EXP}-Completeness of TI-GSCON}\label{sec:TI_GSCON}

The main theorem of this section is \cref{thm:TIGSCON} (\cref{sscn:introGSCON} of Introduction), which recall says \TIGSCON is \QCMAEXP-complete for 1D, nearest neighbour, translationally invariant Hamiltonians on N qudits. We begin with relevant definitions and tools in \cref{sscn:defGSCON}. \Cref{sscn:qcmaexpcomplete} gives a framework for lifting general 1D translationally invariant circuit-to-Hamiltonian constructions to \QCMAEXP-hardness results. \cref{sscn:GI} then instantiates this framework using the Gottesman-Irani 1D construction~\cite{Gottesman2009}, yielding Theorem~\ref{thm:TIGSCON}.

\subsection{Definitions and Tools}\label{sscn:defGSCON}

\begin{definition}[$\p$-orthogonal states and subspaces\cite{GS14}]\label{def:korth}
For $\p \geq 1$, a pair of states ${\ket{v},\ket{w}\in(\C^d)^{\otimes N}}$ is \emph{$\p$-orthogonal} if for all $\p$-qudit unitaries $U$, we have $\bra{w}U\ket{v}=0$. We call subspaces $S,T \subseteq(\C^d)^{\otimes N}$ \emph{$\p$-orthogonal} if any pair of vectors $\ket{v}\in S$ and $\ket{w}\in T$ are $\p$-orthogonal.
\end{definition}

\noindent For example, $\ket{000},\ket{100}\in(\C^2)^{\otimes 3}$ are orthogonal but not $\p$-orthogonal for any $\p\geq 1$. Analogously, for $S$ and $T$ the $+1$ eigenspaces of $I \otimes \ketbra{000}{000}$ and $I \otimes \ketbra{111}{111}$, respectively, $S$ and $T$ are $2$-orthogonal subspaces.

We now state the Traversal Lemma~\cite{GS14,GMV17}. The version below is the special case of Lemma 9 of~\cite{GMV17} in which one sets $V=0$.

\begin{lemma}[Traversal Lemma~\cite{GS14,GMV17}]\label{l:traversal}
    Let $S,T \subseteq(\C^d)^{\otimes N}$ be $\p$-orthogonal subspaces. Fix arbitrary states $\ket{v}\in S$ and $\ket{w}\in T$, and consider a sequence of $\p$-qudit unitaries $(U_i)_{i=1}^m$ such that
    \[
        \norm{\ket{w}- U_m\cdots U_1 \ket{v}}_2 \leq \epsilon
    \]
for some $0\leq \epsilon < 1/2$. Define $\ket{v_i}\coloneqq U_i \cdots U_1 \ket{v}$ and $P\coloneqq I-\Pi_S-\Pi_T$. Then, there exists an $i\in[m]$ such that
    \[
        \bra{v_i} P \ket{v_i}\geq \left(\frac{1- \epsilon}{m}\right)^2.
    \]
\end{lemma}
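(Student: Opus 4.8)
The plan is a hybrid (telescoping) argument that tracks how much of the evolving state $\ket{v_i}$ lies in the target subspace $T$. Since $\p$-orthogonality of $S$ and $T$ is, in particular, ordinary orthogonality (apply \cref{def:korth} with $U=I$, which acts nontrivially on zero qudits and is a legitimate $\p$-qudit unitary), we have $\Pi_S\Pi_T=0$, so $\Pi_S+\Pi_T$ is a projector onto $S\oplus T$ and $P=I-\Pi_S-\Pi_T$ is the orthogonal projector onto $(S\oplus T)^\perp$. In particular $\bra{v_i}P\ket{v_i}=\norm{P\ket{v_i}}_2^2$, and for each $i$ the vectors $\Pi_S\ket{v_i}$, $\Pi_T\ket{v_i}$, $P\ket{v_i}$ are mutually orthogonal and sum to $\ket{v_i}$, so their squared norms sum to $1$.

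\textbf{Key one-step estimate.} The structural input is that each $U_i$ is a $\p$-qudit unitary, so $\p$-orthogonality gives the operator identity $\Pi_T U_i \Pi_S = 0$: the definition provides $\bra{w}U_i\ket{v}=0$ for all $\ket{v}\in S$, $\ket{w}\in T$, and this extends by (bi)linearity to the vector $\Pi_S\ket{v_{i-1}}\in S$. Writing $t_i\coloneqq\norm{\Pi_T\ket{v_i}}_2$ and $p_i\coloneqq\norm{P\ket{v_i}}_2$, and decomposing $\ket{v_{i-1}}=\Pi_S\ket{v_{i-1}}+\Pi_T\ket{v_{i-1}}+P\ket{v_{i-1}}$, I would compute
\[
\Pi_T\ket{v_i}=\Pi_T U_i\ket{v_{i-1}}=\Pi_T U_i\big(\Pi_T\ket{v_{i-1}}+P\ket{v_{i-1}}\big),
\]
the $\Pi_S\ket{v_{i-1}}$ term having been annihilated by $\Pi_T U_i$. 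Using unitarity of $U_i$ and $\norm{\Pi_T\cdot}_2\le\norm{\cdot}_2$, the triangle inequality then yields the one-step bound $t_i\le t_{i-1}+p_{i-1}$.

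\textbf{Endpoints and pigeonhole.} Next I would fix the boundary values: at $i=0$, $\ket{v_0}=\ket v\in S$ gives $t_0=0$ and $p_0=\norm{P\ket v}_2=0$; at $i=m$, since $\norm{\ket w-\ket{v_m}}_2\le\epsilon$ and $\Pi_T\ket w=\ket w$, we get $t_m=\norm{\Pi_T\ket{v_m}}_2\ge\norm{\ket w}_2-\norm{\Pi_T(\ket{v_m}-\ket w)}_2\ge 1-\epsilon$. Telescoping the one-step bound over $i=1,\dots,m$ gives $1-\epsilon\le t_m-t_0\le\sum_{i=1}^m p_{i-1}=\sum_{i=0}^{m-1}p_i$, hence $\max_{0\le i\le m-1}p_i\ge(1-\epsilon)/m$. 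Because $p_0=0<(1-\epsilon)/m$ (here the hypothesis $\epsilon<1/2$ is only used to keep the right-hand side positive), the maximum is attained at some $i\in\set{1,\dots,m-1}\subseteq[m]$, and squaring gives $\bra{v_i}P\ket{v_i}=p_i^2\ge\left(\frac{1-\epsilon}{m}\right)^2$, as required.

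\textbf{Main obstacle.} There is no deep obstacle: the proof is a short hybrid argument. The only points demanding care are (i) justifying the operator identity $\Pi_T U_i\Pi_S=0$ by applying $\p$-orthogonality to the possibly unnormalized vector $\Pi_S\ket{v_{i-1}}$ (valid by homogeneity of the condition $\bra{w}U\ket{v}=0$), and (ii) the bookkeeping in the pigeonhole step to land the index $i$ inside $[m]$ rather than at $i=0$.
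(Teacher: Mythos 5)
Your proof is correct. Note that the paper itself does not prove this lemma—it imports it verbatim from~\cite{GS14,GMV17}—and your argument is essentially the standard one from those references: establish $\Pi_T U_i\Pi_S=0$ from $\p$-orthogonality, telescope the one-step bound $t_i\le t_{i-1}+p_{i-1}$ from $t_0=0$ to $t_m\ge 1-\epsilon$, and pigeonhole. All steps check out, including the bookkeeping that places the witnessing index in $[m]$; the only cosmetic remark is that your argument needs only $\epsilon<1$ rather than the stated $\epsilon<1/2$.
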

\noindent Intuitively, the Traversal Lemma says that for any $\p$-orthogonal subspaces $S$ and $T$ with $\ket{v} \in S$ and $\ket{w} \in T$, and any sequence of $m$ $\p$-qudit unitaries mapping $\ket{v}$ to $\ket{w}$, there exists a time step $i\in[m]$ when our intermediate state has non-trivial overlap with the orthogonal complement of $S$ and $T$.

\subsection{Generic hardness constructions via a Lifting Lemma}\label{sscn:qcmaexpcomplete}

We now give a black-box mapping for ``lifting'' 1D translationally invariant circuit-to-Hamiltonian constructions to \QCMAEXP-hardness results for GSCON. While the end-goal is similar to the Lifting Lemma for APX-SIM (\cref{lem:lift}), here there are two further restrictions: first, we do not consider general classes (such as $\DQ$), but \QCMAEXP---this is roughly because here we need the ability to prepare history states efficiently, for which a classical proof appears necessary. Second, we restrict attention to general 1D translationally invariant circuit-to-Hamiltonian mappings (``TI-standard'' mappings; see \cref{def:TIstd} below), as opposed to the broader definition of ``local circuit-to-Hamiltonian mapping'' of \cref{def:local-mapping}. This is because for our GSCON hardness construction, we couple polynomially many ``switch qudits'' to whichever Hamiltonian construction $H$ we start with; this can immediately distort the structural properties of $H$ itself. As a result, proving generic statements capturing any possible setup beyond 1D can become cumbersome. However, to be clear, our framework \emph{can} in principle be generalized beyond 1D, since it allows ``switching on'' Hamiltonian terms in an arbitrary order (\cref{sscn:o3}).

To begin, we define the general class of 1D TI circuit-to-Hamiltonian mappings which we consider here. Below, a ``verification circuit'' should be viewed as a QCMA or \QCMAEXP verifier. (We omit QMA or \QMAEXP here, since the ``efficiently preparable proof'' requirement for the YES case below implies we can assume a classical proof.)

\begin{definition}[TI-standard]\label{def:TIstd}
    A circuit-to-Hamiltonian mapping from verification circuits $V$ to 1D, nearest neighbour, translationally invariant Hamiltonians $H=\sum_{i=1}^{N-1}H_{i,i+1}$ is \emph{TI-standard} if it satisfies the following conditions. Below, $N$ denotes the number of qudits $H$ acts on, and $\alpha$ and $\beta$ the completeness/soundness (a.k.a. ``low energy'' and ``high energy'') parameters for $H$, respectively:
\begin{enumerate}
    \item $H\succeq 0$, although the local terms may satisfy $H_{i,i+1}\not\succeq 0$.

    \item In the YES case, if the optimal proof to verifier $V$ is a classical string $y\in\set{0,1}^{\poly(N)}$, then there exists a (potentially non-uniformly generated) quantum circuit of size $L\in\poly(N)$ preparing a low energy state $\ket{\psilow}$ for $H$, i.e.\ $\bra{\psilow}H\ket{\psilow}\leq \alpha$.
    \item In the YES case, the subset of indices $F\subseteq[N-1]$ for which $H_{i,i+1}$ contributes negative energy to the low-energy state, i.e.\ all $i$ for which $H_{i,i+1}$ satisfies
        $
            \bra{\psilow}H_{i,i+1}\ket{\psilow}<0,
        $
        is computable in $\poly(N)$ time\footnote{One can replace the $<0$ condition here with $<-1/p(N)$ for some sufficiently small polynomial $p$; we omit this for simplicity.}.

    \item In the NO case, $\lmin(H)\geq \beta$. Here, we require $\abs{\alpha-\beta}\geq 1/\poly(N)$ (which is standard in the literature) and $\beta\geq 16(2L+7N)\alpha\geq 0$ (which is specific to our construction).
\end{enumerate}
\end{definition}
\noindent All of these assumptions are rather mild, as we now clarify.\\
\vspace{-1mm}

\noindent \emph{Remarks regarding \cref{def:TIstd}:}
 \begin{itemize}
    \item Assumptions 1 and 4  must be taken together (otherwise, $H\succeq 0$ can always be achieved by adding a multiple of the identity).

    \item The setting of $H\succeq 0$ but $H_i\not\succeq 0$ arises when applying our construction to the Gottesman-Irani 1D TI mapping (henceforth GI)~\cite{Gottesman2009} in \cref{sscn:GI}. Specifically, GI is not TI-standard in its original form, since it violates the final requirement $\beta\geq 16(2L+7N)\alpha$, which is crucial to our use of the Traversal Lemma. As we show in \cref{sscn:GI}, GI can be made TI-standard with minor modifications. In doing so, however, although the local terms of GI satisfy $H_i\succeq 0$, after our modification to TI-standard form, we no longer have $H_i\succeq 0$. Also, we remark the choice of $16(2L+7N)$ arises from our framework below; in principle, most Hamiltonian constructions such as GI can easily be made to satisfy $\beta\geq p(N)\alpha$ for any large enough polynomial $p$.

    \item Assumption 2 holds for most circuit-to-Hamiltonian constructions in the literature (such as~\cite{KSV02} and~\cite{Gottesman2009}), since the optimal (non-uniform) classical proof $y$ can be plugged into the corresponding history state construction.

    \item Assumption 3 is vacuously true when all $H_i\succeq 0$. When $H_i\not\succeq 0$ for some $i$, however, this is also generally a mild assumption, since it only cares about energies against $\ket{\psilow}$, which is typically a history state of some form (and thus the structure of $\ket{\psilow}$ is well-understood).
 \end{itemize}

\paragraph{Main lemma.} We now state the main lemma of this section, It is proved in Sections \ref{sscn:o1}, \ref{sscn:o2}, \ref{sscn:o3}, and \ref{sscn:finalGSCON}.
\begin{lemma}[Lifting Lemma for GSCON]\label{lem:liftGSCON}
    Let $V$ be a verifier for a \QCMAEXP promise problem. Fix any TI-standard circuit-to-Hamiltonian mapping which produces 1D TI Hamiltonians on qudits of local dimension $d$, and any $b\in\set{2,\ldots, N-1}$. Then, there exists a $\poly(\log N)$-time many-one reduction mapping any instance $x$ for $V$ to a TI-GSCON instance $H$ on $N$ qudits of local dimension $7d$ with $\p$-local unitaries $U_i$, such that $m\in\poly(N)$ and $\delta\in\Theta(1/\poly(N))$.
\end{lemma}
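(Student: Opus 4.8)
The plan is to adapt the ``switch gadget'' reduction of~\cite{GS14} to the $1$D translationally invariant setting. First I would run the given TI-standard mapping (\cref{def:TIstd}) on the \QCMAEXP verifier $V$ and instance $x$, obtaining a $1$D nearest-neighbour translationally invariant Hamiltonian $H=\sum_{i=1}^{N-1}H_{i,i+1}$ on $N$ qudits of local dimension $d$, with low-/high-energy parameters $\alpha,\beta$ obeying $\beta\geq 16(2L+7N)\alpha\geq 0$, such that in the YES case a size-$L$, $\poly(N)$ circuit prepares a state $\ket{\psilow}$ with $\bra{\psilow}H\ket{\psilow}\leq\alpha$, in the NO case $\lmin(H)\geq\beta$, and the set $F\subseteq[N-1]$ of bonds on which $H_{i,i+1}$ contributes negative energy to $\ket{\psilow}$ is $\poly(N)$-computable (Assumption~3). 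I then build the TI-GSCON Hamiltonian $H'$ on $N$ qudits of local dimension $7d$: onto each site I tensor a $7$-state ``switch'' register (\cref{sscn:o1}, in the spirit of the space-time construction of~\cite{Breuckmann_Terhal_2014}), and let $H'$ be the sum of (i) a switch-controlled copy $\tilde H_{i,i+1}$ of each $H_{i,i+1}$, acting as $H_{i,i+1}$ on the data registers exactly when the two adjacent switches are in ``active'' states and as $0$ otherwise, and (ii) translationally invariant forbidden-pattern ``string constraints'' $H_{\mathrm{str}}$ on the switch registers, diagonal in the switch-configuration basis, with ground space $\mathcal W$ a low-dimensional subspace of switch configurations shaped as in \cref{fig:neighbour} and leaving $\mathcal W$ heavily penalized. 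The start and target states are the product states $\ket\psi=\ket{r_{\mathrm A}}^{\otimes N}\otimes\ket{0}^{\otimes N}$ and $\ket\phi=\ket{r_{\mathrm B}}^{\otimes N}\otimes\ket{0}^{\otimes N}$ built from two distinct ``rest'' switch states; since no switch is active and all string constraints hold, both have energy $0$ and are trivially efficiently preparable. I take $\p$ to be the given value in $\set{2,\dots,N-1}$, $m$ a fixed polynomial large enough for the completeness construction, $\aaa$ slightly above $\alpha$ (so $\aaa\geq 0$), $\bbb=\aaa+\delta$, $\ccc$ a small inverse polynomial, $\ddd=\ccc+\delta$, with $\delta\in\Theta(1/\poly(N))$. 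Only $O(\log N)$-bit objects are output --- the single local term of $H'$ (built from the $O(\log N)$-bit TI-standard local term), $N$ in binary, the product-state circuits $U_\psi,U_\phi$, and the parameters --- all computable from $x$ in $\poly(\log N)$ time (in particular $V$ and $H$ are never written out), so this is a $\poly(\log N)$-time many-one reduction.

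For completeness (YES case) I would give the sequence of $\p$-local unitaries explicitly, in four phases: (0) run the circuit preparing $\ket{\psilow}$ on the data registers while every switch stays in $r_{\mathrm A}$, so $\tilde H$ is inert, $H_{\mathrm{str}}$ is satisfied, and the energy is $0$; (1) ``switch on'' the bonds in groups, in the \emph{non-linear} order that activates the negative-energy bonds $F$ first and the rest afterwards (\cref{sscn:o3}; the seven switch states and the shape of $\mathcal W$ are designed precisely to allow such orders), so that every partial sum $\sum_{i\in A}\bra{\psilow}H_{i,i+1}\ket{\psilow}$ over the currently active set $A$ stays at most $\bra{\psilow}H\ket{\psilow}\leq\alpha$ and, with $\ket{\psilow}$ held fixed, the energy never exceeds $\alpha\leq\aaa$; (2)--(3) once all bonds are active, sweep the switches over to the $r_{\mathrm B}$-side and run the preparation circuit in reverse, ending exactly at $\ket\phi$. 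This uses $m\in\poly(N)$ steps of $\p$-local unitaries for every $\p\geq 2$, every intermediate state has energy $\leq\aaa$, and the final state equals $\ket\phi$.

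For soundness (NO case) I would argue by contradiction via the Traversal Lemma (\cref{l:traversal}). Suppose a sequence $(U_i)_{i=1}^m$ of $\p$-local unitaries takes $\ket\psi$ to within $\ddd<1/2$ of $\ket\phi$ with every intermediate state of energy below $\bbb$. The key structural content of the construction (\cref{sscn:o1,sscn:o2}, the ``trapped''/``logically protected'' phenomenon) is that $\mathcal W$ decomposes into a ``start'' subspace $S\ni\ket\psi$ (spanned by all switch configurations using only $r_{\mathrm A}$ and one designated family of active states, tensored with an arbitrary data state), a symmetric ``target'' subspace $T\ni\ket\phi$, and a residual part on which \emph{all} $N-1$ switch terms are active at once; moreover $S$ and $T$ use \emph{disjoint} switch values at every one of the $N$ sites, hence are $\p$-orthogonal (\cref{def:korth}) for every $\p\leq N-1$ --- this is exactly what makes the reduction robust against $(N-1)$-local unitaries. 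Since $H_{\mathrm{str}}$ heavily penalizes states off $\mathcal W$, every below-$\bbb$ state is essentially supported on $\mathcal W$; and, $H'$ being block-diagonal across $S$, $T$ and their orthogonal complement and acting as $H\succeq\beta I$ on the all-active subspace (which lies in the range of $P\coloneqq I-\Pi_S-\Pi_T$), the Traversal Lemma applied with these $S,T$ forces an intermediate state with inverse-polynomial weight on the all-active subspace, hence with energy a fixed inverse-polynomial multiple of $\beta$. Choosing the parameters so that the bound $\beta\geq 16(2L+7N)\alpha$ of \cref{def:TIstd} makes this multiple of $\beta$ exceed $\bbb$ --- which is precisely why that inequality is built into TI-standard --- yields the contradiction, so NO instances of $V$ map to NO instances of \TIGSCON.

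I expect the main obstacle to be the joint design of the seven-state switch gadget and the string constraints $H_{\mathrm{str}}$ --- equivalently, of $\mathcal W$ together with the pair $S,T$ --- so that three competing requirements hold simultaneously: (a) the honest sweep of the completeness construction, including the non-linear activation order, stays inside $\mathcal W$; (b) \emph{every} switch configuration of $\mathcal W$ outside $S\cup T$ forces all $N-1$ bonds to be simultaneously active, killing the ``partial activation'' cheat (\cref{sscn:o2}); and (c) $S$ and $T$ disagree at every site, so as to be $\p$-orthogonal even for $\p=N-1$. Reconciling (a)--(c) is exactly where a single switch qubit fails and seven switch states are needed. The remaining work --- verifying the $S/T$ block structure of $H'$, bounding the contributions of the switch-control and string-constraint terms, and tracking the polynomials in the Traversal-Lemma estimate against $16(2L+7N)$ so the promise gaps come out as claimed --- is detailed verification, carried out in \cref{sscn:o1,sscn:o2,sscn:o3,sscn:finalGSCON}.
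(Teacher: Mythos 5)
Your proposal follows essentially the same route as the paper: per-site switch registers of dimension $7$ controlling the TI-standard terms, translationally invariant forbidden-substring constraints, a non-linear (negative-bonds-first) activation order for completeness, and the Traversal Lemma applied to two switch subspaces with disjoint per-site alphabets to force full activation and an energy $\gtrsim \beta/m^2$ in the NO case. The only substantive deviations are cosmetic --- you take $S,T$ to constrain all $N$ switch sites rather than only the last $\p+1$ as in the paper's $S_{012},S_{456}$ (both give $\p$-orthogonality up to $\p=N-1$ and the same ``all symbols in $\set{2,3,4}$'' conclusion via the string constraints), and you defer the explicit neighbour-relation table of \cref{fig:neighbour}, whose three requirements you nonetheless identify correctly.
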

\noindent Above, $\p$, $m$ and $\delta$ are parameters from the definition of GSCON (\cref{def:GSCON}). Note the runtime of the reduction is $\poly(\log N)$, as opposed to $\poly(N)$, since as with all 1D TI constructions, it only outputs a single local interaction term $H_{i,i+1}$ (acting in this case on $\C^{7d}\otimes\C^{7d}$) and the length of the chain $N$ in binary. Throughout, we assume $d\in O(1)$, as is standard.

\paragraph{Proving \cref{lem:liftGSCON} in steps.} Given an instance $x$ of a \QCMAEXP promise problem with verification circuit $V$, our goal is to map it to an instance $(H,\ket{\psi},\ket{\phi},m)$ of TI-GSCON. In order to build intuition for the construction, we first briefly review the \QCMA-completeness proof of \GSCON of~\cite{GS14}. Then, we show how to overcome various obstacles in adapting this to the translationally invariant setting.\\

\noindent\emph{Review of~\cite{GS14}.} Let $H'$ be a local Hamiltonian encoding a \QCMA computation (via, e.g., the Kitaev circuit-to-Hamiltonian construction~\cite{KSV02}). For this review, assume $\p=2$. The high-level premise of \cite{GS14} is to construct a ``switch gadget'', so that ``traversing the low energy space'' of some Hamiltonian $H$ corresponds to ``switching'' $H'$ on and off. Specifically, consider
\begin{equation}\label{eqn:gs}
    H=H'_{\sA}\otimes (I-\ketbra{000}{000}-\ketbra{111}{111}_{\sB}),
\end{equation}
where we have appended $3$ ``switch qubits'' in a new ``switch register'' $\sB$. Set the start and final states to $\ket{\psi}=\ket{0\cdots 0}_{\sA}\ket{000}_{\sB}$ and $\ket{\phi}=\ket{0\cdots 0}_{\sA}\ket{111}_{\sB}$, respectively. Intuitively, in the YES case, to map $\ket{\psi}$ to $\ket{\phi}$, an honest prover first prepares the history state of $H'$ in $\sA$ (which can be done with a (non-uniform) polynomial size circuit, since $\QCMA$ has a \emph{classical} proof), and then flips each of the bits in $\sB$ sequentially from $0$ to $1$. Whereas $\ket{\psi}$ is in the null space of $H$, as soon as the first two switch qubits are turned on, the current state is in the support of $H$, and so Hamiltonian $H'$ is ``activated''. Since we are in a YES case, however, the history state has low energy against $H'$, and so we satisfy the YES conditions of \GSCON. In the NO case, one uses the Traversal Lemma (Lemma~\ref{l:traversal}) to show that activating $H'$ in this manner is unavoidable; any sequence of $2$-local unitaries mapping $\ket{\psi}$ to $\ket{\phi}$ must have non-trivial overlap at some time step on $(I-\ketbra{000}{000}-\ketbra{111}{111}_{\sB})$, thus activating $H'$, which administers a large energy penalty. Thus, we satisfy the NO conditions of \GSCON.

\subsubsection{Obstacle 1: Geometric locality} \label{sscn:o1}
Although the construction of~\cite{GS14} is  oblivious to the structure of $H'$, the switch gadget of Equation~(\ref{eqn:gs}) itself is highly geometrically non-local---each qubit of switch register $\sB$ couples to \emph{all} qubits of $H'$. This allows one to simultaneously switch on or off {all} local terms of $H'$. However, in 1D  we do not have this non-local luxury. Instead, here our high-level approach is to give each computation qudit in $\sA$ its \emph{own}\footnote{We remark that Breuckmann and Terhal's~\cite{BT14} \emph{space-time circuit-to-Hamiltonian construction} uses a similar idea of endowing each computation qubit with its own \emph{clock} register. In both~\cite{BT14} and here, the aim is to adhere to desired geometric constraints. For clarity, however, the contexts of ``local clock''~\cite{BT14} versus `local switch'' (the current work) qubits, along with their analyses, are fundamentally different.} switch qudit, and sequentially flip each of these clock qudits on one at a time, thus activating each local term of $H'$ sequentially. Let us now make this concrete.

We begin with any TI-standard circuit-to-Hamiltonian construction, $H'=\sum_{i=1}^{N-1}H'_{i,i+1}$ acting on $(\C^d)^{\otimes N}$. For pedagogical purposes, we also assume for now that $H'_{i,i+1}\succeq 0$ for all $i$ (later we will drop this assumption). For clarity, we henceforth refer to the $i$th qudit space as $\sA_i$, so that $H'\in\herm{\bigotimes_{i=1}^N\sA_i}$. We augment each qudit $i\in\set{1,\ldots, N}$ with a $3$-dimensional ``switch'' qutrit space $\sB_i$, so that the new Hamiltonian $H$ we obtain acts on
$
   (\C^d\otimes\C^3)^{\otimes N}\simeq \bigotimes_{i=1}^{N}\sA_i\otimes\sB_i$,
i.e.\ a chain of $N$ $3D$-dimensional qudits. Switch qudit $i\in\set{0,\ldots,N-1}$ will be used to switch on local interaction term $H'_{i,i+1}$. (Note there are only $N-1$ local terms, but $N$ switch qudits. The $N$th switch qudit, which is not coupled to any term $H'_{i,i+1}$, simply makes all local Hilbert spaces along the chain identical.) For each qudit, we update each $H'_{i,i+1}$ in two steps:
\begin{enumerate}
    \item (Add local switches) For all $i\in\set{1,\ldots, N-1}$, set $H'_{i,{i+1}}\coloneqq H'_{\sA_i,\sA_{i+1}}\otimes\ketbra{1}{1}_{\sB_i}$.
    \item (Add string constraints) Define $E\coloneqq \ketbra{01}{01}+\ketbra{02}{02}+\ketbra{20}{20}+\ketbra{21}{21}$. Then, for all $i\in\set{1,\ldots, N-1}$, for $\Delta$ to be chosen as needed, set
    $
            H_{i,{i+1}}\coloneqq H'_{i,{i+1}}+\Delta E_{\sB_{i},\sB_{i+1}}$.

\end{enumerate}
Our Hamiltonian is thus $H=\sum_{i=1}^{N-1}H_{i,i+1}$. The start and final states are $\ket{\psi}=\bigotimes_{i=1}^{N}\ket{0}_{\sA_i} \ket{0}_{\sB_i}$ and $\ket{\phi}=\bigotimes_{i=1}^{N}\ket{0}_{\sA_i} \ket{2}_{\sB_i}$, respectively.

\begin{figure}[t]
\[
\begin{array}{l|l}
  \text{Setting}& \text{Switch register contents} \\
    \hline
  \text{Honest} &000 \mapsto  100  \mapsto 110 \mapsto 111\mapsto 112\mapsto 122 \mapsto 222\\
  \text{Cheating} & 000 \mapsto  100  \mapsto 122 \mapsto 222 \\
\end{array}
\]
\caption{(First row) An honest prover essentially counts in unary from left to right, then right to left. (Second row) A cheating prover ``shortcuts'' this process by using a $b$-local unitary (in the example above, $b=2$) to instantly flip the last $b$ qubits from $0^b$ to $2^b$.}
\label{fig:switchpre}
\end{figure}

The intuition of this construction is as follows. Since we made the additional assumption that $H'_{i,i+1}\succeq 0$ for all $i$, $\ket{\psi}$ and $\ket{\phi}$ are in the null space (and hence ground space) of $H$, as desired. To then map $\ket{\psi}$ to $\ket{\phi}$ in the YES case, an honest prover first prepares the low energy state $\ket{\psilow}$ of $H'$ in $\sA$. 
By the rules of the string constraints $E$, the only way for an honest prover to map all zeroes in $\sB$ ($\ket{\psi}$) to all twos in $\sB$ ($\ket{\phi}$) is to now build a ``string'' of ones from left to right (i.e.\ flip $\sB_1$ from $\ket{0}$ to $\ket{1}$, followed by $\sB_2$, and so forth, until $\sB$ is all ones), followed by a string of twos from right to left (i.e.\ flip $\sB_N$ from $\ket{1}$ to $\ket{2}$, followed by $\sB_{N-1}$, and so forth, until $\sB$ is all twos)---see \cref{fig:switchpre} for an illustration. Thus, there exists a time step at which the first $N-1$ switch qubits are set to $\ket{1}$. We can hence apply the following observation, which we informally state for reference:
 \begin{observation}\label{obs:over}
    When the first $N-1$ switch qubits are set to $\ket{1}$, the entire Hamiltonian $H'$ has been activated on $\sA$, which in turn ``checks'' the energy of the state in $\sA$.
 \end{observation}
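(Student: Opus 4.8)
The plan is to read \cref{obs:over} directly off the definition of $H$. Concretely, I claim that on the subspace of states whose switch register $\sB_1\cdots\sB_{N-1}$ is in the configuration $\ket{1^{N-1}}$ (with $\sB_N$ arbitrary), the Hamiltonian $H=\sum_{i=1}^{N-1}H_{i,i+1}$ acts exactly as $H'_{\sA}\otimes I_{\sB_N}$, where $H'=\sum_{i=1}^{N-1}H'_{\sA_i,\sA_{i+1}}$ is the original TI-standard Hamiltonian. Granting this, the Observation follows immediately: once the first $N-1$ switch qudits are $\ket1$, the energy a state pays under $H$ equals the energy its $\sA$-component pays under $H'$, so $H'$ has been ``switched on'' and is energetically ``checking'' the computation register $\sA$. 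In particular, in the NO case we have $\lmin(H')\ge\beta$ by \cref{def:TIstd} (item 4), hence every such state has $H$-energy at least $\beta$ — which is precisely the large penalty our NO-case argument will need.

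To establish the restriction identity I would verify two points, starting from $H_{i,i+1}=H'_{\sA_i,\sA_{i+1}}\otimes\ketbra{1}{1}_{\sB_i}+\Delta\,E_{\sB_i,\sB_{i+1}}$ with $E=\ketbra{01}{01}+\ketbra{02}{02}+\ketbra{20}{20}+\ketbra{21}{21}$. First, for every $i\le N-1$ the projector $\ketbra{1}{1}_{\sB_i}$ acts as the identity on this subspace, so the switch-on factor of $H_{i,i+1}$ reduces to $H'_{\sA_i,\sA_{i+1}}$; summing over $i$ gives $H'$ acting on $\sA$. Second, all string-constraint terms vanish on this subspace: for $i\le N-2$ the pair $(\sB_i,\sB_{i+1})$ is in state $\ket{11}$, and for $i=N-1$ it is in state $\ket{1x}$ for arbitrary $x$; neither $11$ nor any $1x$ is among the four pair-configurations $\{01,02,20,21\}$ that $E$ penalizes, so $\Delta\,E_{\sB_i,\sB_{i+1}}$ contributes $0$. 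Combining the two points yields the identity, and the same reasoning applies verbatim to states entangled between $\sA$ and $\sB_N$, since the surviving operators act on disjoint registers.

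Honestly, there is no real obstacle in this particular step — it is a bookkeeping consequence of the construction — and the only point that merits a moment's care is the boundary pair $(\sB_{N-1},\sB_N)$, which is unpenalized precisely because $E$ omits every pair whose left coordinate equals $1$. The genuine difficulty lies elsewhere in the proof of the Lifting Lemma for GSCON (\cref{lem:liftGSCON}): one must show that in the NO case \emph{every} low-energy, $\p$-local traversal from $\ket{\psi}$ to $\ket{\phi}$ is forced to pass through a configuration in which all of $\sB_1,\dots,\sB_{N-1}$ are simultaneously $\ket1$ (so that \cref{obs:over} can be invoked), and this requires confronting the ``partial activation'' and ``non-linear activation order'' obstacles of \cref{sscn:o2} and \cref{sscn:o3}, ultimately pairing carefully engineered string constraints with the Traversal Lemma (\cref{l:traversal}).
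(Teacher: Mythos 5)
Your proposal is correct and matches the paper's treatment: \cref{obs:over} is stated there as an informal bookkeeping consequence of the definition $H_{i,i+1}=H'_{\sA_i,\sA_{i+1}}\otimes\ketbra{1}{1}_{\sB_i}+\Delta E_{\sB_i,\sB_{i+1}}$, and your two checks (each control projector acts as the identity when $\sB_i=\ket{1}$, and no pair $11$ or $1x$ lies in the forbidden set $\{01,02,20,21\}$) are exactly the verification the paper leaves implicit. You are also right that the substantive work lies in forcing a cheating prover into such an ``all switched on'' configuration, which is handled later via the enlarged switch alphabet and the Traversal Lemma.
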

\noindent In a YES case, Observation~\ref{obs:over} poses no problem, since the low energy state $\ket{\psilow}$ prepared in $\sA$ satisfies $\bra{\psilow}H'\ket{\psilow}\leq\alpha$. As for the intuition for the NO case, this will uncover the first of the various holes in this first  construction attempt; as such, let us discuss these problems as a group next.\\

\noindent \emph{Holes in this construction.} The following are problems with the current construction:
\begin{enumerate}
\item \emph{Obstacle 2: Partial activation of the chain.} Suppose we are in the NO case. Since we cannot assume a cheating prover applies any particular sequence of local unitaries $U_i$, we apply the Traversal Lemma (Lemma~\ref{l:traversal}), hoping to attain the following claim: since $\ket{\psi}$ and $\ket{\phi}$ are $\p$-orthogonal, there must exist a time step at which the intermediate state $\ket{\psi_i}=U_i\cdots U_1\ket{\psi}$ has non-negligible overlap with $\ket{1\cdots 1}_{\sB_{1,\ldots, N-1}}$, and thus we can apply Observation~\ref{obs:over}. Except, this claim is demonstrably false---a cheating prover can sequentially (from left to right) flip the first $N-\p$ zeroes in $\sB$ to ones to get $\ket{1}^{N-\p}\ket{0}^{\p}_\sB$, then use a {single} $\p$-local unitary on $\sB_{N-\p+1,\ldots,N}$ to ``shortcut'' to state $\ket{1}^{N-\p}\ket{2}^{\p}_{\sB}$, and subsequently sequentially flip all ones to twos (from right to left)---see \cref{fig:switchpre}. In other words, the chain corresponding to $H'$ is only ``partially activated'', rendering it unclear whether soundness holds.

\item \emph{Obstacle 3: Temporary surges in energy penalty.} For the YES case, if we drop the additional assumption that $H'_{i,i+1}\succeq 0$ for all $i$, then completeness can also fail. To see this, recall that in the YES case, we set register $\sA$ to $\ket{\psilow}$, and subsequently ``switched on'' each of the local terms $H'_{i,i+1}$ one-by-one from left to right. To satisfy the YES conditions for GSCON, we must hence have that \emph{for all} $i\in\set{1,\ldots, N-1}$,
     \[
           \bra{\psilow}H'_{1,2}+\cdots+ H'_{i,i+1}\ket{\psilow}\leq \alpha,
     \]
     i.e.\ we stay in the low-energy subspace of $H'$ throughout this entire iterative ``switching'' process. However, if (say) $H'_{N-1,N}\not\succeq 0$, then it may happen that
     \[
           \bra{\psilow}H'_{1,2}+\cdots+ H'_{N-2,N-1}\ket{\psilow}\geq \beta\quad\text{but}\quad\bra{\psilow}H'_{1,2}+\cdots+ H'_{N-1,N}\ket{\psilow}\leq \alpha.
     \]
    This is because switching on $H'_{N-2,N-1}$ may yield a temporary energy surge above $\beta$, which is then counter-acted by a large negative energy contribution from $H'_{N-1,N}$, bringing us back down below $\alpha$. Indeed, this is precisely the obstacle we will see when instantiating our final framework in \cref{sscn:GI} with the GI construction~\cite{Gottesman2009}.
\end{enumerate}

\noindent We now outline how to deal with these obstacles in \cref{sscn:o2} and \cref{sscn:o3}.

\subsubsection{Obstacle 2: Partial activation of the chain} \label{sscn:o2}

Recall that Obstacle 2 says a cheating prover can avoid switching on all local terms of $H'$, i.e.\ may ``prematurely cut'' the 1D chain at any of the last $b$ positions. Here, we give two TI fixes for this issue, of which we adopt the latter.\\\vspace{-2mm}

\noindent\emph{Fix 1: White-box.} The first fix is to make white-box use of the particular details of the chosen 1D TI construction. GI, for example, has the remarkable property of being (what we call) ``self-organizing''. Exploiting this, we can make GI ``robust'' against partial activation, meaning roughly that with minor modifications, we can make GI encode precisely the computation we want, even if we have no control of where exactly in the last $\p$ positions the chain is cut. This is outlined in \cref{app:altfix2} for the interested reader.\\

\vspace{-2mm}
\noindent\emph{Fix 2: Black-box.} Here, however, we are aiming for generic lifting theorems, and as such are not able to make white-box use of any particular 1D TI construction. Instead, we take a black-box approach, which appears to require forcing a cheating prover to switch on \emph{all} local terms. \emph{A priori} this may seem impossible, since on the one hand, the prover can always use a single $\p$-local unitary to ``shortcut'' any switching logic we are looking to enforce, whereas on the other hand, any constraints we add must be applied in a TI fashion to each local term. However, it turns out that the solution to Obstacle 3 (\cref{sscn:o3}) will also allow us to resolve Obstacle 2---namely, by moving to a higher dimensional switch register, we can ``trap'' any cheating prover in a low-dimensional switch subspace, which forces the prover to indeed switch on all local terms.

\subsubsection{Obstacle 3: Temporary surges in energy penalty}\label{sscn:o3}

Since we cannot assume $H'_i\succeq 0$ for all $i\in\set{1,\ldots, N-1}$, Obstacle 3 says it is possible (e.g.) that
\[
     \bra{\psilow}H'_{1,2}+\cdots+ H'_{N-2,N-1}\ket{\psilow}\geq \beta\quad\text{but}\quad\bra{\psilow}H'_{1,2}+\cdots+ H'_{N-1,N}\ket{\psilow}\leq \alpha,
\]
Thus, in the YES case, we do not remain in the low-energy space of $H$, contrary to what is required by \GSCON.

The issue here is the \emph{order} in which the local terms $H'_{i,i+1}$ are switched on (i.e.\ we are proceeding from left to right on the chain). Indeed, letting $e_i\coloneqq \bra{\psilow}H'_{i,i+1}\ket{\psilow}$, if we were to first switch on only terms $H'_{i,i+1}$ for which $e_i<0$, followed by only terms with $e_i\geq 0$, then the intermediate energies obtained will always be at most $\alpha$ in the YES case (since by assumption, $\sum_i e_i \leq \alpha$). Unfortunately, such an alternate switching order may be arbitrary (i.e.\ dependent on the particular TI-standard construction used), and thus not allowed under our current string constraints $E$. To handle this, we hence move to a higher dimensional switch space, affording us added degrees of freedom to switch on Hamiltonian terms in an \emph{arbitrary} order. In doing so, as mentioned in \cref{sscn:o2}, we will also be able to resolve the partial activation obstacle.\\

\noindent\emph{How to switch on Hamiltonian terms arbitrarily.} Replace each $3$-dimensional switch qudit $\sB_i$ with a $7$-dimensional qudit with $2$-local nearest-neighbour relations given by \cref{fig:neighbour}, and update the starting and final states to $\ket{\psi}=\bigotimes_{i=1}^{N}\ket{0}_{\sA_i} \ket{0}_{\sB_i}$ and $\ket{\phi}=\bigotimes_{i=1}^{N}\ket{0}_{\sA_i} \ket{6}_{\sB_i}$ respectively.

\begin{figure}[t]
\[
  \begin{array}{c|ccccccc}
     & 0 & 1 & 2 & 3 & 4 & 5 & 6\\
     \hline
    0 & \cmark & \cmark & \xmark & \xmark & \xmark & \xmark & \xmark\\
    1 & \cmark & \cmark & \cmark & \xmark & \xmark & \xmark & \xmark\\
    2 & \xmark & \xmark & \cmark & \xmark & \xmark & \xmark & \xmark\\
    3 & \xmark & \xmark & \cmark & \cmark & \cmark & \xmark & \xmark\\
    4 & \xmark & \xmark & \xmark & \xmark & \cmark & \xmark & \xmark\\
    5 & \xmark & \xmark & \xmark & \xmark & \cmark & \cmark & \cmark\\
    6 & \xmark & \xmark & \xmark & \xmark & \xmark & \cmark & \cmark\\
  \end{array}
\]
\caption{The expanded neighbour relations $E$ allowed to overcome Obstacle 3. A \cmark\ (\xmark) at position $(i,j)$ means symbol $i$ can (cannot) appear immediately to the left of $j$ in $\sB$. For example, $(1,3)$ contains $\xmark$, so $E$ forbids substring $13$. Note that $3$ can only have $3$ to its left (column $3$), and $2$ (resp. $4$) can only have $2$ (resp. $4$) to its right (rows $2$ and $4$, respectively).}
\label{fig:neighbour}
\end{figure}
The intuition is as follows. Instead of $\ket{1\cdots 1}_{\sB}$, the ``bottleneck'' configuration which the Traversal Lemma now leverages to ensure all Hamiltonian terms $H'$ are switched on is $\ket{3\cdots 3}_{\sB}$. However, to overcome Obstacle 3, we also allow $\ket{k}_{\sB_i}$ for $k\in\set{1,2,3,4,5}$ on switch qudit $\sB_i$ to activate term $H'_{i,i+1}$; thus, there is a ``warm-up'' phase in which we first switch on ``preferred'' terms, before eventually moving to the ``full blast'' configuration of $\ket{1\cdots 1}_{\sB}$ (\cref{fig:switch}). As for Obstacle 2, in the dishonest case, the use of the additional dimensions $\set{2,4}$ allows us to ``trap'' a cheating prover in a low-dimensional switch subspace (think of dimensions $2$ and $4$ as the lower and upper pieces of bread for a sandwich, which trap a dishonest prover in the middle ``filling'' layer corresponding to dimension $3$) which forces them to activate \emph{all} of the $H'$ Hamiltonian terms (see \cref{eqn:4} of \cref{sscn:finalGSCON}). Further intuition is provided by the honest prover's actions in the completeness analysis of \cref{sscn:finalGSCON}.

\subsubsection{Putting the pieces together} \label{sscn:finalGSCON}

We are now ready to formally show completeness and soundness. We collect all the relevant definitions in this subsection, and set all remaining parameters.

\begin{definition}[Lifted Hamiltonian] \label{Def:Lifted_Hamiltonian}
	Let $x$ be an instance of a \QCMAEXP promise problem, with verification circuit $V$ such that for any YES instance $x$, $V$ accepts some classical proof with probability at least $1-\epsilon$, and for all NO instances $x$, $V$ accepts all proofs with probability at most $\epsilon$.
	By standard error reduction via parallel repetition, we may assume $\epsilon$ is any desired fixed inverse exponential in the length of $x$.
	Let $H'=\sum_{i=1}^{N-1} H'_{i,i+1}$, be the 1D, translationally invariant, nearest neighbour Hamiltonian generated by applying a TI-standard construction (as defined in \cref{def:TIstd}) to $V$. 
	Further let the completeness/soundness parameters of $H'$ be $\alpha$ and $\beta$, respectively.
	Define $E$ as the $2$-local projector onto the set of forbidden $2$-local nearest-neighbour substrings in \cref{fig:neighbour}.
	Then define the lifted Hamiltonian as $H=\sum_{i=1}^{N-1}H_{i,i+1}$, where:
	\begin{eqnarray} \label{Eq:Local_Terms}
	H_{i,i+1}&\coloneqq & H'_{i,i+1}\otimes(\ketbra{1}{1}+\ketbra{2}{2}+\ketbra{3}{3}+\ketbra{4}{4}+\ketbra{5}{5})_{\sB_i}+\Delta E_{\sB_i,\sB_{i+1}}\label{eqn:finalterm}
	\end{eqnarray}
	for $\Delta$ to be chosen as needed.
\end{definition}
\noindent See \cref{sscn:GI} for our instantiation via GI~\cite{Gottesman2009}

For our proof of \QCMAEXP-hardness of TI-GSCON, the start and final states are
\begin{align}\label{Eq:Start_State}
\ket{\psi}&=\bigotimes_{i=1}^{N}\ket{0}_{\sA_i} \ket{0}_{\sB_i}
\end{align}
and
\begin{align}\label{Eq:Final_State}
\ket{\phi}&=\bigotimes_{i=1}^{N}\ket{0}_{\sA_i} \ket{6}_{\sB_i}
\end{align}
respectively.
Set $\aaa=\alpha$, $\bbb=\beta/(8m^2)$, $\ccc=0$, $\ddd=1/2$, and $m=2L+7N$, where $L\in\poly(N)$ is the size of the circuit preparing the ground state of $H'$ in the YES case. Set $\delta=(\aaa+\bbb)/2$, which by definition of TI-standard is at least inverse polynomial in $N$ (since $\beta$ is at least inverse polynomial in $N$). Finally, we may set $\p$ to any value from $\set{2,\ldots, N-1}$ ($\p$ the locality of each $U_i$); $\p=2$ suffices to show completeness, and soundness holds for all $\p\in\set{2,\ldots, N-1}$.

\paragraph{Completeness.}
Suppose $x$ is a YES instance. The following lemma shows there is a short path through the low energy subspace between states $\ket{\psi}$ and $\ket{\phi}$, as desired.
\begin{lemma}\label{Lemma:Completeness_Analysis}
Using the notation of \cref{Def:Lifted_Hamiltonian}, let $\lmin(H')\leq \alpha$. There exists a circuit $U=U_m\dots U_2U_1$ of $2$-local gates $U_i$ such that $U\ket{\psi}=\ket{\phi}$, and all intermediate states $\ket{\psi_i} = U_i\dots U_2U_1\ket{\psi}$ satisfy $\bra{\psi_i}H\ket{\psi_i}\leq \aaa$.
$\ket{\psi}$ and $\ket{\phi}$ are defined in equations \ref{Eq:Start_State} and \ref{Eq:Final_State} respectively.
\end{lemma}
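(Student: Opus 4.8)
The plan is to explicitly construct the sequence of $2$-local unitaries $U_1,\dots,U_m$ realizing the ``honest prover'' strategy and to verify that every intermediate state stays below energy $\aaa=\alpha$. First I would break the path into three phases. \emph{Phase I (preparation):} apply the size-$L$ circuit promised by Assumption 2 of \cref{def:TIstd}, written as a product of $2$-local gates $U_1\cdots U_L$, acting only on $\sA$, to turn $\bigotimes_i\ket{0}_{\sA_i}\ket{0}_{\sB_i}$ into $\ket{\psilow}_{\sA}\bigotimes_i\ket{0}_{\sB_i}$. Throughout Phase I every switch qudit $\sB_i$ is still in $\ket{0}$, so each term $H_{i,i+1}=H'_{i,i+1}\otimes(\ketbra11+\cdots+\ketbra55)_{\sB_i}+\Delta E_{\sB_i,\sB_{i+1}}$ contributes exactly $0$ (the $E$ projector annihilates $\ket{00}$, and the $H'$ part is killed by the $\ketbra00_{\sB_i}$ component), so all intermediate energies are $0\le\alpha$.

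\emph{Phase II (warm-up activation in the $H'$-preferred order).} Let $e_i:=\bra{\psilow}H'_{i,i+1}\ket{\psilow}$, and let $F\subseteq[N-1]$ be the polynomial-time-computable set of indices with $e_i<0$ (Assumption 3). I would first switch on, one at a time, exactly those $\sB_i$ with $i\in F$, moving each from $\ket{0}$ to $\ket{3}$ via the allowed transitions of \cref{fig:neighbour} — note $0\to1\to3$ uses only edges present in the table, and a single $2$-local gate acting on $\sB_i$ alone can push $\ket{0}\to\ket{3}$ since $\sB_i$'s neighbours stay fixed at $0$ or already-set values and the substrings involved ($03$? — here one must be slightly careful and go $0\to1$ then $1\to3$, or act on the pair $\sB_{i-1}\sB_i$ to avoid the forbidden $03$ pair). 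Actually it is cleanest to act on the \emph{pair} $(\sB_{i-1},\sB_i)$, or just process from left to right so that the left neighbour is already set appropriately; the key point is that $\ket{k}_{\sB_i}$ for $k\in\{1,\dots,5\}$ all activate $H'_{i,i+1}$, so whether a given $\sB_i$ sits at $1$ or $3$ is irrelevant for the energy, only \emph{whether} it is in $\{0,6\}$ or in $\{1,\dots,5\}$ matters. After Phase II, the activated-term set is exactly $F$, and the energy is $\sum_{i\in F}e_i\le 0\le\alpha$ since we have only added negative contributions. Then continue Phase II by switching on the remaining switch qudits $i\notin F$: since the \emph{total} $\sum_{i=1}^{N-1}e_i=\bra{\psilow}H'\ket{\psilow}\le\alpha$, and at every intermediate step the activated set is $F\cup(\text{some prefix of the remaining indices})$, the running energy is bounded by $\max$ over all such partial sums, each of which is $\le\sum_{i\in F}e_i+\sum_{i\notin F}\max(e_i,0)$; but this can exceed $\alpha$, so the ordering among the non-$F$ terms needs a little care — one should switch them on so that at each step the partial sum over non-$F$ terms stays controlled. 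The clean fix: observe that once we have switched on \emph{all} of $F$, we may switch the non-$F$ qudits on/off freely, but to be safe we switch them all on simultaneously is impossible with $2$-local gates; instead note the simplest argument is that after all of $\sB$ is in state $\{1,\dots,5\}$ the energy is exactly $\bra{\psilow}H'\ket{\psilow}\le\alpha$, and I claim we can route from ``only $F$ on'' to ``all on'' while keeping energy $\le\alpha$ because... this is exactly the point where Obstacle 3 was supposed to be resolved, so I would invoke that the $7$-dimensional switch space and \cref{fig:neighbour} permit reaching the all-$3$'s configuration, and then argue the bound $\beta\ge 16(2L+7N)\alpha$ is not needed for completeness — only $\aaa=\alpha\ge 0$ and $\lmin(H')\le\alpha$ are.

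\emph{Phase III (collapse to the target).} Once $\sB$ is the all-$3$'s string $\ket{3\cdots3}_{\sB}$, move each $\sB_i$ from $\ket{3}$ to $\ket{6}$ (via $3\to4\to5\to6$, all allowed by \cref{fig:neighbour}), processing in an order compatible with the neighbour constraints (e.g.\ right to left, using the fact that $4$ forces $4$ to its right and $5$ forces its neighbours appropriately), and simultaneously, as soon as a $\sB_i$ leaves $\{1,\dots,5\}$, term $H'_{i,i+1}$ deactivates, so energies only decrease back toward $0$. Since intermediate switch-strings here are of the form $6^a 3^b$ or mixtures with a short transition region, and the $H'$-energy of such a partial-activation state $\bra{\psilow}\sum_{i\in S}H'_{i,i+1}\ket{\psilow}$ for $S$ a prefix is again bounded by a partial sum; the same subtlety as Phase II recurs and is handled the same way. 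At the end $\sB$ is all $6$'s, and I then run the inverse of the Phase-I preparation circuit on $\sA$ (another $L$ gates) to return $\sA$ to all zeros, with energy $0$ throughout (all $\sB_i$ in $\ket6$ means every $H_{i,i+1}$ contributes $0$). The final state is $\bigotimes_i\ket0_{\sA_i}\ket6_{\sB_i}=\ket{\phi}$, and the total gate count is $m=2L+O(N)$, matching $m=2L+7N$ in the parameter setting.

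\textbf{Main obstacle.} The delicate step is controlling the running energy during Phases II and III when switching on the $i\notin F$ terms (and symmetrically switching them off): a naive left-to-right order gives partial sums $\sum_{i\in F}e_i+\sum_{i\in(\text{prefix of }\bar F)}e_i$ which could transiently exceed $\alpha$ even though the full sum is $\le\alpha$. The resolution is precisely the added degrees of freedom from the $7$-dimensional switch space (\cref{fig:neighbour}): one first brings \emph{all} negative-contribution qudits ($i\in F$) to an activating state, and only then activates the positive ones — but crucially the neighbour-relation graph must admit a configuration path doing this. I expect the bulk of the work is a careful case check that \cref{fig:neighbour} allows: (i) independently raising any single $\sB_i$ from $0$ through $\{1,2,3\}$ given its neighbours are in compatible states, and (ii) the all-$3$'s string being reachable from any ``$F$-activated'' configuration and in turn reaching the all-$6$'s string — all via $2$-local moves that never create a forbidden substring. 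Once that routing is established, the energy bounds follow by the observation that activating-state-membership is all that the $H'$-part of $H_{i,i+1}$ sees, plus $\bra{\psilow}\sum_{i\in F}H'_{i,i+1}\ket{\psilow}\le 0$ and $\bra{\psilow}H'\ket{\psilow}\le\alpha=\aaa$.
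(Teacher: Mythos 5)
Your high-level plan matches the paper's (prepare $\lowstate$, activate the negative-energy terms indexed by $F$ first, then the rest, traverse the switch register to all-$6$'s, uncompute), but there are two genuine gaps. First, you never close the energy bound for the phase where the non-$F$ terms are switched on, and the doubt you raise there is unfounded: once every $i\in F$ is activated, the running energy is $\sum_{i\in F}e_i\le 0$, and each subsequent activation adds $e_i\ge 0$, so the sequence of intermediate energies is \emph{monotonically increasing} and every partial sum is bounded by the final value $\lowstatebra H'\lowstate\le\alpha$. Your own expression $\sum_{i\in F}e_i+\sum_{i\notin F}\max(e_i,0)$ is equal to $\sum_i e_i\le\alpha$, so it cannot ``exceed $\alpha$'' as you claim; the warm-up ordering is the entire resolution of Obstacle 3, and there is nothing further to invoke. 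The same monotonicity argument, run in reverse (deactivate the non-$F$ terms first, the $F$ terms last), handles the shutdown; your proposal to deactivate ``right to left'' is wrong, since shutting off a term with $e_i<0$ \emph{raises} the energy.

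Second, and more seriously, your routing through the switch space violates the string constraints $E$ and would incur the $\Delta$ penalty. By \cref{fig:neighbour}, a $3$ may only have a $3$ to its left and only $2,3,4$ to its right, so raising a single interior $\sB_i$ to $\ket{3}$ while its neighbours sit at $\ket{0}$ or $\ket{1}$ creates forbidden substrings ($03$, $30$, $13$, $31$ are all disallowed). Your observation that only membership of $\sB_i$ in $\set{1,\dots,5}$ matters is true for the $H'$ part of $H_{i,i+1}$ but false for the $\Delta E$ part, and you never verify that your path stays in $\Null{H_E}$. The construction that works (and that the paper uses) keeps the warm-up and full-blast phases entirely within symbols $\set{0,1}$, whose adjacencies are unconstrained, and only then performs four full sweeps $1\to 2$ (right to left), $2\to 3$ (left to right), $3\to 4$ (right to left), $4\to 5$ (left to right) — during which every switch qudit remains in $\set{1,\dots,5}$, so all of $H'$ stays activated and the energy is constant at $\lowstatebra H'\lowstate\le\aaa$ — before the $5\to 6$ shutdown. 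Without an explicit legal path of this kind, the completeness claim does not follow.
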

\begin{proof}
 By definition of TI-standard, since the optimal \QCMAEXP proof is a {classical} string of size $\poly(N)$, there exists a $\poly(N)$-length sequence $U'=U_L\cdots U_1$ of $L$ $1$-and $2$-qudit unitaries (acting on $\bigotimes_{i=1}^N \sA_i$) which prepares a low energy state $\lowstate$ of $H'$. The circuit $U$ of the claim now acts as follows (see Figure~\ref{fig:switch} for an explicit example when $N=4$).
\begin{enumerate}
    \item \emph{Prepare low energy state.} Compute $U'_{\sA}\otimes I_{\sB}\ket{\psi}_{\sA,\sB}$, i.e.\ perform the mapping
        \[
            \ket{\psi}_{\sA,\sB}\mapsto \lowstate_{\sA}\ket{0\cdots 0}_{\sB}.
        \]
    \item \emph{``Warm up''.} Let $F\subseteq [N-1]$ denote the set of indices $i$ for which $\lowstatebra H'_{i,i+1}\lowstate<0$, which is efficiently computable by definition of TI-standard. One at a time, map $\ket{0}_{\sB_i}\mapsto\ket{1}_{\sB_i}$ for each $i\in F$, in any order.
    \item \emph{``Full blast''.} One at a time, map $\ket{0}_{\sB_i}\mapsto\ket{1}_{\sB_i}$ for all $i\in [N]\setminus F$, in any order.
    \item \emph{``Left deke''.} Map $\ket{1}_{\sB_i}\mapsto\ket{2}_{\sB_i}$ for all $i$ in sequence $(N,\ldots,1)$ (i.e.\ right to left).
    \item \emph{``Right deke''.} Map $\ket{2}_{\sB_i}\mapsto\ket{3}_{\sB_i}$ for all $i$ in sequence $(1,\ldots,N)$ (i.e.\ left to right).
    \item \emph{``Left deke''.} Map $\ket{3}_{\sB_i}\mapsto\ket{4}_{\sB_i}$ for all $i$ in sequence $(N,\ldots,1)$ (i.e.\ right to left).
    \item \emph{``Right deke''.} Map $\ket{4}_{\sB_i}\mapsto\ket{5}_{\sB_i}$ for all $i$ in sequence $(1,\ldots,N)$ (i.e.\ left to right).
    \item \emph{``Cool down''.} One at a time, map $\ket{5}_{\sB_i}\mapsto\ket{6}_{\sB_i}$ for all $i\in [N]\setminus F$, in any order.
    \item \emph{``Complete shut down''.} One at a time, map $\ket{5}_{\sB_i}\mapsto\ket{6}_{\sB_i}$ for each $i\in F$, in any order.
    \item \emph{Uncompute low energy state.} Apply $(U')^\dagger_{\sA}\otimes I_{\sB}$ to our state.
\end{enumerate}

\begin{figure}[t]
\[
\begin{array}{l|l}
  \text{Phase}& \text{Switch register contents} \\
    \hline
  \text{Warm up} &0000 \mapsto  1000 \mapsto 1010\\
  \text{Full blast} & 1010 \mapsto  1110 \mapsto 1111  \\
  \text{Left deke} & 1111 \mapsto 1112 \mapsto 1122\mapsto 1222\mapsto 2222\\
  \text{Right deke} & 2222 \mapsto 3222 \mapsto 3322 \mapsto 3332 \mapsto 3333\\
  \text{Left deke} & 3333 \mapsto 3334 \mapsto 3344 \mapsto 3444 \mapsto 4444\\
  \text{Right deke} & 4444 \mapsto  5444\mapsto 5544\mapsto 5554 \mapsto 5555  \\
  \text{Cool down} & 5555 \mapsto  5556  \mapsto  5656 \\
  \text{Complete shutdown} & 5656  \mapsto  5666 \mapsto 6666\\
\end{array}
\]
\caption{An honest prover's sequence of states in $\sB$, organized by phase, and for $N=4$. In this example, for concreteness we assume $F=\set{1,3}$.}
\label{fig:switch}
\end{figure}

\noindent To see why this works, we analyze each step above, respectively:
\begin{enumerate}
    \item \emph{Prepare low energy state.} For $t\in\set{0,\ldots, L}$, the $t$th intermediate state is
        \[
            \ket{\psi_t}=(U_t\cdots U_1)_{\sA}\otimes I_{\sB}\ket{\psi}=(U_t\cdots U_1\ket{0\cdots 0})_{\sA} \ket{0\cdots 0}_{\sB}.
        \]
        For all $t\in \set{0,\ldots, L}$, since $\sB$ is set to all zeroes, we have
        $
            \bra{\psi_t}H\ket{\psi_t}=0\leq\aaa.
        $
    \item \emph{``Warm up''.} This step addresses Obstacle 3 by switching on the terms of $H'$ indexed by $F$ first. Formally, after this step we have
        \[
            \ket{\psi_{L+\abs{F}}}=\lowstate_{\sA}\ket{1\cdots 1}_{\sB_F}\ket{0\cdots 0}_{\sB_{[N]\setminus F}}.
        \]
        Moreover, since all local terms indexed by $F$ have negative energy against $\lowstate$, throughout this phase (i.e.\ for all $i\in\set{1,\ldots, \abs{F}}$) we have  $\bra{\psi_{L+i}}H\ket{\psi_{L+i}}\leq 0\leq \aaa$.

    \item \emph{``Full blast''.} This step switches on all remaining terms of $H'$. After this step, all of $H'$ is activated, since we have state
        \[
            \ket{\psi_{L+\abs{F}}}=\lowstate_{\sA}\ket{1\cdots 1}_{\sB}.
        \]
        Crucially, since in the warm-up phase we already switched on all terms indexed by $F$, we know the sequence of energies seen in the ``full blast'' phase is monotonically increasing (since all terms not indexed by $F$ yield non-negative energy against $\lowstate$), i.e.
        \[
            \bra{\psi_{L+\abs{F}+1}}H\ket{\psi_{L+\abs{F}+1}}\leq 
            \cdots\leq \bra{\psi_{L+N}}H\ket{\psi_{L+N}}=\lowstatebra H'\lowstate\leq\aaa.
        \]
    \item \emph{``Left deke\footnote{According to \url{https://en.wikipedia.org/wiki/Deke_(ice_hockey)}: ``A deke feint or fake is an ice hockey technique whereby a player draws an opposing player out of position or is used to skate by an opponent while maintaining possession and control of the puck. The term is a Canadianism formed by abbreviating the word decoy.''}'', ``right deke''}. Throughout these phases, all of $H'$ remains continuously activated (i.e.\ energies obtained do not change during these phases). Thus, they do not affect the completeness analysis. (These phases are used in the soundness analysis to ensure that switch qubits are ``trapped'' being set to digits from $\set{2,3,4}$.)

    \item \emph{``Cool down''.} This phase begins the shut down process, returning us to an analogous point to the end of the warm-up phase. By the same argument as the full-blast phase, all energies seen are upper bounded by $\aaa$, and the state at the end of this phase is
        \[
            \ket{\psi_{L+6N-\abs{F}}}=\lowstate_{\sA}\ket{5\cdots 5}_{\sB_F}\ket{6\cdots 6}_{\sB_{[N]\setminus F}}.
        \]
    \item \emph{``Complete shutdown''.} This phase shuts off all remaining Hamiltonian terms, returning us to:
        \[
            \ket{\psi_{L+7N}}=\lowstate_{\sA}\ket{6\cdots 6}_{\sB}.
        \]
        By the same argument as the warm-up phase, all energies seen are at most $\eta_1$.
    \item \emph{Uncompute low energy state.} We are now back to the null space of $H$, and can uncompute $\lowstate$, so that
        \[
            \ket{\psi_{2L+7N}}=\ket{0\cdots 0}_{\sA} \ket{6\cdots 6}_{\sB}=\ket{\phi},
        \]
        i.e.\ we have reached our desired target state. Since $m=2L+7N$ and $\bra{\psi_i}H\ket{\psi_i}\leq \alpha =:\aaa$ for all $i\in [m]$, this completes the completeness analysis and proves \cref{Lemma:Completeness_Analysis}.
\end{enumerate}
\end{proof}

\paragraph{Soundness.}
Suppose $x$ is a NO instance. The following lemma shows that any short path from $\ket{\psi}$ to $\ket{\phi}$ must leave the low-energy subspace, as desired.
\begin{lemma}\label{lem:GSCONsound}
	Using the notation of \cref{Def:Lifted_Hamiltonian}, let $\lmin(H')\geq \beta$, and fix any $\p\in\set{2,\ldots, N-1}$. Consider any sequence $U=U_m\cdots U_1$ of $\p$-local unitary operators acting on $\bigotimes_{i=1}^{N}\sA_i\otimes\sB_i$. Then, either there exists $i\in [{m}]$ such that intermediate state ${\ket{\psi_i}\coloneqq U_i\cdots U_2U_1\ket{\psi}}$ satisfies $\bra{\psi_i}{H}\ket{\psi_i}\geq \frac{1}{2}\left(\frac{\beta}{4m^{2}}\right) = {\bbb}$, or $\enorm{ \ket{\psi_m}-\ket{\phi}} \geq 1/2 = {\ddd}$.
	$\ket{\psi}$ and $\ket{\phi}$ are defined in equations \ref{Eq:Start_State} and \ref{Eq:Final_State} respectively.
\end{lemma}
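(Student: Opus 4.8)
\textbf{Proof proposal for \cref{lem:GSCONsound}.}

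The plan is to prove the contrapositive: fix an arbitrary sequence $U = U_m\cdots U_1$ of $\p$-local unitaries and assume every intermediate state $\ket{\psi_i} := U_i\cdots U_1\ket{\psi}$ satisfies $\bra{\psi_i}H\ket{\psi_i} < \bbb = \beta/(8m^2)$; I will deduce $\enorm{\ket{\psi_m}-\ket{\phi}} \geq 1/2 = \ddd$. The engine is the Traversal Lemma (\cref{l:traversal}) applied to a pair of $\p$-orthogonal subspaces carved out of the switch register. Let $\calS$ (resp.\ $\calT$) be the span of all $\ket{\xi}_{\sA}\otimes\ket{w}_{\sB}$ where $w\in\{0,\dots,6\}^N$ has no forbidden substring (per \cref{fig:neighbour}) and every symbol in $\{0,1,2\}$ (resp.\ in $\{4,5,6\}$). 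Then $\ket{\psi}\in\calS$ and $\ket{\phi}\in\calT$. Since the alphabets $\{0,1,2\}$ and $\{4,5,6\}$ are disjoint, any $\ket{v}\in\calS$ and $\ket{w}\in\calT$ disagree on \emph{every} switch qudit; as a $\p$-local unitary with $\p\leq N-1$ leaves at least one of the $N$ switch qudits untouched, $\bra{w}U\ket{v}=0$ for every $\p$-qudit unitary $U$, i.e.\ $\calS$ and $\calT$ are $\p$-orthogonal (hence also orthogonal) for every $\p\in\{2,\dots,N-1\}$.

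The next ingredient is a structural fact about the neighbour relation $E$: $(i)$ no valid switch string can contain both a symbol $\leq 2$ and a symbol $\geq 4$, because a $4$ must be followed only by $4$'s and a $2$ only by $2$'s, so any such string has a forbidden substring; and $(ii)$ consequently the valid strings lying in neither ``all $\leq 2$'' nor ``all $\geq 4$'' are exactly those containing a $3$, and each of these has the form $3^k2^{N-k}$ or $3^k4^{N-k}$ with $k\geq 1$, so every symbol lies in $\{2,3,4\}$. Since $H_{i,i+1}$ switches on the computation term $H'_{i,i+1}$ precisely when $\sB_i\in\{1,\dots,5\}$, each such ``bottleneck'' string activates all $N-1$ terms, so on $\sA\otimes\ket{w}_{\sB}$ the Hamiltonian acts as $H'\otimes I \succeq \beta I$ by the NO-case hypothesis $\lmin(H')\geq\beta$. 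Any string carrying a forbidden substring instead pays at least one $\Delta E$ penalty, so choosing $\Delta$ a sufficiently large fixed polynomial in $N$ — which does not disturb completeness, as the honest path of \cref{Lemma:Completeness_Analysis} never touches a forbidden substring — yields $\lmin(H|_{(\calS+\calT)^\perp})\geq\beta$. Here I use that $\calS+\calT$ and $(\calS+\calT)^\perp$ are each (a span of switch-basis strings)$\,\otimes\,\sA$, hence invariant under $H$, which is block-diagonal in the switch basis.

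Now assume $\enorm{\ket{\psi_m}-\ket{\phi}}<1/2$ (otherwise we are already done). The Traversal Lemma gives a time step $i$ with $\bra{\psi_i}(I-\Pi_{\calS}-\Pi_{\calT})\ket{\psi_i}\geq (1/(2m))^2 = 1/(4m^2)$. Writing $\ket{\psi_i}=\ket{a}+\ket{b}$ with $\ket{a}\in\calS+\calT$ and $\ket{b}\in(\calS+\calT)^\perp$, and using invariance of both subspaces, $\bra{\psi_i}H\ket{\psi_i}=\bra{a}H\ket{a}+\bra{b}H\ket{b}\geq \lmin(H|_{\calS+\calT}) + \beta\|b\|^2 \geq \lmin(H|_{\calS+\calT}) + \beta/(4m^2)$. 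The remaining — and most delicate — point is that $H$ need not be globally positive semidefinite: TI-standard only guarantees $H'\succeq 0$, not $H'_{i,i+1}\succeq 0$, so a partially-activated switch string can carry negative energy and a priori $\lmin(H|_{\calS+\calT})$ could be very negative. This is exactly the reason \cref{def:TIstd} imposes the calibration $\beta\geq 16(2L+7N)\alpha$ with $m=2L+7N$: together with the structure of the activated sub-Hamiltonians arising from valid switch strings it forces $\lmin(H|_{\calS+\calT})\geq -\beta/(8m^2)=-\bbb$, whence $\bra{\psi_i}H\ket{\psi_i}\geq -\bbb + \beta/(4m^2) = \bbb$, contradicting $\bra{\psi_i}H\ket{\psi_i}<\bbb$. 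Therefore $\enorm{\ket{\psi_m}-\ket{\phi}}\geq 1/2 = \ddd$, which is the NO condition of \cref{def:GSCON}. I expect the main work to be $(a)$ verifying the classification of valid switch strings from \cref{fig:neighbour}, and $(b)$ extracting the bound $\lmin(H|_{\calS+\calT})\geq -\bbb$ from the calibration — the part that genuinely uses more than ``$H'\succeq 0$'', and the reason the lifting framework is stated for TI-standard (rather than arbitrary local) mappings; robustness against all $\p\leq N-1$ comes for free from the $\p$-orthogonality argument above.
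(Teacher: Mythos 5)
Your overall architecture---the Traversal Lemma applied to a pair of $\p$-orthogonal switch subspaces, the classification of valid switch strings from \cref{fig:neighbour}, block-diagonality of $H$ in the switch basis, and a large $\Delta$ to dispose of invalid strings---is a genuinely different route from the paper's. The paper applies the Traversal Lemma to subspaces $S_{012},S_{456}$ living only on the \emph{last $\p+1$} switch qudits, extracts the $\Null{H_E}$ component of $\ket{\psi_i}$ via a Projection-Lemma-style perturbation (\cref{Lemma:psi_i_Form}), and only then uses the string classification to force full activation of a component of that state; you instead use global subspaces $\calS,\calT$ and an exact invariant-subspace decomposition. Your $\p$-orthogonality argument (valid for all $\p\le N-1$) and your classification of the valid strings agree with the paper's \cref{eqn:4} and are correct.

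The gap is precisely the step you defer as ``the main work'': the claim $\lmin(H|_{\calS+\calT})\ge-\bbb$ is false, and the calibration $\beta\ge 16(2L+7N)\alpha$ cannot deliver it. That calibration relates the YES- and NO-thresholds of the \emph{full} sum $H'=\sum_i H'_{i,i+1}$; it says nothing about partial sums $\sum_{i\in A}H'_{i,i+1}$ for $A\subsetneq[N-1]$, which is exactly what $H$ computes on $\calS+\calT$ (e.g.\ the switch string $10^{N-1}$ is valid, lies in $\calS$, and activates only $H'_{1,2}$). TI-standard explicitly permits $H'_{i,i+1}\not\succeq0$, and in the GI instantiation of \cref{sscn:GI} the shifted term has $\lmin(H'_{1,2})\approx-(1-2/N)$, a constant below zero, whereas $-\bbb=-\beta/(8m^2)=-\Theta(1/(N^2m^2))$. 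So $\bra{a}H\ket{a}$ can be far more negative than $-\bbb$ and your final contradiction does not follow; no choice of $\Delta$ helps, since these strings carry no $E$-penalty. The paper avoids ever needing a spectral lower bound over all of $\calS+\calT$: it lower-bounds the energy of the one relevant state $\ket{\psi_i}$ by splitting its valid component into a fully-activated piece $\ket{\chi_1}$ (weight $\gtrsim 1/(4m^2)$, energy $\ge\beta$) and a residual $\ket{\chi_2}$ whose cross terms with $\ket{\chi_1}$ vanish on the last window and whose own contribution is argued non-negative. (The partial-activation subtlety you correctly identify does resurface there, but only pointwise for that residual, not as a uniform bound over an entire subspace.) To repair your argument you would have to eliminate the dependence on $\lmin(H|_{\calS+\calT})$ in the same spirit, rather than try to extract it from the $\beta$-versus-$\alpha$ calibration, whose actual role is only to make $\bbb-\aaa$ a valid promise gap.
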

\begin{proof}
Assume, for sake of contradiction, that $\enorm{\ket{\psi_m}-\ket{\phi}}<1/2$, and that $\bra{\psi_i}H\ket{\psi_i}< \bbb$ for all $i\in[m]$. Define $\p$-orthogonal subspaces
\begin{eqnarray*}
    S_{012} &=& I_{\sA,\sB_{1,\ldots,N-\p-1}}\otimes\Span\left(\set{\ket{s}_{\sB_{N-\p,N}}\mid s\in\set{0,1,2}^{\p+1}}\right)\\
    S_{456} &=& I_{\sA,\sB_{1,\ldots,N-\p-1}}\otimes\Span\left(\set{\ket{s}_{\sB_{N-\p,N}}\mid s\in\set{4,5,6}^{\p+1}}\right),
\end{eqnarray*}
and recall that we set
\begin{equation}\label{eqn:temp}
    H_{i,i+1}\coloneqq H'_{i,i+1}\otimes(\ketbra{1}{1}+\ketbra{2}{2}+\ketbra{3}{3}+\ketbra{4}{4}+\ketbra{5}{5})_{\sB_i}+\Delta E_{\sB_i,\sB_{i+1}}
\end{equation}
for $E$ the projector forbidding the $2$-local substrings depicted in Figure~\ref{fig:neighbour}. We first show that, for sufficiently large $\Delta$, $\ket{\psi_i}$ has almost all its amplitude on a state in the null space of $H_E\coloneqq \Delta\sum_{i=1}^N E_{\sB_i,\sB_{i+1}}$.
\begin{lemma}\label{Lemma:psi_i_Form}
    Assume $\bra{\psi_i}H\ket{\psi_i}< \bbb$ for all $i\in[m]$. Then, there exists $i\in[m]$ such that $\ket{\psi_i}$ can be written as $\ket{\psi_i}=\gamma_1\ket{\gamma_1}+\gamma_2\ket{\gamma_2}$, with $\braket{\gamma_1}{\gamma_2}=0$, $\ket{\gamma_1}\in\Null{H_E}$, and
\begin{eqnarray}
    \trnorm{\ketbra{\psi_i}{\psi_i}-\ketbra{\gamma_1}{\gamma_1}}&<&2\sqrt{\frac{\eta_2}{\Delta}},\label{eqn:trbound}\\
    \bra{\gamma_1}I-\Pi_{S_{012}}-\Pi_{S_{456}}\ket{\gamma_1}&>& \frac{1}{4m^2}-2\sqrt{\frac{\eta_2}{\Delta}}\label{eqn:LB2}.
\end{eqnarray}
\end{lemma}
\noindent The proof of \cref{Lemma:psi_i_Form} is deferred to the end of this section below. From \cref{Lemma:psi_i_Form}, we draw the following conclusions:
    \begin{enumerate}
        \item Recalling that $S_{012}$ ($S_{456}$) projects onto $\set{0,1,2}^*$ ($\set{4,5,6}^*$) in $\sB_{N-\p,\ldots, N}$, respectively, Equation~(\ref{eqn:LB2}) implies $\ket{\gamma_1}=\chi_1\ket{\chi_1}+\chi_2\ket{\chi_2}$ for orthonormal vectors $\set{\ket{\chi_1},\ket{\chi_2}}$,
    $
        \abs{\chi_1}^2> (4m^{-2})-2\sqrt{\eta_2/\Delta},
    $
    such that $\ket{\chi_1}$ has registers $\sB_{N-\p,\ldots,N}$ supported solely on the intersection of two sets:
    \begin{itemize}
        \item All strings in the null space of $H_E$ (since $\ket{\gamma_1}\in\Null{H_E}$ implies $\ket{\chi_1}\in\Null{H_E}$), and
        \item all strings in the null space of $I-\Pi_{S_{012}}-\Pi_{S_{456}}$ (by Equation~(\ref{eqn:LB2})).
    \end{itemize}
    But the intersection of these two sets has precisely the regular expression
            \begin{equation}\label{eqn:4}
        33^*(2^*\cup 4^*),
            \end{equation}
    where the first $3$ is located in $\mathcal{B}_{N-b}$. This follows since by \cref{fig:neighbour}, a $3$ can only have a $2$, $3$, or $4$ to its right, and once we put down a $2$ to the right (resp. $4$), we can only put down more $2$'s (resp. $4$'s).\\

    \vspace{-2mm}
    Similarly, $\ket{\chi_2}$ is supported in registers $\sB_{N-\p,\ldots,N}$ solely on the span of strings from set $\set{0,1,2}^{b+1}\cup \set{4,5,6}^{\p+1}$ (note \cref{fig:neighbour} disallows a digit from set $\set{0,1,2}$ to be neighbours with a digit from $\set{4,5,6}$). Thus, $\ket{\chi_1}$ and $\ket{\chi_2}$ are orthogonal on the last $\p+1$ switch qudits (since the former must have a $\ket{3}$ on these qudits, but the latter cannot).

        \item Again since $\ket{\chi_1}\in\Null{H_E}$, combining Equation~(\ref{eqn:4}) with Figure~\ref{fig:neighbour} now implies, in fact, that \emph{all} switch qudits ``to the left'' of $\sB_{N-\p}$ are also set to $\ket{3}$, i.e.
    \[
        \ket{\chi_1}=\ket{3\cdots 3}_{\sB_{1,\ldots, N-\p}}\otimes \ket{\chi_1'}_{\sA,\sB_{N-\p+1,\ldots, N}},
    \]
   for some unit vector $\ket{\chi_1'}$. Together with \cref{eqn:4}, this implies the entire register $\sB$ of $\ket{\chi_1}$ is supported only on symbols from $\set{2,3,4}$.

   \item  Since all of $\sB$ is now supported on symbols from $\set{2,3,4}$, it follows from Equation~(\ref{eqn:temp}) that \emph{all} terms of $H'$ are switched on (thus resolving Obstacle 2). Hence,
       \begin{eqnarray}
            \bra{\gamma_1}H\ket{\gamma_1}&=&\bra{\gamma_1}\sum_{i=1}^{N-1} H'_{i,i+1}\otimes(\ketbra{1}{1}+\ketbra{2}{2}+\ketbra{3}{3}+\ketbra{4}{4}+\ketbra{5}{5})_{\sB_i}\ket{\gamma_1}\nonumber\\
            &>&\left(\frac{1}{4m^{2}}-2\sqrt{\frac{\eta_2}{\Delta}}\right)\bra{\chi_1}H\ket{\chi_1}\nonumber\\
            &=&\left(\frac{1}{4m^{2}}-2\sqrt{\frac{\eta_2}{\Delta}}\right)\trace(H'\trace_{\sB}(\ketbra{\chi_1}{\chi_1}))\nonumber\\
            &\geq&\left(\frac{1}{4m^{2}}-2\sqrt{\frac{\eta_2}{\Delta}}\right)\beta\label{eqn:betabound},
       \end{eqnarray}
       where the first statement follows since $\ket{\gamma_1}\in\Null{H_E}$, the second since (1) $H'\succeq 0$ and (2) since
       \[
        \bra{\chi_1}H'_{i,i+1}\otimes(\ketbra{1}{1}+\ketbra{2}{2}+\ketbra{3}{3}+\ketbra{4}{4}+\ketbra{5}{5})_{\sB_i}\ket{\chi_2}=0,
       \]
       since $\ket{\chi_1}$ and $\ket{\chi_2}$ are orthogonal on the last $\p+1$ switch qudits (even when projected down onto $\Span(\ket{1},\ket{2},\ket{3},\ket{4},\ket{5})$), the third statement since all of register $\spa{B}$ is supported on symbols $\set{2,3,4}$, and the last statement since $\lmin(H')\geq \beta$ by assumption.
   \end{enumerate}
We conclude that $\ket{\gamma_1}$ is high energy against $H$. We now show a similar result for $\ket{\psi_i}$, giving the desired contradiction. To do so, we follow the proof of the Projection Lemma of ~\cite{Kempe2006}. For brevity, define
$
    H_1\coloneqq \sum_{i=1}^{N-1}H'_{i,i+1}\otimes(\ketbra{1}{1}+\ketbra{2}{2}+\ketbra{3}{3}+\ketbra{4}{4}+\ketbra{5}{5})_{\sB_i}
$
so that $H=H_1+H_E$. Then, for $\Delta>2\snorm{H'}=2\snorm{H_1}$, recalling that $H_E\ket{\gamma_1}=0$,
	\begin{eqnarray*}				\bra{\psi_i}H\ket{\psi_i}&\geq&\left[(1-\abs{\gamma_2}^2)\bra{\gamma_1}H_1\ket{\gamma_1}+2\textup{Re}(\gamma_1\gamma_2\bra{\gamma_1}H_1\ket{\gamma_2})+\abs{\gamma_2}^2\bra{\gamma_2}H_1\ket{\gamma_2}\right]+\Delta\abs{\gamma_2}^2\\ &\geq&\bra{\gamma_1}H_1\ket{\gamma_1}+(\Delta-2\snorm{H_1})\abs{\gamma_2}^2-2\snorm{H_1}\abs{\gamma_2}\\
&>&\bra{\gamma_1}H\ket{\gamma_1}-2\snorm{H_1}\sqrt{\frac{\eta_2}{\Delta}}\\
&>&\frac{1}{4m^{2}}\beta-2\sqrt{\frac{\eta_2}{\Delta}}\left(\beta+\snorm{H'}\right),
	\end{eqnarray*}
    where the first statement follows since $\abs{\gamma_1}^2+\abs{\gamma_2}^2=1$ and $H_E\ket{\gamma_1}=0$, the second since $\abs{\gamma_1}\leq 1$, the third when $\Delta>2\snorm{H_1}$ and since $\abs{\gamma_2}^2<\eta_2/\Delta$, and the last by Equation~(\ref{eqn:betabound}) and since $\snorm{H_1}=\snorm{H'}$. Crucially, note that $H'$ is independent of $\Delta$ (recall $H'$ is the TI-standard Hamiltonian we have plugged in as a black-box). Thus, we may set $\Delta$ to a sufficiently large fixed polynomial in $N$ so that $ \bra{\psi_i}H\ket{\psi_i}> \frac{1}{2}\left(\frac{\beta}{4m^{2}}\right)=\bbb$. This yields the desired contradiction.
\end{proof}

To finish the proof of \cref{lem:GSCONsound}, we give the proof of \cref{Lemma:psi_i_Form} below.
\begin{proof}[Proof of \cref{Lemma:psi_i_Form}]
    Since $\ket{\psi}\in S_{012}$ and $\ket{\phi}\in S_{456}$, the Traversal Lemma (Lemma~\ref{l:traversal}) says there exists $i\in[m]$ such that
    \begin{equation}\label{eqn:LB}
        \bra{\psi_i}I-\Pi_{S_{012}}-\Pi_{S_{456}}\ket{\psi_i}\geq \frac{1}{4m^2}.
    \end{equation}
	Now, since the $E_{\sB_i,\sB_{i+1}}$ are pairwise commuting projectors, $H_E\succeq \Delta\Pi_E$ for $\Pi_E$ the projector onto the orthogonal complement of $\Null{H_E}$.  Then,
	\[
	\bbb> \bra{\psi_i}H\ket{\psi_i}\geq\Delta\bra{\psi_i}H_E\ket{\psi_i}\geq\Delta\bra{\psi_i}\Pi_E\ket{\psi_i},
	\]
	where the second inequality follows since $H'\succeq 0$ (specifically, in the NO case, $\lmin(H')\geq \beta\geq 0$ by definition of TI-standard), and the last inequality since $H_E\succeq \Pi_E$. We conclude $\ket{\psi_i}$ can be written as $\ket{\psi_i}=\gamma_1\ket{\gamma_1}+\gamma_2\ket{\gamma_2}$, for $\ket{\gamma_1}\in\Null{H_E}$ (note $\ket{\gamma_1}$ lives in the full $N$-qudit space, not just the switch space $\spa{B}$), $\set{\ket{\gamma_1},\ket{\gamma_2}}$ orthonormal vectors, and $\abs{\gamma_1}^2> 1-(\bbb/\Delta)$. Equation~(\ref{eqn:trbound}) of the claim now follows by Equation~(\ref{eqn:enorm}). Combining Equation~(\ref{eqn:trbound}) with Equation~(\ref{eqn:LB}) and the H\"{o}lder inequality yields Equation~(\ref{eqn:LB2}) of the claim.
\end{proof}

\subsection{Proof of \QCMAEXP-completeness}\label{sscn:GI}

With the Lifting Lemma for GSCON (\cref{lem:liftGSCON}) in hand, we are ready to obtain our \QCMAEXP-completeness result by plugging in (with a minor modification) the Gottesman-Irani (GI) 1D TI construction~\cite{Gottesman2009}. For convenience, we reproduce the statement of \cref{thm:TIGSCON} below.

\begin{reptheorem}{thm:TIGSCON}
    \TIGSCON is \QCMAEXP-complete for 1D, nearest neighbour, translationally invariant Hamiltonians
on N qudits, for $m\in\poly(N)$, $\delta \in \Theta(1/\poly(N))$, and any $\p\in\set{2,\ldots, N-1}$.
\end{reptheorem}
\begin{proof}
Containment of \TIGSCON in \QCMAEXP for $\delta\in\Omega(1/\poly(N))$ follows immediately since $\GSCON\in\QCMA$ for any interaction graph~\cite{GS14}. Thus, it remains to show \QCMAEXP-hardness of \TIGSCON, which we aim to obtain by plugging GI into \cref{lem:liftGSCON}.

Unfortunately, in its original form GI is not TI-standard, as it does not satisfy the requirement $\beta\geq 16(2L+7N)\alpha\geq 0$. This is because GI's promise thresholds are as follows: there exist polynomials $p,q:\N\mapsto\N$, such that in the YES (NO) case, the GI Hamiltonian has roughly energy at most $\alpha=p(N)$ (at least $\beta=p(N)+1/q(N)$). Thus, although $\abs{\alpha-\beta}\geq 1/\poly(N)$, the large additive offset $p(N)$ violates the bound $\beta\geq 16(2L+7N)\alpha\geq 0$. In our setting, this is a problem, since the Traversal Lemma now only allows us to conclude energy at most $p(N)$ in the YES case, and at least $(p(N)+1/q(N))/(m^2)$ in the NO case. This is not sufficient to obtain a non-trivial promise gap.

This is not difficult to address, but requires us to explain why the large $p(N)$ offset arises in GI. Roughly, to ensure computations in its ground space are syntactically well-formed, GI wishes to force all history states to start and end with bracket symbols, i.e.\ $\leftend$ and $\rightend$, respectively. However, a TI Hamiltonian cannot distinguish between the start of the chain versus (say) the middle. Thus, to force the desired bracket behavior, on {every qudit} in the chain GI places constraint $I-\ketbra{\leftend}{\leftend} - \ketbra{\rightend}{\rightend}$. (There is also a second penalty term which does not play a role here; see \cref{app:altfix2}.) As a history state of GI has precisely one occurrence of each of $\leftend$ and $\rightend$, it is penalized by $N-2$; this is the offset $p(N)$ above.

The natural solution is to replace $H_{i,i+1}$ from GI with
\begin{equation}~\label{eqn:shift}
    H_{i,i+1}-\left(1-\frac{2}{N}\right)I.
\end{equation}
This additively shifts the spectrum of $\sum_{i=1}^{N-1} H_{i,i+1}$, so that the resulting Hamiltonian has ground state energy between $0$ and $\epsilon/N^2$ in the YES case, and at least  $(1-\epsilon)/N^2$ in the NO case; see Sections 5.8.1 and 5.9 of~\cite{Gottesman2009}. (Here, $\epsilon$ is the probability that the verifier outputs the wrong answer on a given proof.) By using standard error reduction on the \QCMAEXP verifier we plug into GI, we can make $\epsilon$ a sufficiently small fixed inverse polynomial in $N$, so that the $\beta\geq 16(2L+7N)\alpha\geq 0$ condition is satisfied.

Before concluding, we have one last check to make---due to the additive shift of Equation~(\ref{eqn:shift}), even though $H\succeq 0$, we may now have $H_{i,i+1}\not\succeq 0$ (whereas before the shift, \cite{Gottesman2009} guaranteed $H_{i,i+1}\succeq 0$ for all $i$). Thus, we must ensure that all ``negative-energy-contributing'' terms indexed by $F$ (\cref{def:TIstd}) can be identified in time $\poly(N)$. This is indeed the case, since the history state $\lowstate$ of ~\cite{Gottesman2009} satisfies:
\begin{eqnarray}
      \lowstatebra H_{1,2}\lowstate &=&-1+\frac{2}{N}\label{eqn:energies1}\\
      \lowstatebra H_{i,i+1}\lowstate&=&\frac{2}{N}\qquad\qquad\text{ for }i\in\set{2,\ldots, N-2}\label{eqn:energies2}\\
      \lowstatebra H_{N-1,N}\lowstate&\leq&\left(-1+\frac{2}{N}\right)+\frac{\epsilon}{N^2}.\label{eqn:energies3}
\end{eqnarray}
Thus, we may set $F=\set{1,N-1}$.
\end{proof}

\section{Outlook and Discussion}\label{sec:discussion}

In this paper, we have developed generic frameworks for lifting circuit-to-Hamiltonian mappings to hardness results for APX-SIM and GSCON.

For APX-SIM, this not only simplifies the existing \PQMAlog-hardness proofs of APX-SIM \cite{Ambainis2013,GY16_2,GPY20}, but also allows us to obtain new hardness results, most notably for 1D translationally invariant systems: \PQMAexp-completeness and \PSPACE-completeness for inverse polynomial and inverse exponential precision (with respect to the length of the chain), respectively. The genericness of the lifting approach allows us to apply comparatively low-dimensional constructions of translationally invariant constructions, such as \cite{Bausch2016}, and thus prove hardness of these simpler Hamiltonians. We envision these lifting results being useful for further work on the topic in a similar manner, i.e.\ mapping hardness of the Local Hamiltonian problem to hardness of APX-SIM. Furthermore, we hope that the techniques behind our lifting approach, such as ``compressing'' multiple queries into a single qubit (Lemmas \ref{lem:lowhelp} and \ref{lem:augmented-circuit}), may find uses in further work in dealing with quantum oracle classes. Notably in this regard, \cref{lem:lowhelp} provides a tool to convert local deviations in proof optimality to rigorous analytical lower bounds on energy penalties.

For GSCON, we have given a similar generic lifting framework, albeit not as broad as that for APX-SIM (i.e.\ for GSCON, we restricted to general 1D translationally-invariant circuit-to-Hamiltonian mappings). Here, we applied our lifting results to show \QCMAEXP-completeness of GSCON on translationally invariant 1D Hamiltonians. Again, it would be interesting to know whether the techniques here, such as the ``space-time''-like local clock construction and ``logically protected'' target switch subspace design, find uses elsewhere. Taken together, these ideas allowed us to achieve a remarkable level of robustness for our construction, which remains sound even if an adversary acts on all but one qudit per time step.

\paragraph{Further Work.} A natural open question is to generalise our APX-SIM results to yet more physical Hamiltonians. This could be done using Hamiltonian simulation techniques, as utilised in \cite{GPY20}, however, current Hamiltonian simulation techniques for translationally invariant systems unfortunately do not (as of time of writing) yield more natural classes of Hamiltonians \cite{Kohler2020}.

With regards to TI-GSCON, an immediate question is whether one can extend our lifting results to the more general case of ``local circuit-to-Hamiltonian constructions'' (\cref{def:local-mapping}), as was done for APX-SIM. Currently, our TI-GSCON lifting framework restricts to 1D TI circuit-to-Hamiltonian constructions. An additional open question is if there exist non-trivial classes of Hamiltonians for which GSCON is easy. Given the limitations on Hamiltonian structure that the translational invariance places on the Hamiltonian, it may be easier to find such a class for translationally invariant Hamiltonians.

A higher-level goal is to ``unify'' the APX-SIM and GSCON complexities with the hardness of the Local Hamiltonian problem, such that the completeness of the latter for a family of Hamiltonians (relative to, e.g., QMA, QCMA etc) immediately implies completeness of APX-SIM, GSCON, and potentially other low energy properties for the appropriate related complexity classes.
Furthermore, the relation between the complexity of determining low energy properties and universality properties for classes Hamiltonians should be investigated.

Finally, a minor point to be addressed is how to make the lifting construction generalise to complexity classes for which we are unable to run error reduction: e.g.\ StoqMA (see \cref{sec:hamiltonian-embedding} for a discussion). The complexity of APX-SIM for Hamiltonian for which LH is StoqMA-complete is known to be $\PP^{\class{StoqMA}[\log]}$-complete, but we cannot simply include this at present as a consequence of our lifting construction.

\section*{Acknowledgements}
SG thanks Dorian Rudolph for helpful discussions, and acknowledges funding from DFG grant 432788384.
J.\,D.\,W.\ is supported by the EPSRC Centre for Doctoral Training in Delivering Quantum Technologies [EP/L015242/1].
J.\,B.\ is grateful for support from the Draper's Research Fellowship at Pembroke College.

\printbibliography

\appendix

\section{Alternate fix: Resolving partial activation via the ``self-organizing'' property of GI}\label{app:altfix2}

While we use a more general, black-box method in \cref{sscn:o3} to resolve the partial activation problem in our GSCON construction, one could also take a white-box approach by exploiting a remarkable property of the Gottesman-Irani (GI) construction~\cite{Gottesman2009}---the fact that GI is (what we call) ``self-organizing''. Since this latter property is interesting in its own right, we flesh out this white-box approach here for the interested reader.

  As discussed in \cref{sscn:GI}, in GI all strings in the ground space must be bracketed, i.e.\ start with $\ket{\leftend}$ and end with $\ket{\rightend}$. To enforce this in a TI fashion, GI combines two penalty terms (placed on each pair of neigboring qudits): (1) $I-\ketbra{\leftend}{\leftend} - \ketbra{\rightend}{\rightend}$ (ensures at least one copy of $\ket{\leftend}$ and $\ket{\rightend}$ exist), and (2) $\ketbra{s}{s}\otimes\ketbra{\leftend}{\leftend} + \ketbra{\rightend}{\rightend}\otimes \ketbra{s}{s}$ for all symbols $s$ (ensures $\ket{\leftend}$ and $\ket{\rightend}$ only appear at the left and right ends of the chain, respectively). Thus, the ends of the chain are not ``explicitly'' set; rather, one can prematurely cut the chain anywhere desired, and the ground space will ``self-organize'' to be ``properly-formed'' even for the truncated chain (c.f. Section 5.8.2 of~\cite{Gottesman2009}, where a similar effect is exploited to handle the 1D case with \emph{closed} boundary conditions). This means that although the cheating strategy for Obstacle 2 only switches on the first $N-\p$ terms $H'_{i,i+1}$, the resulting effective Hamiltonian $\sum_{i=1}^{N-\p}H'_{i,i+1}$ still has a ``properly-formed'' ground space encoding \emph{some} quantum computation.

To understand \emph{which} computation a truncated chain of length $N-\p$ encodes, use the fact that the ground space of GI encodes two Turing machines (TMs): a classical ``counter'' TM $M_1$, and a quantum TM (QTM) $M_2$ running the \QMAEXP verification. Suppose one wishes to consider input $x\in\set{0,1}^*$ to a \QMAEXP promise problem. GI first ``runs'' $M_1$, which computes a function $f:\N\mapsto\set{0,1}^*$ with the following properties:
 \begin{enumerate}
    \item After running $M_1$ for $T\in\N$ steps (for all $T\geq T_0$ for some fixed $T_0\in\N$), $M_1$ has precisely string $f(T)$ on its work tape (i.e.\ with the rest of the tape blank).
    \item For any $x\in\set{0,1}^*$, one can compute in time polynomial in the length of $x$ a preimage $T\in\N$ such that $f(T)=x$.
 \end{enumerate}
 Once $M_1$ halts, $M_2$ starts by reading input $x$ from $M_1$'s tape, and then runs the \QMAEXP verification.

 In~\cite{Gottesman2009}, a chain of length $N\in\N$ results in simulating $M_1$ and $M_2$ for $T=N-2$ steps each\footnote{The $-2$ in $T=N-2$ arises because the first and last qudits in a ground state are occupied by the opening and closing brackets, $\ket{\leftend}$ and $\ket{\rightend}$.}. This suffices (assuming without loss of generality $T\geq T_0$) for $M_1$ to correctly compute input $x$ to $M_2$, and for $M_2$ to subsequently complete its verification and halt. So, consider any desired \QMAEXP input $x\in\set{0,1}^*$ with preimage $T$ satisfying $f(T)=x$. Obstacle 2 can also be resolved via two changes to our setup:
 \begin{enumerate}
    \item Set the length of the chain in our construction of $H$ to $N=(T+2)+\p$.\\

    \vspace{-4mm}
        \emph{Intuition:}  This ensures that if a cheating adversary switches on the first $t\geq N-\p$ Hamiltonian terms, the effective chain length is {at least} $N-\p=T+2$. If, in addition, $t=N-\p$, this suffices to overcome Obstacle 2, since then $M_1$ is run for precisely the right number of steps to produce input $x$, and $M_2$ has sufficient time to verify $x$ and halt.

    \item ``Pad'' the action of TM $M_1$, so that after each of its steps (including its last one), we insert $\p$ ``idling'' steps, meaning steps in which $M_1$ does not change its tape contents. (Thus, if $M_1$ used to take $T$ steps to compute a string $x$, with the ``padding'' it will take $T\p+T$ steps to compute $x$, which is a constant factor increase if $\p\in \BigO(1)$.) Call the new TM $M_1'$ with corresponding function $f'$ ($f'$ is to $M_1'$ as $f$ is to $M_1$). Redefine $N\coloneqq T(\p+1)+2$.\\

    \vspace{-4mm}
        \emph{Intuition:} Continuing with the intuition from the previous bullet point, if $t>N-\p$, we are not guaranteed that $f(t)=x$ for our desired input $x$, since $M_1$ is running for \emph{too many} steps. Ideally, we would like to ``idle'' $M_1$ once $t=N-\p$, so that the correct output $x$ on $M_1$'s tape is no longer changed. However, recall $N$ is not explicitly given to $M_1$ as input, but rather encoded implicitly as the length of the chain the Hamiltonian acts on. And the TI Hamiltonian cannot detect when it is ``within'' $\p$ sites of the right end of the chain so that it can ``signal $M_1$ to begin idling''. Thus, we instead do the padding uniformly, i.e.\ after \emph{each} step of $M_1$. The resulting machine $M_1'$ thus satisfies the property $f(t)=x$ \emph{for all} $t\in\set{T(\p+1)-\p,\ldots, T(\p+1)}$, as desired.
 \end{enumerate}

\end{document}